\definecolor{myblue}{rgb}{0.2,0,0.9}
\definecolor{blue-violet}{rgb}{0.54, 0.17, 0.89}
\pgfplotsset{compat=1.12}
\definecolor{myblue}{rgb}{0.2,0,0.9}
\definecolor{blue_violet}{rgb}{0.54, 0.17, 0.89}
\definecolor{darkgreen}{rgb}{0,0.35,0}
\newcommand{\EZ}{\text{EZ}}
\newcommand{\cR}{{\cal R}}
\DeclareRobustCommand*\cal{\@fontswitch\relax\mathcal}
\newtheorem{thm}{Theorem}[section]
\newtheorem{pro}[thm]{Proposition}
\newtheorem{cor}[thm]{Corollary}
\newtheorem{lem}[thm]{Lemma}
\numberwithin{equation}{section}
\theoremstyle{definition}
\newtheorem{rem}[thm]{Remark}
\newtheorem{dfn}[thm]{Definition}
\newtheorem{as}[thm]{Assumption}
\DeclareMathOperator*{\esssup}{ess\,sup}
\RenewDocumentCommand{\title}{om}{%
	\IfNoValueTF{#1}
	{\gdef\shorttitle{}}
	{\gdef\shorttitle{#1}}%
	\gdef\@title{#2}%
}
\DeclareMathOperator*{\essinf}{ess\,inf}
\newcommand{\Romannum}[1]{\uppercase\expandafter{\romannumeral #1}}
\newcommand{\labitem}[2]{%
	\def\@itemlabel{{#1}}
	\item
	\def\@currentlabel{#1}\label{#2}}
\def\@tocline#1#2#3#4#5#6#7{\relax
	\ifnum #1>\c@tocdepth 
	\else
	\par \addpenalty\@secpenalty\addvspace{#2}%
	\begingroup \hyphenpenalty\@M
	\@ifempty{#4}{%
		\@tempdima\csname r@tocindent\number#1\endcsname\relax
	}{%
		\@tempdima#4\relax
	}%
	\parindent\z@ \leftskip#3\relax \advance\leftskip\@tempdima\relax
	\rightskip\@pnumwidth plus4em \parfillskip-\@pnumwidth
	#5\leavevmode\hskip-\@tempdima
	\ifcase #1
	\or\or \hskip 2em \or \hskip 2em \else \hskip 3em \fi%
	#6\nobreak\relax
	\hfill\hbox to\@pnumwidth{\@tocpagenum{#7}}\par
	\nobreak
	\endgroup
	\fi}
\title{Robust dividend policy: Equivalence of \\ Epstein-Zin and Maenhout preferences}
\author{Kexin Chen}
\address{Department of Applied Mathematics, The Hong Kong Polytechnic University, Hong Kong}
\email{kexinchen@polyu.edu.hk}
\author{Kyunghyun Park}
\address{Division of Mathematical Sciences, Nanyang Technological University, Singapore}
\email{kyunghyun.park@ntu.edu.sg}
\author{Hoi Ying Wong}
\address{Department of Statistics and Data Science, The Chinese University of Hong Kong, Shatin, N.T., Hong Kong}
\email{hywong@cuhk.edu.hk}
\thanks{\textit{Key words:} Dividend policy, Epstein-Zin preference, Maenhout’s ambiguity aversion, Backward stochastic differential equations, Robust singular control.
}
\thanks{\textit{JEL Classification:} C61, G35, G33}
\thanks{\textit{Funding:} K.~Chen is supported by the HKRGC grants (project numbers: 15305422, 15224923) and by the Research Centre for Quantitative Finance at The Hong Kong Polytechnic University (P0042708).
    K.~Park acknowledges financial support by the Presidential Postdoctoral Fellowship of Nanyang Technological University and the National Research Foundation of Korea (grant DOI: RS-2025-02633175). H.Y.~Wong acknowledges support by HKRGC grants (project number: 14305823).}
\date{\today.}
\begin{document}

\begin{abstract}
	In a continuous-time economy, this paper formulates the Epstein-Zin preference for discounted dividends received by an investor as an Epstein-Zin singular control utility.  We introduce a backward stochastic differential equation with an aggregator integrated with respect to a singular control, prove its well-posedness, and show that it coincides with the Epstein-Zin singular control utility. We then establish that this formulation is equivalent to a robust dividend policy chosen by the firm’s executive under the Maenhout’s ambiguity-averse preference.  In particular, the robust dividend policy takes the form of a threshold strategy on the firm’s surplus process, where the threshold level is characterized as the free boundary of a Hamilton–Jacobi–Bellman variational inequality. Therefore, dividend-caring investors can choose firms that match their preferences by examining stock's dividend policies and financial statements, whereas executives can make use of dividend to signal their confidence, in the form of ambiguity aversion, on realizing the earnings implied by their financial statements.
	%
\end{abstract}

\maketitle


\section{Introduction}
        Dividend (or cash payout) policy is an important topic in both corporate finance and asset pricing. The Miller-Modigliani~\cite{Miller1961} dividend irrelevance theory suggests that a firm's dividend policy does not affect its value and stock prices in a perfect market. However, the underlying incentives for stockholders to receive dividends and for the firm's executives to pay dividends are largely unknown in reality. For instance, a stockholder can liquidate some of his stocks for consumption instead of receiving cash dividend, which is subject to a higher tax rate in the USA. Hence,  Black \cite{Black1976} proposes the dividend puzzle which has generated a great of attentions in the literature. 

        A school of thought is the dividend signaling theory (DST) (see Bhattacharya~\cite{Bhattacharya1979}, John and Williams~\cite{John1985}, Miller and Rock~\cite{Miller1985}) that a firm's executive passes signals about the firm's information to the public through dividends. In particular, Black \cite{Black1986} wrote,
    \begingroup
    \addtolength\leftmargini{-1.5em}
    \begin{quote}
    {\it 
        ``The idea that dividends convey information beyond that conveyed by the firm’s financial
    statements and public announcements stretches the imagination.... I~think we must assume that investors care about dividends directly. We must put dividends
    into the utility functions.”}
    \end{quote}
    \endgroup

Although most empirical studies support DST, the
survey in Brav et al.~\cite{Brav2005} shows that executives viewed the mechanism behind DST as broadly misguided. However, dividend changes do contain information about future earnings according to Ham et al.~\cite{Ham2020}. Additionally, Michaely et al.~\cite{Michaely2021} shows that payout announcements signal changes in cash flow volatility (in the opposite direction) but not stock volatility, which the authors refer to as ``signaling safety''. A firm's cashflow is related to its earnings so that the cashflow volatility indicates the earnings' volatility. By modeling the firm's uncontrolled surplus process as diffusion models with a stochastic drift, there is a mathematical toolkit as in Reppen et al.~\cite{Reppen2020} to compute the optimal dividend policy under a random profitability. 

In this paper, we consider the suggestions by  Black \cite{Black1986}. We assume that the firm's financial reports contain information on the firm's expected earnings and risk. Thus, we model the firm's uncontrolled surplus by an diffusion process and put dividends into the utility functions for investors. Investors who care directly about dividends have a value function that considers the short-term cash inflow from discounted cash dividends and the certainty equivalence of long-term stock holdings subject to bankruptcy risk. This consideration requires us to separate risk aversion from elasticity of intertemporal substitutions (EIS) in the decision making process. 

The preference introduced by Epstein and Zin (EZ)~\cite{EpsteinZin1989} is a natural candidate designed for this purpose. 
In our formulation, we increase the degree of risk aversion of intertemporal preference ordering via a parameter~$R\in(0,1)$, without affecting ``certainty preferences'' of a dividend payment policy. We also follow the continuous-time dividend models where the discounted dividend stream is paid in a singular control manner (see Cadenillas et al.~\cite{Cadenillas2007MF}, Jeanblanc-Picqu{\'e} and Shiryaev~\cite{jeanblanc1995optimization}, Reppen et al.~\cite{Reppen2020}, Shreve et al.~\cite{shreve1984optimal}). This formulation turns out to be a singular control problem with recursive utilities for which we have to establish 
that there exists a unique value function for the EZ investor within the recursion in singular control setting.

Alternatively, the firm's executive aims to maximize the expected discounted dividend stream subject to the bankruptcy risk. Although the financial reports indicate the firm's expected earnings and risk, the executive encounters {uncertainty} in the predicted earnings. We view this uncertainty as {\it ambiguity} so that the executive's value function is postulated as the Maenhout's \cite{maenhout2004robust} robust preference. Specifically, the executive 
aims to make a robust dividend decision 
subject to bankruptcy risk and Maenhout's regularity.

\color{black}
Our first aim is to establish the equivalence between the EZ singular control utility and Maenhout's counterpart 
under the setting that the risk-aversion parameter in the EZ utility coincides with the ambiguity-aversion parameter in Maenhout’s preference (see Theorem~\ref{thm:rob_rec}).  While the connection between EZ and Maenhout preferences is initially addressed in \cite{maenhout2004robust} and formally established in Skiadas~\cite{skiadas2003robust} for consumption models, this paper is, to the best of our knowledge, the first to develop this equivalence within a {\it singular control setting}, even accounting for {\it the bankruptcy time}.

To establish the equivalence, a substantial portion of our efforts revolves around showing that the EZ singular control utility is well-defined. In Proposition \ref{pro:recurs_BSDE}, we prove the existence and uniqueness of the EZ singular control utility via backward stochastic differential equations (BSDEs). The analysis is 
nontrivial 
due to the interaction of a non-decreasing control with a non-Lipschitz aggregator, together with the inclusion of the bankruptcy time. 
In contrast to the EZ consumption literature (see, e.g., 
Aurand and Huang~\cite{Aurand2023}, Xing~\cite{xing2017consumption}), our framework explicitly incorporates the risk of ruin rather than solely considering on cash flow utility, 
which results in a nonconvex set of dividend strategies. 
This nonconvexity implies the lack of concavity in our EZ singular control utility. 
Moreover, we focus on the case $R\in (0,1)$, where standard a priori bounds for BSDEs suitable for the case 
$R\in(1,\infty)$ cannot be directly used (see, e.g., \cite{Aurand2023,xing2017consumption}, Popier~\cite{popier2007backward}).

Our next aim is to develop a shooting method in Proposition \ref{pro:shooting} to show that under a Markovian setting and certain regularities on the reference parameters of the surplus process, the robust dividend policy is a threshold strategy (see Theorem \ref{thm:verification}). When the firm's surplus hits the upside threshold, the dividend is paid. 
When there is no bankruptcy, the shooting method has been used to a Hamilton-Jacobi-Bellman (HJB) variational inequality (VI) (see Cohen et al.~\cite{cohen2022optimal}). Our Maenhout's singular control problem with bankruptcy can be transformed into a HJB-VI. 
However, the bankruptcy time complicates the analysis because it is involved within the dividend decision. In particular, we have to prove that there is a threshold on the surplus process which shoots two targets on the bankruptcy boundary condition and HJB-VI, simultaneously. 
\color{black}

Our theoretical results lead to the following interpretations for the dividend signaling theory: 
 While investors are informed with the firm's expected earnings from the firm's financial statements, executives set dividend threshold to showcase their confidence on realizing the expected earnings.  The level of confidence is reflected by the ambiguity aversion parameter in our model. 
Therefore, our model predicts that dividend policy conveys information about earnings uncertainty or, equivalently, the executive's confidence in realizing the expected earnings. We call this notion ``signaling confidence''.

We now proceed to a discussion of some related literature. In the context of regular stochastic control, most of the fundamental and technically demanding questions in recursive utility maximization problems are understood fairly well by now (see, e.g., Duffie and Epstein~\cite{DuffieEpstein1992}, Herdegen et al.~\cite{herdegen2021proper,herdegen2023infinite,herdegen2023infinite2}, Kraft et al.~\cite{kraft2017optimal}, Matoussi and Xing~\cite{matoussi2018convex}, Melnyk et al.~\cite{melnyk2020lifetime}, Monoyios and Mostovyi~\cite{monoyios2024stability}, Schroder and Skiadas~\cite{Schroder1999}, Xing~\cite{xing2017consumption}). While the dominant mathematical framework therein is the BSDE approach, it does not involve a non-decreasing process in the aggregator. In particular, our choice of singular control and random terminal time makes certain BSDEs arguments intricate.

On the other hand, considerable efforts have been made to characterize classical 
singular optimal control. 
Specifically, we refer to e.g. De Angelis et al.~\cite{de2017optimal}, Ferrari~\cite{ferrari2015integral}, Fleming and Soner~\cite{fleming2006controlled}, Shreve and Soner~\cite{shreve1991free}, Soner et al.~\cite{soner1991free} for corresponding free boundary problems, and to e.g. Bank~\cite{bank2005optimal}, Bank and El Karoui~\cite{bank2004stochastic}, Bank and Kauppila~\cite{bank2017convex},  Bank and Riedel~\cite{bank2001optimal} for stochastic representation theorem approach. Recently, there has been an intensive interest in robust analogue of the above references (e.g., Chakraborty et al.~\cite{chakraborty2023optimal}, Cohen et al.~\cite{cohen2022optimal}, Ferrari et al.~\cite{ferrari2022optimal,ferrari2022knightian}, Park et al.~\cite{park2023irreversible}). We also refer to Bayraktar and Huang~\cite{bayraktar2013multidimensional}, Bayraktar and Yao~\cite{bayraktar2011optimal_a,bayraktar2011optimal_b}, Nutz and Zhang~\cite{nutz2015optimal}, Park and Wong~\cite{PW23}, Park et al.~\cite{park2023robust}, Riedel~\cite{riedel2009optimal} 
for relations to optimal stopping time under ambiguity. We point out that the Maenhout preference, which is a form of relative entropy deriving ambiguity-aversion (see, e.g., Anderson et al.~\cite{anderson2003quartet}, Ben-Tal~\cite{ben1985entropic}, Csisz{\'a}r~\cite{csiszar1975divergence}, Laeven and Stadje~\cite{laeven2014robust}), has been utilized in robust optimization problems in finance and economics (see, e.g., Branger and Larsen~\cite{branger2013robust}, Jin et al.~\cite{jin2018dynamic}, Maenhout~\cite{maenhout2006robust}, Yi et al.~\cite{yi2013robust}). Finally, 
we refer to the {forward utility} preference (see, e.g., El Karoui and Mrad~\cite{ElKaroui2013SIFIN}, Liang and Zariphopoulou~\cite{Liang2017SIFIN}, Musiela and Zariphopoulou~\cite{Musiela2009QF,Musiela2010SIFIN,Musiela2011IJTAF}) as an alternative 
framework for dynamic utility preferences. 

    \color{black}
    
    The paper is organized as follows. Section \ref{sec:SDU}
    introduces the EZ singular control utility, 
    presents its BSDE characterization, and establishes well-posedness. Section~\ref{sec:3} formulates Maenhout’s robust dividend-value utility and proves its well posedness. Section~\ref{sec:main} establishes the equivalence between the EZ singular control utility and Maenhout’s robust dividend-value utility, and characterizes the robust dividend policy using HJB-VI. Section~\ref{sec:5} studies the sensitivity of Maenhout’s robust dividend-value utility with respect to the ambiguity-aversion parameter. All proofs are presented in Section \ref{sec:proof}.
    \color{black}

\subsection{Notation and preliminaries}\label{sec:notation}
    Consider a probability space $(\Omega,\mathcal{F},\mathbb{P})$ that supports a one-dimensional Brownian motion $W:=(W_t)_{t\ge0}$. 
	Let $\mathbb{F}:=(\mathcal{F}_{t})_{t\ge 0}$ be the augmented filtration generated by $W$. We assume that ${\cal F}_0$ is trivial. We write $\mathbb{E}[\cdot]$ for the expectation under $\mathbb{P}$ and $\mathbb{E}_t[\cdot]:=\mathbb{E}[\cdot|{\cal F}_t]$ for its conditional expectation given ${\cal F}_t$.
    
    $\mathcal{P}$ is the set of all real-valued $\mathbb{F}$-progressively measurable processes, and ${\cal P}_+\subset\mathcal{P}$ is the subset of processes that take values in $\mathbb{R}_+:=[0,+\infty)$. Moreover, we denote for every $p\ge1$ and $t\geq0$ by $L^p(\mathcal{F}_t)$ the set of all $\mathcal{F}_t$-measurable random variable with norm $\|X\|_{L^p}^p:=\mathbb{E}[|X|^p]<\infty$. 
    We introduce some notations for defining stochastic differential utility under singular dividend flows.
    \begin{enumerate}
        \item [$\bullet$] ${\cal T}$ is the set of all $\mathbb{P}$-almost\;surely ($\mathbb{P}$-a.s.)~finite, $\mathbb{F}$-stopping times $\tau$.
        \item [$\bullet$] $\mathcal{A}$ is the set of all $\mathbb{F}$-progressively measurable, 
        $\mathbb{P}$-a.s.\;continuous, and non-decreasing processes $D:=(D_t)_{t\geq 0}$ with $D_{0-}=0$. Here we write $D_{0-} = 0$ for indicating that $D_0>0$ can only be achieved by a jump of the process at time zero, followed by a continuous path for all $t\ge0$.
    \end{enumerate}

    We define the stochastic integral with respect to (w.r.t.) $D\in {\cal A}$ as follows: for any $\mathbb{F}$-progressively measurable and locally bounded process $(g_t)_{t\ge 0}$, the stochastic integral, 
    \(
        \int_{0}^{\tau} g_{t} dD_t:= g_0D_0+\int_{(0,\tau]}g_{t}dD_t,
    \)
    is well-defined in the Stieltjes sense (see Protter~\cite[Theorem~IV.15]{protter2003}). Here, the jump of $D$ at time zero is accounted, so that $\int_0^{t}dD_s = D_t$ for $t\ge0$.


    Finally, we introduce a set of processes that serve as Girsanov kernels, which will be used to define Maenhout’s ambiguity aversion in Section \ref{sec:3}.
    \begin{itemize}
        \item [$\bullet$] $\Theta $ is the set of all processes $\theta:=(\theta_t)_{t\geq 0}\in {\cal P}$ under which a stochastic exponential $\eta^\theta:=(\eta^{\theta}_t)_{t\ge0}$ defined as
		\begin{align}\label{eq:eta}
			\eta_t^{\theta} = \exp\bigg(\int_{0}^t \theta_s dW_s - \frac{1}{2}\int_0^t \theta_s^2 ds\bigg), \quad  t\ge0,
		\end{align}
		is a martingale. We call $\Theta$ the Girsanov kernels set. 
    \end{itemize}
    
 	For each $\theta \in \Theta$, we define a probability measure $\mathbb{Q}^{\theta}$ on $(\Omega,\mathcal{F},\mathbb{F})$ by 
 	\begin{align}\label{eq:measureQ}
 		\frac{d\mathbb{Q}^{\theta}}{d\mathbb{P}}\bigg\vert_{\mathcal{F}_T} = \eta^{\theta}_T\quad \mbox{for each $T>0$.}
 	\end{align}
 	Then $W_t^{\theta}: = W_t - \int_0^t \theta_s ds$, $t\in[0,T]$,	
 	is a $\mathbb{Q}^{\theta}$-Brownian motion. We write $\mathbb{E}^\theta[\cdot]$ for the expectation under $\mathbb{Q}^\theta$ and $\mathbb{E}^\theta_t[\cdot]$ for its conditional expectation given~$\mathcal{F}_t$. 

\section{Main results}\label{sec:2}
    \subsection{Epstein-Zin utility under singular dividend flows}\label{sec:SDU}
    Following the suggestion of Black~\cite{Black1976}, we incorporate dividends into the investor’s objective function through recursive utilities.
    
    We start by defining the surplus process and its controlled version under singular dividend flows. We let $\mathcal{X}:= (\ell,+\infty)$, $-\infty \le \ell < 0$, denote the state space. For each $x\in \mathcal{X}$, we denote by $X^x:=(X^x_t)_{t\ge 0}$ the surplus process governed by the $\mathcal{X}$-valued stochastic differential equation~(SDE)
	\begin{align}\label{eq:uncontrolX}
		X^x_t = x + \int_{0}^{t} \mu(X^x_s) ds + \int_{0}^{t}\sigma(X^x_s) dW_t, \quad t\geq 0,
	\end{align}
    where the drift $\mu:\mathcal{X}\rightarrow \mathbb{R}$ and volatility  $\sigma:\mathcal{X}\rightarrow \mathbb{R}\backslash\{0\}$ functions are Borel measurable. 

    We impose the following condition on the uncontrolled surplus process~\eqref{eq:uncontrolX}.
    \begin{as}\label{as:recurrent}
            The SDE \eqref{eq:uncontrolX} admits a unique weak solution in law, which is a regular diffusion with unattainable boundaries $\{\ell, +\infty\}$; see, e.g., Karatzas and Shreve~\cite[Chapter 5.5]{KS1991}, Revuz and Yor~\cite[Chapter VII.3]{revuz2013continuous}.
    \end{as}

    \begin{rem}\label{rem:recurrent}
        Assumption \ref{as:recurrent} implies that \(X^x\) in \eqref{eq:uncontrolX} does not explode at infinity and is recurrent (see \cite[Proposition 5.22]{KS1991}, \cite[Chapter VII, Exercise~3.22]{revuz2013continuous}). Consequently, bankruptcy occurs as intended in our model (cf. Definition \ref{dfn:bankruptcy}). An example satisfying this assumption is provided in Remark~\ref{rem:OU}
    \end{rem}

    We recall the set ${\cal A}$ defined in Section \ref{sec:notation}. 
    For each $x\in \mathbb{R}_+\subset {\cal X}$ and $D\in\mathcal{A}$, we let $X^{x,D}:=(X_t^{x,D})_{t\ge 0}$ denote the controlled surplus process governed by
    \begin{align}\label{eq:controlX}
    	X_t^{x,D} = x + \int_{0}^{t} \mu(X_{s}^{x,D}) ds + \int_{0}^{t}\sigma(X_{s}^{x,D}) dW_t - D_t, \quad t\geq 0,
    \end{align}
    with its left limit $X_{0-}^{x,D} =x$ at time zero.

    We then introduce the bankruptcy risk of the investor under the controlled surplus process and the set of singular controls for admissible dividend flows.
    \begin{dfn}\label{dfn:bankruptcy}
        We define the bankruptcy (ruin) time~by
    \begin{align*}
 			\tau^{x,D} := \inf\{t\ge0: X_{t}^{x,D} \le 0  \}.
 		\end{align*} 
    Under Assumption \ref{as:recurrent},
    $\tau^{x,D}$ is finite $\mathbb{P}$-a.s. (see Remark \ref{rem:recurrent}); hence $\tau^{x,D}\in {\cal T}$.
    
    Moreover, we let $\xi_{\tau^{x,D}}$ denote the bankruptcy payout satisfying
    \begin{align}\label{def:bankrupcypayment}
        \xi_{\tau^{x,D}}\in L^{2}({\cal F}_{\tau^{x,D}})\quad\mbox{and}\quad \mbox{$\xi_{\tau^{x,D}}>0$ $\mathbb{P}$-a.s..}
    \end{align}
    \end{dfn}

    \begin{dfn}\label{dfn:admissible} 
    For any $x\in \mathbb{R}_+$, let $\mathcal{A}^x $ denote the set of all $D \in \mathcal{A}$ satisfying  
            \(
            \int_{0}^{\tau^{x,D}} e^{-\rho s} dD_s \in L^2(\mathcal{F}_{\tau^{x,D}})\) and \(D_t - D_{t-} \le X_{t-}^{x,D}\)  for $t \ge 0$, where $\rho>0$ is a constant discount rate. 
    \end{dfn}

    Having completed the investor's dividend model, we now introduce her utility under Epstein-Zin~(EZ) preferences. 
    
    Throughout this section, we fix a parameter $R\in (0,1)$, which corresponds to a {\it degree of risk-aversion}. Following Epstein and Zin~\cite{EpsteinZin1989}, we introduce the EZ aggregator $g_{\text{EZ}}:\mathbb{R}_+\times\mathbb{R}_+\to \mathbb{R}_+$ defined by
    	\begin{align}\label{dfn:g_EZ}
    		g_{\text{EZ}}(s,v): = e^{-\rho s}(1-R)v^{\frac{-R}{1-R}},\quad (s,v)\in \mathbb{R}_+\times\mathbb{R}_+.
    \end{align}


    For each $x\in \mathbb{R}_+$ and $D\in {\cal A}^x$ (see Definition \ref{dfn:admissible}), 
    we define the investor's EZ utility \(V^{x,D}:= (V^{x,D}_t)_{t \geq 0}\) by 
    	\begin{align}\label{eq:EZ_V}
    		V^{x,D}_t = \mathbb{E}_t \bigg[ \int_{t\wedge \tau^{x,D}}^{\tau^{x,D}} 
            g_{\text{EZ}}(s,V^{x,D}_s)
            dD_s + \xi_{\tau^{x,D}}^{1-R}\bigg],\quad t\ge 0.
    	\end{align}

    We note that when $R=0$, $V^{x,D}$ in \eqref{eq:EZ_V} reduces to the classical dividend utility $K^{x,D}:=(K^{x,D}_t )_{t\geq 0}$ (see Alvarez and Virtanen~\cite{alvarez2006class}) given by
    \begin{align}\label{eq:std_singular_V}
		K^{x,D}_t := \mathbb{E}_{t}\bigg[\int_{t\wedge \tau^{x,D}}^{\tau^{x,D}} e^{-\rho s} dD_s +\xi_{\tau^{x,D}} \bigg],\quad t\geq 0.
	\end{align}
    In the setting \eqref{eq:EZ_V}, the degree of risk aversion of intertemporal preference ordering is increased (i.e., $R>0$) without affecting the ``certainty preferences''.

    The following definition is inspired by the concept of stochastic differential utility (see Duffie and Epstein~\cite{DuffieEpstein1992}, Epstein and Zin~\cite{EpsteinZin1989}, Schroder and Skiadas~\cite{Schroder1999}), while tailored for our singular dividend flow setting.
    \begin{dfn}\label{dfn:recurs_V} 
        For any $x\in \mathbb{R}_+$ and $D\in {\cal A}^x$, 
        we denote by $\mathbb{I}{(g_{\EZ},D,{\tau^{x,D}})}$ the set of all $V\in {\cal P}$ for which the stochastic integral $\int g_{\EZ}(t,V_t) dD_t$ is well-defined and the integrability condition $\mathbb{E}[\int_0^{\tau^{x,D}}|g_{\EZ}(s,V_s)|dD_s]<\infty$ holds. 
        \begin{enumerate}[label=\roman*.]
            \item A process $V\in \mathbb{I}{(g_{\EZ},D,{\tau^{x,D}})}$ is called a utility with $(g_{\EZ},D,{\tau^{x,D}},\xi^{1-R}_{\tau^{x,D}})$ if it satisfies~\eqref{eq:EZ_V};
            \item We denote by $\mathbb{UI}(g_{\EZ},D,{\tau^{x,D}},\xi^{1-R}_{\tau^{x,D}})$ the set of all uniformly integrable utilities $V$ with $(g_{\EZ},D,{\tau^{x,D}},\xi^{1-R}_{\tau^{x,D}})$.
        \end{enumerate}
	\end{dfn}

    \begin{rem}
        The utility specification in \eqref{eq:EZ_V} is motivated by its connection to the conventional EZ aggregator in the infinite elasticity of intertemporal substitution limit.
    While \(g_{\EZ}\) in \eqref{dfn:g_EZ} differs from the standard form by a scaling factor \((1-R)\) (rather than \((1-R)^{-\frac{R}{1-R}}\)), this does not affect preferences for any \(R \in (0,1)\)
    (see Herdegen et al.~\cite[Remark 2.4\;(a)]{herdegen2023infinite}). 
    Our scaling choice in \eqref{eq:EZ_V} clarifies the equivalence between EZ utility and robust Maenhout utility (see Theorem~\ref{thm:rob_rec}).
    For \( R \in (1,\infty) \), the sign and domain of the generator become more delicate; see, e.g., Aurand and Huang~\cite{Aurand2023}, Herdegen et al.~\cite{herdegen2023infinite,herdegen2023infinite2}, Xing~\cite{xing2017consumption}, for further discussion.
    \end{rem}

    We now establish the well-posedness (i.e., existence and uniqueness) of the EZ utility.  In particular, for each $x\in \mathbb{R}_+$ and $D\in {\cal A}^x$, $V^{x,D}$ in~\eqref{eq:EZ_V} will be characterized as the unique solution to the following backward stochastic differential equation (BSDE): for $t \geq 0$
    \begin{align}\label{dfn:BSDE1}
        V^{x,D}_t =\xi_{\tau^{x,D}}^{1-R} + \int_{t \wedge \tau^{x,D}}^{\tau^{x,D}} g_{\text{EZ}}(s, V^{x,D}_s) \, dD_s - \int_{t \wedge \tau^{x,D}}^{\tau^{x,D}} Z^{x,D}_s \, dW_s.
    \end{align}

    \begin{pro}
    \label{pro:recurs_BSDE}
    Suppose that Assumption \ref{as:recurrent} holds. 
    For any $x\in \mathbb{R}_+$ and $D\in {\cal A}^x$, 
    there exists a unique utility $V^{x,D} \in \mathbb{UI}(g_{\EZ},D,{\tau^{x,D}},\xi^{1-R}_{\tau^{x,D}})$, which is strictly positive, has continuous paths, and satisfies $\mathbb{E}[\sup_{t \geq 0} (V^{x,D}_t)^2] < \infty$. 
    Moreover, there exists $Z^{x,D} \in \mathcal{P}$ such that $\int_0^{\tau^{x,D}} (Z_t^{x,D})^2 \, dt < \infty$ $\mathbb{P}$-a.s., and $(V^{x,D}, Z^{x,D})$ is a solution of \eqref{dfn:BSDE1} in the sense of Definition \ref{def:BSDEdef}.
    \end{pro}

    \color{black}
    The proof of Proposition \ref{pro:recurs_BSDE} can be found in Section \ref{proof:pro:recurs_BSDE}. 
    We outline the main steps as follows. First, we establish the existence of a solution to \eqref{dfn:BSDE1} under a bounded dividend flow and a Lipschitz aggregator. In the second step, we prove a comparison theorem under aggregators $g(\cdot,\cdot)$ that are nonincreasing in their second argument, which yields uniqueness, and we derive a priori bounds for solutions to \eqref{dfn:BSDE1} under $g_\EZ(\cdot,\cdot)$ which is non-Lipschitz. Using these bounds, in the third step we approximate $g_\EZ(\cdot,\cdot)$ by a sequence of Lipschitz aggregators and show convergence of the corresponding BSDE solutions. Finally, for unbounded dividend flows \(D\in{\cal A}^x\), we follow some arguments in Aurand and Huang~\cite{Aurand2023} to complete the proof. \color{black}
    
    By Proposition \ref{pro:recurs_BSDE}, we are able to define the investor's dividend optimization problem under EZ preferences by
    \begin{align}
    \label{eq:opt_EZ}
    V_t^{x} := \esssup\{V^{x,D}_t: D \in {\mathcal{A}}^x\},\quad (t,x)\in \mathbb{R}_+\times \mathbb{R}_+.
    \end{align}

    Solving \eqref{eq:opt_EZ} is demanding due to the recursive nature of EZ preference, the random time horizon and singular controls. In the next section, we introduce a robust dividend formulation under ambiguity, which will later be shown to be equivalent to \eqref{eq:opt_EZ}. The robust formulation not only provides a convenient basis for characterizing the optimal dividend policy under suitable assumptions, but extends the previous findings in Maenout~\cite{maenhout2004robust} and Skiadas~\cite{skiadas2003robust}.

    \subsection{Robust singular dividend criterion under Maenhout preferences}\label{sec:3}
    In this section, we take the perspective of the firm's executive, who aims to maximize expected discounted dividends while considering bankruptcy risk (see Definition \ref{dfn:bankruptcy}). At the same time, she suffers from ambiguity regarding the firm's earnings or the drift of the firm's surplus given in \eqref{eq:uncontrolX}.

    Throughout this section, we fix a parameter \(\cR \in (0,1)\), which corresponds to a {\it degree of ambiguity-aversion}. In the presence of ambiguity, the executive prefers a family of unspecified alternative models that are in close proximity to the reference model given in \eqref{eq:uncontrolX}. In line with this, we define the executive’s admissible sets of Girsanov kernels as follow.    
    \begin{dfn}\label{dfn:admissD}
        For any initial wealth $x\in \mathbb{R}_+$ and consumption plan $D\in {\cal A}^x$,
we define the set of admissible Girsanov kernels $\Theta^{D} \subseteq \Theta$ by
\begin{equation} \label{eq:ThetaD}
\Theta^{D} := \left\{ \theta\in \Theta \;\middle|\;
\begin{array}{l}
  \mathbb{E}\bigg[\bigg(\displaystyle\int_{0}^{\tau^{x,D}} \eta_{t}^{\theta}e^{\int_{0}^{t} \frac{\theta^2_s}{2\cR}ds -\rho t} dD_t\bigg)^2\bigg]<\infty, \quad \text{and} \\[1.5em]
  \mathbb{E}\left[(\eta_{\tau^{x,D}}^{\theta})^2e^{\int_{0}^{\tau^{x,D}} \frac{\theta^2_u}{\cR}du } \max\{\xi_{\tau^{x,D}}^{2},1\} \right]<\infty
\end{array}
\right\}.
\end{equation}
 where $\eta^{\theta}$ is the stochastic exponential from \eqref{eq:eta}, 
$(\tau^{x,D},\xi_{\tau^{x,D}})$ is the bankruptcy process from Definition~\ref{dfn:bankruptcy}, 
and $\rho>0$ is the discount rate.
    \end{dfn}


    \noindent The executive's robust singular dividend criterion ${\cal V}^{x, D}:=({\cal V}_t^{x, D})_{t \geq 0}$ is
	\begin{align}\label{dfn:rbs_V}
		{\cal V}^{x,D}_t := \essinf\{{\cal V}_{t}^{x,D,\theta}:\theta\in \Theta^D \}, \quad t\ge 0,
	\end{align}
	where for each $\theta\in \Theta^D$, the process ${\cal V}^{x,D,\theta}:=({\cal V}_{t}^{x,D,\theta})_{t\geq 0}$ is defined by
	\begin{align}\label{dfn:rbs_V_theta}
		{\cal V}_{t}^{x,D,\theta} &:=  \mathbb{E}^{\theta}_t\bigg[ \int_{t\wedge\tau^{x,D}}^{\tau^{x,D}} \bigg(e^{-\rho s} dD_s+\frac{1}{2\cR} 	{\cal V}^{x,D,\theta}_s \theta_s^2ds\bigg) +\xi_{\tau^{x,D}}\bigg].
    \end{align}
    In the following proposition, we establish the well-posedness of the robust singular dividend criterion.
    The proof can be found in Section~\ref{proof:pro:Maenhoutdef}. 
    \begin{pro}\label{pro:Maenhoutdef}
    Suppose that Assumption \ref{as:recurrent} holds. For any $(x,D,\theta)\in \mathbb{R}_+\times {\cal A}^x \times \Theta^D$, there exists ${\cal V}^{x,D,\theta}\in {\cal P}$ satisfying \eqref{dfn:rbs_V_theta}, which is unique in $\{Y\in\mathcal{P}:\mathbb{E}[\sup_{t\ge 0} Y_{t\wedge\tau^{x,D}}^2]<\infty\}$. 
    Therefore, 
    ${\cal V}^{x,D}$ in \eqref{dfn:rbs_V} is well-defined. 
    \end{pro}
    
    We define the executive’s robust dividend optimization problem by
    \begin{align}
    \label{eq:opt_Maenhout}
    {\cal V}_t^{x} := \esssup \{{\cal V}^{x,D}_t:{D \in {\mathcal{A}}^x} \},\quad (t,x)\in \mathbb{R}_+\times \mathbb{R}_+.
    \end{align}

    In the next section, we establish the equivalence between the two optimization problems given in \eqref{eq:opt_EZ} and \eqref{eq:opt_Maenhout}. We then characterize the optimal dividend strategy in a Markovian setting under suitable conditions.

    \subsection{Main theorems}\label{sec:main}
    Throughout this section, we set $R\equiv {\cal R}\in (0,1)$, that is, the degrees of ambiguity aversion and risk aversion coincide. Our first main result establishes the equivalence between the EZ utility and the Maenhout's robust singular dividend criterion in the setting with random time horizons and singular controls. 
    \begin{thm}\label{thm:rob_rec}
        Suppose that Assumption \ref{as:recurrent} holds. 
        For any $x\in \mathbb{R}_+$ and $D\in {\cal A}^x$, the unique utility process $V^{x,D} \in \mathbb{UI}(g_{\EZ},D,{\tau^{x,D}},\xi^{1-R}_{\tau^{x,D}})$ and the robust singular control criterion ${\cal V}^{x,D}$ given in \eqref{dfn:rbs_V} (see Propositions~\ref{pro:recurs_BSDE}, \ref{pro:Maenhoutdef}) are equivalent in the sense~that for all $t\geq 0$, $\mathbb{P}$-a.s.,
        \[
            V_t^{x,D}  = ({\cal V}_t^{x,D})^{1-R}.
        \]
        Therefore, it holds that ${V}_t^x=({\cal V}_t^{x})^{{1-R}}$, $\mathbb{P}$-a.s.; see \eqref{eq:opt_EZ} and \eqref{eq:opt_Maenhout}. 
	\end{thm}

    We note that the general roadmap for Theorem \ref{thm:rob_rec} follows Skiadas~\cite[Theorem~5, Section~6]{skiadas2003robust}; see also Maenhout et al.~\cite[Section~3.2]{balter2025model} for a recent alternative perspective. Theorem \ref{thm:rob_rec} extends these works by showing that the two singular control problems under EZ and Maenhout preferences are equivalent. The proof of Theorem \ref{thm:rob_rec} is deferred to Section \ref{proof:thm:rob_rec}.

    We now aim to characterize the optimal dividend flow by establishing a dynamic programming principle. To that end, we impose the following Markovian setting on the bankruptcy payout in Definition \ref{dfn:bankruptcy} as well as the regularity and growth conditions on the drift and volatility coefficients $\mu,\sigma$ in the surplus processes \eqref{eq:uncontrolX} and \eqref{eq:controlX}. We first recall the discount rate $\rho>0$.
    \begin{as}\label{as:default}
            For any $x\in \mathbb{R}_+$ and $D\in {\cal A}^x$, the bankruptcy payout is given by $\xi_{\tau^{x,D}}:=e^{-\rho \tau^{x,D}}\xi_0$ for some constant $\xi_0>0$ (so that \eqref{def:bankrupcypayment} holds).
    \end{as}
    \begin{as}\label{as:bdy_cond} 
    $\mu$ is in $C^1({\mathbb{R}_+})$ with $\mu(0)>0$ and satisfies 
    \begin{align}\label{as:bound}
        \sup_{x\ge 0}( \mu(x) -\rho x ) \le \overline{\mu},
    \end{align}
    for some $\overline{\mu}\in \mathbb{R}$. $\sigma$ is in $C^1({\mathbb{R}_+})$ and is non-decreasing. Moreover, there exist $\underline{b},\hat{b}, \overline{b}\in[0,\infty]$ with $\underline{b} < \hat{b} < \overline{b}$ and the following conditions hold:
    \begin{enumerate}[label=\roman*.]
        \item $\mu(x)^2 \ge  3R\rho\sigma(x)^2  $ on $x\in[0,\underline{b}]$ and there exists $$\overline{b} = \inf\{x\in(\underline{b},\infty): \mu(x)^2 = 2 R\rho\sigma(x)^2\},$$ with 
        $\inf\{\varnothing\} : =\infty$. In addition, $\mu(x)^2 < 2 R\rho\sigma(x)^2$ for all $x>\overline{b}$.
        \item Define $\psi^\pm:[0,\overline{b}]\ni x\rightarrow \psi^\pm(x) \in \mathbb{R}$ by $
            \psi^\pm(x):= \frac{\mu(x)\pm \sqrt{\mu^2(x)-2R\rho \sigma^2(x)}}{2\rho}.$
            
        In particular, $\psi^\pm$ are roots of $Q(\psi;x):=\rho \psi^2 - \mu(x) \psi + \frac{R}{2}\sigma^2(x)$.
        The map $\psi^+(x)-x$ increases on $[0,\underline{b}]$ and  decreases~on $(\underline{b},\overline{b})$. 
        \item  $ \sup_{x\in[0,\overline{b}]} \{\psi^-(x) -x \} < \xi_0 < \psi^+(\underline{b}) - \underline{b}$.
        \item $\hat{b}$  is the smallest value in $(\underline{b},\overline{b})$ that satisfies $\psi^+(\hat{b})-\hat{b} = \xi_0$.
    \end{enumerate}
    \end{as}

    \begin{rem}
    Assumption~\ref{as:bdy_cond} extends the sufficient conditions for solvability and regularity of the optimal dividend problem in Alvarez and Virtanen~\cite[Lemma 3.1]{alvarez2006class} to account for ambiguity in the underlying surplus model. Below, we outline implications and technical reasons therein (see also Remark~\ref{rem:as:bound}).
    \begin{enumerate}[label=\roman*.]
        \item The condition \eqref{as:bound} implies that the net appreciation rate of the surplus is uniformly bounded. This ensures the continuity of the map ${\cal V}:\mathbb{R}_+\to \mathbb{R}$ (defined in~\eqref{eq:obj_maxmin}, particularly at $x = 0$, which is crucial for the analysis of the variational inequality given in \eqref{eq:HJB_VI} and  the uniqueness of its solution.
        \item Assumption~\ref{as:bdy_cond}\;i. ensures that $\mu(x)$ is sufficiently strong relative to $\sigma(x)$ and 
        $\mathcal{R}$ in the low-reserve region. On the other hand, it also restricts $\mathcal{R}$ in the sense that if $\mathcal{R}$ is too large, immediate liquidation becomes optimal, reflecting extreme ambiguity aversion as discussed at the end of Section~\ref{sec:5}.
        \item The term $\rho(\psi^+(x)-x)$ in Assumption~\ref{as:bdy_cond}\;ii. represents the ambiguity-adjusted net appreciation rate, reducing to $\mu(x)-\rho x$ when $\mathcal{R}=0$ as in \cite[Lemma 3.1~(ii)]{alvarez2006class}. This condition requires the rate to be bounded, with dividends paid when it is decreasing, mirroring the classical case.
        \item The condition on the bankruptcy payment $\xi_0$ (in Assumption~\ref{as:bdy_cond}\;iii.) ensures that it is neither too low nor too high. This is necessary for a well-posed and meaningful optimal policy and prevents degenerate cases including immediate bankruptcy or indefinite reserve accumulation. 
    \end{enumerate}
    \end{rem}

    \begin{rem}\label{rem:OU} 
    The Ornstein–Uhlenbeck process satisfies Assumptions \ref{as:recurrent} and \ref{as:bdy_cond} under the following conditions on its parameters: 
    $X^x$ in \eqref{eq:uncontrolX} 
        is given by 
        \begin{align*}
            d X^x_t  = \overline{\kappa}(\overline{m}-X^x_t) dt + \overline{\sigma} dW_t,\quad x\in {\cal X}:=\mathbb{R},
        \end{align*}
        where $\overline{\kappa}>0$, $\overline{m}>0$ and $\overline{\sigma}>0$ satisfy
        \begin{align*}
            (\overline{\kappa} \overline{m})^2 > 3R\rho\overline{\sigma}^2, \qquad  \sqrt{\frac{R}{2\rho}} <  \xi_0 < \frac{\overline{\kappa}\overline{m} + \sqrt{(\overline{\kappa}\overline{m})^2 -2R\rho\overline{\sigma}^2}}{2\rho},
        \end{align*}
        with $\xi_0>0$ given in Assumption \ref{as:default}. We let $\underline{b} = 0$ and $\overline{b} = \overline{m} - \overline{\sigma}\sqrt{2R\rho}/\overline{\kappa}$. We note that this mean-reverting process is considered in Cadenillas et al.~\cite{Cadenillas2007MF}. 
    \end{rem}

    Under Assumptions~\ref{as:recurrent}, \ref{as:default} and \ref{as:bdy_cond}, we define a map ${\cal V}:\mathbb{R}_+\to \mathbb{R}$ by
    \begin{align}\label{eq:obj_maxmin}
        {\cal V}(x):={\cal V}_0^x=\sup_{D\in{\cal A}^x}{\cal V}_0^{x,D}=\sup_{D\in{\cal A}^x}\inf_{\theta\in\Theta^D}{\cal V}_0^{x,D,\theta},\quad x\in \mathbb{R}_+,
    \end{align}
    where $({\cal V}_t^{x})_{t\geq 0}$ is the robust dividend optimization problem defined in \eqref{eq:opt_Maenhout}. 
    \color{black}
    
    From the underlying SDE in \eqref{eq:controlX} and the robust objective in \eqref{eq:obj_maxmin}, we introduce a nonlinear operator ${\cal L}$ acting on $v\in C^2({\mathbb{R}_+})$, defined by a Hamiltonian over $\theta\in \mathbb{R}$: 
    \begin{align}
    {\cal L}v(x):=&\inf_{\theta \in \mathbb{R}}\Big\{\frac{\sigma^2(x)}{2}v''(x)+\left(\mu(x)+\sigma(x) \theta\right)v'(x)+ \frac{\theta^2}{2R} v(x) -\rho v(x)  \Big\}\nonumber\\
    =&\frac{\sigma^2(x)}{2}v''(x) + \mu(x)v'(x)-R\frac{\sigma^2(x)}{2} \frac{(v'(x))^2}{v(x)}-\rho v(x),\label{eq:nonlinear_hamilton}
    \end{align}
    where the last equality is valid  provided that $v> 0$ on ${\mathbb{R}_+}$. 
    
    Then the map ${\cal V}$ in \eqref{eq:obj_maxmin} is expected to be a solution of the following Hamilton-Jacobi-Bellman variational inequality (HJB-VI):
    \begin{align}\label{eq:HJB_VI}
    \left\{
    \begin{aligned}
        &\max\{{\cal L}v(x),1-v'(x)\}=0\qquad \mbox{on}\;\;{\mathbb{R}_+}; \\
        &v(0)=\xi_0.
    \end{aligned}
    \right.
    \end{align}
    
    Our educated guess from \eqref{eq:HJB_VI} is that the optimal dividend flow to \eqref{eq:obj_maxmin} is a {\it threshold~control}, which is defined as follows (see Cohen et al.~\cite{cohen2022optimal} and Chakraborty et al.~\cite{chakraborty2023optimal}): 
    Denote by $\mathcal{D}(\mathbb{R}_+)$ the the space of right-continuous functions with left limits mapping $\mathbb{R}_+$ into $\mathbb{R}$.  Fix $\beta>0$. Given $\psi \in \mathcal{D}(\mathbb{R}_+)$ there exists a unique pair $(\phi, \eta) \in \mathcal{D}(\mathbb{R}_+)^2$ that satisfy the following:
    \begin{enumerate}[label=\roman*.]
        \item For every $t \in\mathbb{R}_+,  \phi(t)=\psi(t)+\eta(t) $;
        \item $\eta(0-)=0$, $\eta$ is nondecreasing, and $\int_0^{\infty} \mathds{1}_{\{\phi(s)<\beta\}} d \eta(s)=0.$
    \end{enumerate}
    We define the map $\Gamma_{\beta}[\psi]= (\Gamma_{\beta}^1,\Gamma_{\beta}^2)[\psi]:=(\phi, \eta)$ and call it {\it Skorokhod map}. We refer to Kruk et al.~\cite{kruk2007explicit} for its existence and uniqueness as well as continuity. 
    \begin{dfn}\label{def:thres_divid} Fix $x,b\in\mathbb{R}_+$. We call $D = D^{(b)}$ a $b$-threshold dividend strategy if for every continuous $\psi\in\mathcal{D}(\mathbb{R}_+)$, one has $(X^{x,D},D)(\psi) =\Gamma_{b}[\psi]$.
    \end{dfn}
    We note that $D^{(b)}\in \mathcal{A}^x$ for any $x,b\in\mathbb{R}_+$ (see Definition \ref{dfn:admissible}). 
    
    From the HJB-VI in \eqref{eq:HJB_VI} and Definition \ref{def:thres_divid}, we introduce a free-boundary problem. Specifically, we aim to find $b>0$ and $v\in C^2({\mathbb{R}_+})$ such that 
    \begin{align}\label{eq:fbdy}
    \left\{
        \begin{aligned}
            &{\cal L}v(x)=0,\;\; v'(x)\geq 1\qquad \mbox{on}\;\;(0,b];\\
            &{\cal L}v(x)\leq 0,\;\; v'(x)= 1\qquad \mbox{on}\;\;(b,\infty);\\
            &v(0)=\xi_0,
        \end{aligned}
    \right.
    \end{align}
    as well as $v''(b)=0$ and $v'(b)=1$.
    
    The intuition behind \eqref{eq:fbdy} is as follows: If $X_0^{x,D} = x\in (b,\infty)$, then in order to keep the process $X^{x,D}$ between 0 and $b$, there is an instantaneous dividend payment of size $x-b$. If $x\in(0,b)$, no action is being taken. If $X^{x,D}$ hits $b$, the threshold policy is taking action, leading to the Neumann boundary condition at $b$. The surplus level will be kept in $(0,b)$, with an initial payment $0\wedge(X_0^{x,D}-b)$ and then pay out dividend only when $X^{x,D}$ is at level $b$.

    We now prove that an optimal dividend strategy exists and that it is a threshold control. Moreover, we show that the threshold level and the value function ${\cal V}$ are characterized by the free-boundary problem \eqref{eq:fbdy}. The proofs of the following proposition and theorem are presented in Sections \ref{proof:pro:shooting} and \ref{sec:thm:verification}.

    \begin{pro}\label{pro:shooting} Suppose that Assumptions~\ref{as:recurrent}, \ref{as:default} and \ref{as:bdy_cond} hold. There exists a unique solution \( v^* : \mathbb{R}_+ \to \mathbb{R}_+ \) to \eqref{eq:fbdy} satisfying \( v^* \in C^2(\mathbb{R}_+) \). Moreover, the corresponding free boundary \( b^* \) is unique and lies in \( (\underline{b}, \hat{b}) \), where $\underline b$ and $\hat b$ are defined in Assumption~\ref{as:bdy_cond}. 
    \end{pro}

    \begin{thm}\label{thm:verification} Suppose that Assumptions~\ref{as:recurrent}, \ref{as:default} and \ref{as:bdy_cond} hold. Let $ v^*$ be the unique solution to \eqref{eq:fbdy} with the free boundary $ b^*$ (see Proposition~\ref{pro:shooting}), and let ${\cal V}:\mathbb{R}_+\to\mathbb{R}$ be defined in \eqref{eq:obj_maxmin}. 
        Furthermore, let $D^*:=D^{({{b}^*})}$ be the $b^*$-threshold dividend strategy (see Definition~\ref{def:thres_divid}). Then, for every $ x \in \mathbb{R}_+ $,
    $$
    v^*(x) = {\cal V}(x)={\cal V}_0^{x,D^*} = \inf_{\theta \in \Theta^{D^*}} {\cal V}_0^{x,D^{*},\theta}.
    $$
    In particular, it holds that for all $x\in \mathbb{R}_+$
    \begin{align*}
        v^*(x) = \sup_{D \in \mathcal{A}^x} \inf_{\theta \in \Theta^D} \mathbb{E}^{\theta} \bigg[ \int_{0}^{\tau^{x,D}} e^{-\rho t} \left( dD_t + \frac{\theta_t^2}{2 R} v^*(X_t^{x,D}) dt \right) + \xi_{\tau^{x,D}} \bigg].
    \end{align*}
    \end{thm}

    \subsection{Sensitivity analysis with respect to ambiguity aversion}\label{sec:5}
    In this section, we analyze the sensitivity of the robust singular criterion and its optimization w.r.t.~the degree of ambiguity aversion ${\cal R}$. Throughout this section, Assumption \ref{as:recurrent} is imposed without further mention (so that Propositions \ref{pro:recurs_BSDE} and \ref{pro:Maenhoutdef} and Theorem \ref{thm:rob_rec} hold). 
    Furthermore, we rewrite all ${\cal R}$-dependent components in  Definition \ref{dfn:admissD}, \eqref{dfn:rbs_V}--\eqref{eq:opt_Maenhout} and \eqref{eq:obj_maxmin} as follows: for any ${\cal R}\in(0,1)$, 
    $\Theta^D({\cal R}),$ $({\cal V}_t^{x,D}(\cR))_{t\geq 0},$ $({\cal V}^{x,D,\theta}_t(\cR))_{t\geq 0}$, and $({\cal V}_t^x(\cR))_{t\geq 0}$.  
    
    In analogy, we rewrite all $R$-dependent components in \eqref{eq:EZ_V} and \eqref{eq:opt_EZ} as follows: for each $R\in(0,1)$, $(V_t^{x,D}(R))_{t\geq 0}$ and $(V_t^{x}(R))_{t\geq 0}$.
    
    The following theorem shows the sensitivity of the robust singular dividend criterion ${\cal V}^{x,D}(\cdot)$. First, we recall the classical dividend criterion $K^{x,D}$ in~\eqref{eq:std_singular_V}. 
\begin{thm}\label{thm:cts_VEZ}
For every $(x,D)\in \mathbb{R}_+\times {\cal A}^x$ and \(t \ge 0\), the mapping \((0,1) \ni {\cal R} \mapsto {\cal V}_t^{x,D}( {\cal R})\in \mathbb{R}_+\) is decreasing and continuous. Moreover, for every $t\geq0$, 
\begin{align}\label{eq:Vrob_lim}
    \lim_{\cR \downarrow 0} {\cal V}_t^{x,D}(\cR) = K_t^{x,D}\quad\text{and}\quad {\cal V}_t^{x,D}(\cR)\le K_t^{x,D},\quad \mbox{$\mathbb{P}$-a.s..}
\end{align}
\end{thm}
\begin{rem} \label{rem:Vrob(0)}
    \({\cal V}^{x,D}(\cdot) \) 
    is not a priori defined at \({\cal R} = 0\). 
    However, by \eqref{eq:Vrob_lim}, we can extend its definition by setting \({\cal V}_t^{x,D}(0) := K^{x,D}_t\) for $t\in[0,T]$. 
    Moreover, Theorem~\ref{thm:rob_rec} implies \(V^{x,D}(\cR) = ({\cal V}^{x,D}(\cR))^{1-\cR}\) for all \(\cR\equiv R \in [0,1)\).  
\end{rem}

The following corollary shows the sensitivity of the robust singular dividend optimization ${\cal V}^{x}(\cdot)$. Before doing so, by Remark~\ref{rem:Vrob(0)}, we extend the definition of ${\cal V}^{x}(\cdot)$ at $\cR=0$ by setting 
\begin{align*}
{\cal V}^{x}_t(0) := \esssup\{K_t^{x,D}:D\in {\cal A}^x\},\quad (t,x)\in \mathbb{R}_+\times \mathbb{R}_+.
\end{align*}

\begin{cor}\label{coro:sec5} 
For every $(t,x)\in \mathbb{R}_+\times \mathbb{R}_+$, the mapping \([0,1) \ni \cR \mapsto {\cal V}^x_t(\cR)\) is decreasing, lower semi-continuous and right continuous.
\end{cor}

The following theorem shows that under additional conditions so that Theorem \ref{thm:verification} holds, the value function 
${\cal V}(x;{\cal R}):={\cal V}_0^{x}(\cR)$ has (strong) continuity w.r.t $\cR\equiv R$ (see \eqref{eq:obj_maxmin}). Consequently, the corresponding optimal threshold, denoted by $b^*_{\cR}$ for each~$\cR$, has continuity.
\begin{thm}\label{thm:cts_b*}
Suppose Assumption \ref{as:default} holds. Moreover, if there exists some closed interval \({\cal I}\subset[0,1)\) such that 
Assumption \ref{as:bdy_cond} hold for all $\cR\equiv R\in {\cal I}$, then for any $x\in\mathbb{R}_+$, the mappings \({\cal I}\ni\cR \mapsto {\cal V}(x;\cR)\in \mathbb{R}_+\) 
and \({\cal I}\ni\cR \mapsto b^*_{\cR}\in \mathbb{R}_+\) are continuous, i.e.,
\[ \lim_{\substack{|\cR_2 - \cR_1| \to 0 \\ \cR_1,\,\cR_2 \in {\cal I}}} \left| {\cal V}(x;{\cR_1}) - {\cal V}(x;{\cR_2}) \right| = 0, \quad
\lim_{\substack{|\cR_2 - \cR_1| \to 0 \\ \cR_1,\,\cR_2 \in {\cal I}}} \left| b^*_{\cR_1} - b^*_{\cR_2} \right| = 0.
\]
\end{thm}
The proofs of Theorem~\ref{thm:cts_VEZ}, Corollary~\ref{coro:sec5} and Theorem~\ref{thm:cts_b*} are presented in Section \ref{sec:10}.

    \begin{rem}
        Theorem~\ref{thm:cts_VEZ} and Corollary \ref{coro:sec5} align with the existing sensitivity results for robust optimization problems under small ambiguity aversion (i.e., $\cR\ll 1$) in continuous time settings; see Arthur et al.~\cite{arthur2025sensitivity}, Bartl et al.~\cite{bartl2023sensitivity,bartl2024numerical}, Herrmann and Muhle-Karbe~\cite{herrmann2017model}, Herrmann et al.~\cite{herrmann2017hedging}. 
        While this paper addressed the well-posedness and sensitivity of the robust singular dividend optimization for small ambiguity aversion, the sensitivity of the robust dividend policy and extensions of those analysis to the case for $\cR\in (1,\infty)$ and  suggest interesting avenues for future research.  
    \end{rem}
    \color{black}

    \section{Proofs}\label{sec:proof}

    \subsection{
    Well-posedness of EZ singular control utility: Proof of Proposition \ref{pro:recurs_BSDE}
    }\label{proof:pro:recurs_BSDE}
    Throughout Section \ref{proof:pro:recurs_BSDE}, Assumption \ref{as:recurrent} is imposed without further mention. Moreover, we fix $(x,D)\in \mathbb{R}_+\times {\cal A}^x$ (see Definition \ref{dfn:admissible}) and set $\tau:=\tau^{x,D}$ and $\xi_{\tau}:=\xi_{\tau^{x,D}}$ (see Definition~\ref{dfn:bankruptcy}) for notational simplicity.

    We introduce the following sets: 
    for any $p\geq1$ and $t\geq0$,
    \begin{enumerate}
		\item[i.] $\mathcal{S}_p$ is the set of the set of all real-valued, $\mathbb{F}$-progressively measurable processes $Y$ such that $\|Y\|_{{\cal S}_p}^p:=\mathbb{E}[\sup _{t \geq 0}|Y_{t \wedge \tau}|^p]<\infty$;
		\item[ii.] ${\cal S}_{\infty}$ is the set of all real-valued, $\mathbb{F}$-progressively measurable processes $Y$ such that $\|Y\|_{{\cal S}_\infty}:=\inf\{C\geq 0: |Y_t| \leq C\;\mbox{for all}\;t\geq0\;\mbox{$\mathbb{P}$-a.s.}\}$; 
		\item[iii.] $\mathcal{M}_p$ is the set of all real-valued, $\mathbb{F}$-progressively measurable processes $Z$ such that $\|Z\|_{\mathcal{M}_p}^p:=\mathbb{E}[(\int_0^\tau |Z_t|^2 dt)^{p/2}]<\infty$;
		\item[iv.] Denote by $\mathcal{B}_p:=\mathcal{S}_{p} \times \mathcal{M}_p$ the product space equipped with the norm $\|(Y,Z) \|_{\mathcal{B}_p}^p: =\|Y\|_{{\cal S}_p}^p+ \|Z\|_{\mathcal{M}_p}^p$ for every $(Y,Z)\in\mathcal{B}_p$.
	\end{enumerate}

    According to Briand et al.~\cite{briand2003lp} and Popier~\cite{popier2007backward}, we introduce the following definition of a solution of BSDEs with random terminal times. First we denote for every $p\geq 1$ by $L_{\mathrm{loc}}^p(\mathbb{R}_+)$ the set of all locally $p$-integrable functions on ${\mathbb{R}_+}$.
	\begin{dfn}\label{def:BSDEdef}
            A solution to \eqref{dfn:BSDE1} is a pair $(Y,Z) = (Y_t,Z_t)_{t\ge0}\in {\cal P}_+\times {\cal P}$ satisfying the following conditions $\mathbb{P}$-a.s.:
		\begin{enumerate}[label=\roman*.]
			\item $Y_t = \xi_{\tau}^{1-R}$ and $Z_t = 0$ on $\{t \ge \tau\}$;
			\item $t\mapsto \mathds{1}_{t\le\tau} g_{\text{EZ}}(t,Y_t)$ belongs to $L^1_{\mathrm{loc}}(\mathbb{R}_+)$, and $t\mapsto Z_t$ belongs to $L_{\mathrm{loc}}^2(\mathbb{R}_+)$;
			\item For every $T\ge 0$, it holds that for $t\in[0,T]$,
			\begin{align*}
				Y_{t\wedge\tau} = Y_{T\wedge\tau} + \int_{t\wedge\tau}^{T\wedge\tau} g_{\text{EZ}}(s,Y_s) dD_s - \int_{t\wedge\tau}^{T\wedge\tau}Z_s dW_s.
			\end{align*}
		\end{enumerate}
		Moreover, if the solution $(Y,Z)$ is in ${\cal B}_2$, we then call it an $L^2$-solution to~\eqref{dfn:BSDE1}.
         In particular, the $L^2$-solution implies that $Y\in \mathbb{UI}(g_{\EZ},D,{\tau},\xi^{1-R}_{\tau})$. 
	\end{dfn}
    We establish the well-posedness of the EZ utility (Proposition \ref{pro:recurs_BSDE}) through four preliminary results. The complete proof, which integrates these steps, is detailed in Section~\ref{proof:final:pro:recurs_BSDE}.
    
    
    \subsubsection{Step 1: Lipschitz aggregator: fixed time horizon and bounded control.} 
    For any $0<T<\infty$, we define the spaces ${\cal S}_{2}([0,T])$, $\mathcal{M}_2([0,T])$, and $\mathcal{B}_2([0,T])$ similarly, with $\tau$ replaced by $T$ in the definition of ${\cal S}_2$, ${\cal M}_2$, and ${\cal B}_2$. 
        \begin{lem}\label{lem:A2}
        Let $T<\infty$. Let $A\in {\cal S}_{2}([0,T])$ and $\hat D\in\mathcal{A}$ be such that $\hat D_T \le K$ $\mathbb{P}$-a.s.~for some constant $K$, and both $A$ and $\hat D$ have continuous paths. If the following holds for every $\zeta \in {\cal T}$ such that $\zeta \leq T$ $\mathbb{P}$-a.s., 
        \begin{align*}
            \mathbb{E}[A_{\zeta}] \le \alpha + \mathbb{E}\bigg[\int_{\zeta}^{T} A_s d\hat D_s\bigg], 
        \end{align*}
        with some constant $\alpha$, then it holds that $\mathbb{E}[A_t] \le \alpha e^K$ for every $t\in[0,T]$.
        \end{lem}
        \begin{proof}
            For fixed $t\in[0,T)$ and $0\le k\le K$, set $\zeta_k = \inf\{s \ge t: \hat D_s \ge k\}\wedge T$. Then each $\zeta_k$ is a stopping time bounded by $T$. Moreover, since $\{\zeta_k\le s\} = \{\hat D_s\ge k\}$ for $s\ge t$ and $\zeta_{K} = T$ holds $\mathbb{P}$-a.s., it follows that
            \begin{align*}
                \mathbb{E}[A_{\zeta_k}] \le& \alpha + \mathbb{E}\bigg[ \int_{\zeta_k}^{T} A_s d \hat D_s\bigg] =  \alpha + 
                \mathbb{E}\bigg[ \int_{t}^{T} A_s \mathds{1}_{s\ge \zeta_k}d \hat D_s\bigg]\\
                \le &  \alpha +  \mathbb{E}\bigg[ \int_{t}^{T} A_s \mathds{1}_{ \zeta_k \le s \le \zeta_{K}}d \hat D_s\bigg] 
                =  \alpha +  \int_{k}^K \mathbb{E}[A_{\zeta_s}] ds.
            \end{align*}
            Set $u_s:= \mathbb{E}[A_{\zeta_s}]$ for $k\in[0,K]$. Then since $u_k \le \alpha +  \int_{k}^K u_s ds$, for $k\in[0,K]$, the standard (backward) Gr\"onwall's inequality ensures that $u_k \le \alpha e^{K-k}$ for $k\in[0,K]$. We complete the proof by letting $k=0$. 
             
        \end{proof}

    \begin{lem}[Lipschitz aggregator]\label{lem:BSDE_exist_Lip}
        Fix $0<T < \infty$. Let $\hat D\in {\cal A}$ be such that  $ \int_0^{T}e^{-\rho t}d \hat D_t \le K$, $\mathbb{P}$-a.s., with some constant $K>0$. Let $\hat \xi_T \in L^2(\mathcal{F}_T)$. Assume that $g$ satisfies the following conditions: 
		\begin{itemize}
			\labitem{(A1)}{item:A1} $\exists$ $C_g>0$ s.t. $|g(t,y_1)-g(t,y_2)|\le C_g e^{-\rho t}|y_1-y_2|$ $\forall y_1,y_2$.
			\labitem{(A2)}{item:A2} $\int_0^{T} |g(t,0)|d\hat D_t \in L^2({\cal F}_T)$.
		\end{itemize}
		Then the BSDE 
        \begin{align*}
        Y_t = \hat \xi_{T}^{1-R} + \int_{t}^{T} g(s, Y_s) \, d\hat D_s - \int_{t}^{T} Z_s dW_s, \quad t \in[0,T],
        \end{align*}
        has a unique solution in $\mathcal{B}_2$.
	\end{lem}
	\begin{proof}
		The idea is to extend the Picard iteration method given by Pardoux and Peng~\cite[Proposition~2.2]{pardoux1990adapted} and then to show that the corresponding sequence is Cauchy in~$\mathcal{B}_2$.
        
        \noindent Let $y^0 = 0$, and $\{ (y^n_t,z^n_t)_{t\in[0,T]}\}_{n\ge1}$ be a sequence in $\mathcal{B}_2$ defined recursively as
		\begin{align*}
			y_t^{n+1} = \hat \xi_T^{1-R}+ \int_{t}^T g(s,y_s^n)d \hat D_s - \int_{t}^T  z_s^{n} dW_s ,\quad t \in[0,T],
		\end{align*}
     For any $y^n\in \mathcal{S}_2$, we construct $y^{n+1}$ and $z^n$ as follows. From \ref{item:A1} and \ref{item:A2}, 
    		\begin{align*}
    			&\mathbb{E}\bigg[\bigg(\hat \xi_T^{1-R}+\int_0^{T} g(s,y^n_s)d \hat D_s \bigg)^2 \bigg]\\
                &\quad \le 2 \mathbb{E}\bigg[\hat \xi_T^{2(1-R)}+ \bigg(\int_0^{T} \big(|g(s,0)| +  C_g e^{-\rho s}|y^n_s|\big)d \hat D_s \bigg)^2 \bigg]\\
    			&\quad \le2\mathbb{E}[\hat \xi_T^{2(1-R)}] + 4 \mathbb{E}\bigg[\int_0^{T} |g(s,0)|d\hat D_s\bigg]^2+  C_g \mathbb{E}\bigg[\sup_{0\le t\le T}|y^n_t|^2 \bigg]K^2<\infty.
    		\end{align*}
    		This implies that $\{ \mathbb{E}_t[\hat \xi_T^{1-R}+\int_0^{T} g(s,y^n_s)d \hat D_s  ]\}_{t\ge0}$ is a square integrable martingale. By the martingale representation theorem, we have a unique pair $(z^{n},y^{n+1})\in\mathcal{M}_2\times\mathcal{S}_2$ such that for $0\le t\le T$
    		\begin{align*}
    			&\int_0^t z^{n}_s dW_s = \mathbb{E}_t\bigg[\hat \xi^{1-R}_T+\int_0^{T} g(s,y^n_s)d \hat D_s \bigg] - \mathbb{E}\bigg[\hat \xi_T^{1-R}+\int_0^{T} g(s,y^n_s)d \hat D_s  \bigg]\nonumber\\ 
    			&\quad \mbox{and}\quad y^{n+1}_t: = \mathbb{E}_t\bigg[\hat \xi_T^{1-R}+\int_t^{T} g(s,y^n_s)d \hat D_s \bigg].
    		\end{align*}
      For any $\hat \tau \in \mathcal{T}$, applying It\^{o}'s lemma to $(u_{s\wedge\hat \tau}^{n+1})^2: = |y_{s\wedge\hat \tau}^{n+1} - y_{s\wedge\hat \tau}^{n}|^2$ gives
    		\begin{align*}
    			&(u_{t\wedge\hat \tau}^{n+1})^2 + \int_{{s\wedge\hat \tau}}^{T\wedge\hat \tau}| z_s^{n}-  z_s^{n-1}|^2 ds \\
                    &\quad = 2\int_{{s\wedge\hat \tau}}^{T\wedge\hat \tau} \big(g(s,y_s^n)- g(s,y_s^{n-1})\big) u_{s}^{n+1}d\hat D_s - 2 \int_{{s\wedge\hat \tau}}^{T\wedge\hat \tau} u_s^{n+1}(z_s^{n}-  z_s^{n-1}) dW_s.
    		\end{align*}
    	It is clear by Aurand and Huang~\cite[Lemma A.1]{Aurand2023} that the above stochastic integral is a uniformly integrable martingale and has zero expectation. By taking expectations on both sides, together with condition~\ref{item:A1} it follows that 
    		\begin{align*}
    			&\mathbb{E} \big[(u_{t\wedge\hat \tau}^{n+1})^2\big]  + \mathbb{E}\bigg[\int_{t\wedge\hat \tau}^{T\wedge\hat \tau}| z_s^{n}-  z_s^{n-1}|^2 ds\bigg]\\
                &\quad \le  2\mathbb{E}\bigg[\int_{t\wedge\hat \tau}^{T\wedge\hat \tau} C_g u_{s}^n u_{s}^{n+1}e^{-\rho s}d\hat {D}_s\bigg] \\
    			&\quad \le 
       C_g \mathbb{E}\bigg[\int_{t\wedge\hat \tau}^{T\wedge\hat \tau} (u_{s}^n)^2 e^{-\rho s}d\hat {D}_s\bigg] + C_g \mathbb{E}\bigg[\int_{t\wedge\hat \tau}^{T\wedge\hat \tau} (u_{s}^{n+1})^2 e^{-\rho s}d\hat {D}_s\bigg].
    		\end{align*}
       Lemma~\ref{lem:A2} implies 
    		\begin{align}\label{eq:thmA.3pf_1}
    			\mathbb{E}[(u_t^{n+1})^2] \le  C_g e^{C_g K} \mathbb{E}\bigg[\int_{t}^T  (u_{s}^n)^2 e^{-\rho s} d\hat {D}_s\bigg],\quad t\in[0,T].
    		\end{align}
    Consider $\zeta_{s}:=\inf\{t\ge0: \int_0^t e^{-\rho u}\hat D_{u}\ge s \}\wedge T$ for $s\in[0,K]$, which is bounded.
    From the proof of Lemma~\ref{lem:A2},  it is obvious that
    \begin{align}\label{eq:thmA.3pf_2}
    			\mathbb{E}[(u_{\zeta_t}^{n+1})^2] \le  C_g e^{C_g K} \mathbb{E}\bigg[\int_{\zeta_t}^T  (u_{s}^n)^2 e^{-\rho s} d\hat {D}_s\bigg]
    		\end{align}
    holds for $t\in[0,K]$. Iterating \eqref{eq:thmA.3pf_1} and \eqref{eq:thmA.3pf_2} gives for $t\in[0,T]$, 
    \begingroup
    \begin{align}
        \mathbb{E}[(u_t^{n+1})^2] &\le  C_g e^{C_g K} \mathbb{E}\bigg[\int_{t}^T  (u_{s}^n)^2 e^{-\rho s} d\hat {D}_s\bigg]\nonumber\\
        &\le C_g e^{C_g K} \mathbb{E}\bigg[\int_{0}^T  (u_{s}^n)^2 e^{-\rho s} \mathds{1}_{ \zeta_0 \le s \le \zeta_K}  d\hat {D}_s\bigg]\nonumber\\
        &= C_g e^{C_g K} \mathbb{E}\bigg[\int_{0}^{K}  (u_{\zeta_s}^n)^2 ds \bigg] =  C_g e^{C_g K} \int_{0}^{K} \mathbb{E}[ (u_{\zeta_s}^n)^2] ds\nonumber\\
        &\le (C_g e^{C_g K})^2 \int_{0}^{K}  \mathbb{E}\bigg[\int_{\zeta_s}^T  (u_{t}^{n-1})^2 e^{-\rho t} d\hat {D}_t\bigg] ds\nonumber\\
        &= (C_g e^{C_g K})^2 \int_{0}^{K} \int_{t_1}^{K} \mathbb{E}[ (u_{\zeta_{t_2}}^{n-1})^2] d t_2 d t_1 \nonumber\\
        &\cdots\nonumber\\
        &\le (C_g e^{C_g K})^n \int_{0}^{K} \int_{t_1}^{K}\cdots \int_{t_{n-1}}^K \mathbb{E}[ (u_{\zeta_{t_n}}^{1})^2] d t_{n} dt_{n-1}\cdots d t_1\nonumber\\
        &\le (C_g e^{C_g K})^n \mathbb{E}\bigg[\sup_{t\in[0,T]}(u_t^1)^2\bigg] \frac{K^n}{n!}.\nonumber
    \end{align}
    \endgroup
    
    The above estimate implies that $\{(y^n,z^n)\}_{n\in \mathbb{N}}$  is Cauchy in $\mathcal{B}_2$, due to the fact that $y^1\in\mathcal{S}_2$. It follows that $\lim_{n\rightarrow\infty}(y^n,z^n) = (y,z) \in \mathcal{B}_2$. 
    		By the standard arguments, we show that $(y,z)$ solves \eqref{dfn:BSDE1}. 
    		The uniqueness of $(y,z)$ follows standard arguments with comparison theorem (see Lemma~\ref{lem:comparision}).  
	\end{proof}

    \subsubsection{Step 2: Comparison theorem and priori bounds.}
	\begin{dfn}\label{def:BSDE_sup_sub_sol}
        Adapted processes $(Y_t,Z_t)_{t\ge 0}$ are called a supersolution (resp. subsolution) to \eqref{dfn:BSDE1} 
        if 
		\begin{align*}
			Y_t = \xi_{\tau}^{1-R} + \int_t^{\tau} g_{\EZ}(s,Y_s)dD_s - \int_t^{\tau}Z_s dW_s + \int_t^{\tau} dA_s\quad \bigg(\text{resp.}-\int_{t}^{\tau} dA_s\bigg),
		\end{align*}
		where 
            $(A_t)_{t\ge0}$ is a right continuous left limit process in ${\cal P}$, and it holds that $(Y_t+ \int_0^{t \wedge\tau}g_{\EZ}(s,Y_s)dD_s)_{t\ge 0}$ is a local-supermartingale (resp. -{submartingale}). A solution of \eqref{dfn:BSDE1} is both a supersolution and a subsolution.
	\end{dfn}

The following result shows a comparison result for \eqref{dfn:BSDE1} without restricting ourselves to the conditions given by Lemma~\ref{lem:BSDE_exist_Lip}. As it is a direct consequence of Kobylanski~\cite[Theorem 2.6]{kobylanski2000backward}, we omit the proofs here.
	
\begin{lem}\label{lem:Bcomparision} Let $(Y,Z)$ (resp. $(\widetilde{Y},\widetilde{Z})$) be a supersolution (resp. subsolution) to \eqref{dfn:BSDE1} (resp.~to the BSDE obtained from \eqref{dfn:BSDE1} by replacing $(g_{\EZ},\xi_{\tau}^{1-R})$ with $(\widetilde g,\widetilde\xi_{\tau}^{1-R})$). Assume that both $(Y,Z)$ and $(\widetilde{Y},\widetilde{Z})$ are of class $\mathcal{B}_2$ and one of the following conditions hold:
		\begin{enumerate}[label=\roman*.]
			\item $\tilde{g}(t,Y_t) \le g(t,Y_t)$  and $\tilde{g}(t,y)$ is nonincreasing in $y$, $d\mathbb{P}\times dt$-a.e., or
			\item $\tilde{g}(t,\widetilde{Y}_t) \le g(t,\widetilde{Y}_t)$  and $g(t,y)$ is  nonincreasing in $y$, $d\mathbb{P}\times dt$-a.e..
		\end{enumerate}
		If $\xi_{\tau} \ge \tilde{\xi}_{\tau}$, then $\widetilde{Y}_t \le Y_t$ for any $t \ge 0$ $\mathbb{P}$-a.s.
\end{lem}

    We introduce two auxiliary processes that act as lower and upper bounds for the solution to the BSDE \eqref{dfn:BSDE1}. 

	\begin{lem}[Upper Bound]\label{lem:overlineY} There exists a process $\overline Y\in {\cal P}$ given by 
		\begin{align*}
			\overline{Y}_t: = \bigg(\mathbb{E}_t\bigg[ \int_t^{\tau} e^{-\rho s} dD_s + \xi_{\tau}\bigg]\bigg)^{1-R}, \quad t \ge 0,
		\end{align*}
        such that $\mathbb{E}[\sup_{t\ge 0}\overline{Y}_t^{\frac{2}{1-R}}]<\infty$ (in particular $(\overline{Y}_t)_{t\ge 0}\in\mathcal{S}_2$.
	\end{lem}
    {The proof is a standard application of Doob's maximal inequality.}
	\begin{lem}[Lower Bound]\label{lem:underlineY}  For the triplet $(T,\hat D,\hat \xi_T)$ satisfying the conditions in Lemma~\ref{lem:BSDE_exist_Lip}, there exists $(\underline{Y}_t)_{t\in[0,T]}\in\mathcal{S}_2$ satisfying
		\begin{align*}
			\underline{Y}_t: = \mathbb{E}_t\bigg[ \int_t^{T} \underline{g}(s,\underline{Y}_s) d\hat D_s + \hat \xi_T^{1-R} \bigg],\quad  \underline{g}(t,v) =e^{-\rho t}(1-R v),\quad t\in[0,T].
		\end{align*}
		In addition, $\underline{Y}_t \ge e^{-R \|\int_0^T e^{-\rho s} d\hat D_s \|_{\infty}} \mathbb{E}_t[\hat \xi^{1-R}_T]$ $\mathbb{P}$-a.s., for all $t\in[0,T]$.
	\end{lem}
     {The existence of $\underline{Y}$
  and the stated bound follow from a direct application of the comparison theorem of BSDEs in Lemma~\ref{lem:Bcomparision}.}

    \subsubsection{Step 3: EZ aggregator with fixed time horizon and bounded control.}

	We start by approximating the EZ aggregator $g_{\text{EZ}}$ in \eqref{dfn:g_EZ} using Lipschitz aggregators that satisfy conditions~\ref{item:A1} and \ref{item:A2}. The proof of the construction of approximation is straightforward, so we omit it in the following lemma.
	\begin{lem}\label{lem:appro_agg_gn}
		The sequence of aggregators
		\begin{align*}
			g_n(\omega,t,y) = \inf_{x > 0 }\{g_{\text{EZ}}(\omega,t,x) + n R e^{-\rho t}(x-y)\}
		\end{align*}
		is well-defined for each $n\ge 1$, satisfying both \ref{item:A2} and the following:
		\begin{itemize}
			\labitem{i.}{item:prop_agg_i} Monotonicity in $n$: $\forall~y \ge 0$, $g_n(\omega,t,y)$ increase in $n$, $d\mathbb{P}\times dt$-a.s.;
			\labitem{ii.}{item:prop_agg_ii}  Lipshcitz condition \ref{item:A1}: $|g_n(\omega,t,y_1)-g_n(\omega,t,y_2)|\le nR e^{-\rho t}|y_1-y_2|$ for all $y_1, y_2\ge 0$, $d\mathbb{P}\times dt$-a.s.;
			\labitem{iii.}{item:prop_agg_iii}  Monotonicity in $y$: $g_n$ is nonincreasing in $y$, $d\mathbb{P}\times dt$-a.s..
		\end{itemize}
	\end{lem}
	\begin{lem}\label{lem:BSDE_exist_g}
		For the triplet $(T,\hat D,\hat \xi_T)$ satisfying the conditions in Lemma~\ref{lem:BSDE_exist_Lip}, we further assume $\hat \xi_{T}>C$ $\mathbb{P}$-a.s. for some constant $C$. The BSDE 
        \begin{align}\label{eq:BSDE3}
        Y_t = \hat \xi_{T}^{1-R} + \int_{t}^{T} g_{\EZ}(s, Y_s) \, d\hat D_s - \int_{t}^{T} Z_s dW_s, \quad t \in[0,T],
        \end{align}
        has a unique solution $(y_t,z_t)_{t\in[0,T]}\in\mathcal{B}_2$.
	\end{lem}
	\begin{proof}
        For given $(\omega,t)$, let $(g_n(\omega,t,y))_{n\in\mathbb{N}}$ be the sequence constructed in Lemma~\ref{lem:appro_agg_gn}.  It follows from Lemma~\ref{lem:BSDE_exist_Lip} that for each $n\ge1$, 
        \begin{align}\label{eq:BSDE3_1}
        y^n_t = \hat \xi_{T}^{1-R} + \int_{t}^{T} g_{n}(s, y^n_s) \, d\hat D_s - \int_{t}^{T} z^n_s dW_s, \quad t \in[0,T],
        \end{align}
        has a unique solution $(y_t^n,z_t^n)_{t\in[0,\tau]} \in \mathcal{B}_2$.  
		Lemma~\ref{lem:appro_agg_gn} \ref{item:prop_agg_i} and \ref{item:prop_agg_iii} and Lemma~\ref{lem:Bcomparision} yield that for $n\ge1$, $y_t^1(\omega)\le y_t^{n}(\omega)\le y_t^{n+1}$, $d\mathbb{P}\times dt$-a.s.
	Note that $y^1$ is identical to $\underline{Y}$ defined in Lemma~\ref{lem:underlineY}, 
    since $g_1$ and $\underline{g}$ are identical. 
    Under the condition that $\hat \xi_{T}\in L^2(\mathcal{F}_T)$ and $\int_0^T e^{-\rho s} d\hat D_s\le K$ a.s., by Lemma~\ref{lem:underlineY} and the assumption that $\hat \xi_{T}>C$ $\mathbb{P}$-a.s., there is a constant $\hat C$ such that $y^1_t = \underline{Y}_t \ge e^{-R K} \mathbb{E}_t[\hat \xi_T^{1-R}] \ge \hat C $ for $t\in[0,T]$ a.s. {On the other hand, the upper bound $Y^n \le \overline{Y}$ is established by applying It\^o's formula to $\psi := (Y^n)^{1/(1-R)}$ and using a standard BSDE comparison argument.}



		
	{Now for each $n\ge1$, $\hat C \le y_t^1(\omega)\le y_t^{n}(\omega)\le y_t^{n+1}\le \overline{Y}_t$, $d\mathbb{P}\times dt$-a.s.. Hence there exists an $\mathcal{F}_t$-progressively measurable $y$ satisfying $\lim_{n\rightarrow\infty} y_t^n(\omega) = y_t(\omega)$, $d\mathbb{P}\times dt$-a.s.} 
		Hence, $\mathbb{E}[\sup_{s\in[0,T]}|y_s|^2] \le \mathbb{E}[\sup_{s\in[0,T]}|\overline{Y}_s|^2]<\infty.$
        In order to take limit of $(z^n_t)_{t\in[0,T]}$, we apply It\^{o}'s lemma for $(y_t^n)^2$ to have
		\begin{align*}
			\mathbb{E}\bigg[ \int_{0}^{T} |z^n_s|^2 ds\bigg]&= \mathbb{E}[\hat \xi_T^{2(1-R)}] + 2 \mathbb{E}\bigg[\int_0^{T} y_s^ng_n(s,y_s^n)d  \hat D_s \bigg] \nonumber \\
			&\le \mathbb{E}[\hat \xi_T^{2(1-R)}] + 2 \mathbb{E}\bigg[\int_0^{T} y_s^ng_{\text{EZ}}(s,y_s^1)d \hat D_s \bigg] \\
            &\le \mathbb{E}[\hat \xi_T^{2(1-R)}]+ 2\hat C^{\frac{-R}{1-R}} \mathbb{E}\bigg[\sup_{s\in[0,T]}|\overline{Y}_s|^2\bigg]  \bigg\|\int_0^T e^{-\rho s} d \hat D_s \bigg\|_{\infty} <\infty,  \nonumber
		\end{align*}
  where we use the fact that $0\le g^{n}(t, y^n) \le g_{\text{EZ}}(t,y^n) \le g_{\text{EZ}}(t,y^1) \le \hat C^{\frac{-R}{1-R}}e^{-\rho t}$ for any $n$ and $t\in[0,T]$.
  Therefore, there exists $z\in\mathcal{M}_2$ and a sub sequence $(z^{n_j})_j$ of $(z^n)_n$ such that $z^{n_j} \rightharpoonup z$ weakly in $\mathcal{M}_2.$ 

Passing to a subsequence if necessary, we can show that the whole sequence $(z^n)_n$ converges strongly to $z$ in $\mathcal{M}_2$ by 
DCT. Moreover, $y$ is a continuous process because $y^n$ has continuous paths. 
We can now pass the limit in \eqref{eq:BSDE3_1} 
so that $(y_t,z_t)_{t\in[0,T]}\in\mathcal{B}_2$ is a solution to \eqref{eq:BSDE3}, as claimed.  
	\end{proof}

    \subsubsection{Step 4: EZ aggregator with random time horizon \& unbounded control.}
	\begin{lem}\label{lem:BSDE_exist_g_tau}
        Assume that there is a constant $C>0$ such that $\xi_{\tau}>C$ $\mathbb{P}$-a.s.. Then the BSDE~\eqref{dfn:BSDE1} has a unique solution $(Y_t,Z_t)_{t\ge 0}\in\mathcal{B}_2$. In addition, it holds that $Y$ is continuous, $Y\ge C$ $\mathbb{P}$-a.s. and $\mathbb{E}[\sup_{t\ge 0} (Y_t)^{\frac{2}{1-R}}]<\infty$.
	\end{lem}
	\begin{proof}
		
		For each $n\in\mathbb{N}$, we construct a solution $(Y_t^n,Z_t^n)_{t\ge0} \in \mathcal{B}_2$ to the BSDE
		\begin{align}\label{eq:Y_n_construct}
			Y_t^n = \xi_{\tau}^{1-R} + \int_{t \wedge {\tau}}^{{\tau}} \mathds{1}_{s\in[0,n\wedge\tau^n]} g_{\text{EZ}}(s,Y_s^n)dD_s - \int_{t \wedge {\tau}}^{{\tau}} Z_s^n dW_s, \quad t\ge0, 
		\end{align}
		where $\tau^n: = \inf\{t\ge 0: \int_0^{t}e^{-\rho s}d D_s \ge n \}\in\mathcal{T}$. 
		We denote by $\overline{\tau}^n := {\tau}\wedge\tau^n$. 
		Lemma~\ref{lem:BSDE_exist_g} implies that there exits a unique solution $(y_t,z_t)_{t\in[0,n]}\in\mathcal{B}_2$ to 
		\begin{align*}
			y_t = \mathbb{E}_n[ \xi_{\tau}^{1-R}] + \int_{t}^{n} \mathds{1}_{s\in[0,\overline{\tau}^n]} g_{\text{EZ}}(s,y_s)dD_s - \int_t^n z_s  dW_s.
		\end{align*}
		The conditions in Lemma~\ref{lem:BSDE_exist_g} are satisfied as $ \int_{0}^n \mathds{1}_{s\in[0,\overline{\tau}^n]} e^{-\rho s}dD_s \le n$ $\mathbb{P}$-a.s.. Moreover, under the condition that $\xi_{\tau}^{1-R}\in L^2_+(\mathcal{F}_{\tau})$, the martingale representation theorem implies that there exists $(\eta_t)_{t\ge 0} \in \mathcal{M}_2$ such that 
		\begin{align*}
			\mathbb{E}_{t}[\xi_{\tau}^{1-R}] = \xi_{\tau}^{1-R}- \int_t^{{\tau}} \eta_s dW_s,\text{ on }\{t<{\tau}\};\quad \eta_t =  0   \text{ on }\{t >{\tau} \}.
		\end{align*}
		So we construct $(Y_t^n,Z_t^n)_{t\ge0}\in \mathcal{B}_2$ as
		\begin{align*}
			Y_t^n = y_t \mathds{1}_{t\in[0,n\wedge\overline{\tau}^n]} + \mathbb{E}_{t}[\xi_{\tau}]\mathds{1}_{t\in(n\wedge\overline{\tau}^n,\infty)},\quad Z_t^n =  z_t \mathds{1}_{t\in[0,n\wedge\overline{\tau}^n]} + \eta_t\mathds{1}_{t\in(n\wedge\overline{\tau}^n,\infty)}.
		\end{align*}
		Next, we aim to show that $(Y^n,Z^n)_{n\in\mathbb{N}}$ is Cauchy in $\mathcal{B}_2$. 
        
        For any $m>n$, let $\Delta Y_t: =Y_t^m - Y_t^n$, $\Delta Z_t : =Z_t^m-Z_t^n$. Then it suffices to show that $|(\Delta Y, \Delta Z)|_{\mathcal{B}_2}\rightarrow 0$ as $m,n\rightarrow \infty$.
	
		For $0\le t \le n$, by \eqref{eq:Y_n_construct} and the fact that $\tau^{n}(\omega)\le  \tau^{m}(\omega)$ we have 
		\begin{align*}
			&\Delta Y_{t}-\Delta Y_{n\wedge {\tau}}\\
            &\quad = \int_{t\wedge \overline{\tau}^n}^{n\wedge \overline{\tau}^n} \Delta g_{\text{EZ}}^{m,n}(s)dD_s +   \int_{n\wedge \overline{\tau}^n}^{n\wedge \overline{\tau}^m}  g_{\text{EZ}}(s,Y_s^m) dD_s- \int_{t\wedge {\tau}}^{n\wedge {\tau}} \Delta Z_s dW_s,
		\end{align*}
        where $\Delta g_{\text{EZ}}^{m,n}(s): =g_{\text{EZ}}(s,Y_s^m) - g_{\text{EZ}}(s,Y_s^n)$.
		It follows from It\^{o}'s lemma that 
		\begin{align*}
			&|\Delta Y_{t\wedge{\tau}}|^2 +  \int_{t\wedge{\tau}}^{n\wedge {\tau}} |\Delta Z_s|^2 ds =  2  \int_{t\wedge\overline{\tau}^n}^{n\wedge \overline{\tau}^n} \Delta Y_s \Delta g_{\text{EZ}}^{m,n}(s) dD_s \\
			&\quad + 2 \int_{n\wedge \overline{\tau}^n}^{n\wedge \overline{\tau}^m}  \Delta Y_s g_{\text{EZ}}(s,Y_s^m) dD_s- 2 \int_{t\wedge{\tau}}^{n\wedge{\tau}} \Delta Y_s \Delta Z_s dW_s +| \Delta Y_{n\wedge {\tau}}|^2 . 
		\end{align*}
		By the fact that $g_{\text{EZ}}$ is non-increasing in its second argument, so that $\Delta g_{\text{EZ}}^{m,n}(s)$ should have opposite sign of $\Delta Y_s$. This, together with the non-decreasing property of $D$, gives  $\int_{t\wedge \overline{\tau}^n}^{n\wedge \overline{\tau}^n} \Delta Y_s \Delta g_{\text{EZ}}^{m,n}(s)dD_s\le0$. 
		\begin{align*}
			&\int_{n\wedge \overline{\tau}^n}^{n\wedge \overline{\tau}^m}  \Delta Y_s g_{\text{EZ}}(s,Y_s^m) dD_s = 	\int_{n\wedge \overline{\tau}^n}^{n\wedge \overline{\tau}^m}  \Delta Y_s  \big(\Delta g_{\text{EZ}}^{m,n}(s)+ g_{\text{EZ}}(s,Y^n_s)\big)  dD_s \\
			&\quad \le 	\int_{n\wedge \overline{\tau}^n}^{n\wedge \overline{\tau}^m}  \Delta Y_s   g_{\text{EZ}}(s,Y^n_s)  dD_s = \int_{n\wedge \overline{\tau}^n}^{n\wedge \overline{\tau}^m}  \Delta Y_s g_{\text{EZ}}(s,\mathbb{E}_s[\xi])  dD_s\\
            &\quad \le 2 \sup_{t\in[0,{\tau}]}|\overline{Y}_t| C_4^{\frac{-R}{1-R}} \int_{n\wedge \overline{\tau}^n}^{n\wedge \overline{\tau}^m} e^{-\rho s} dD_s ,
		\end{align*}
		where the second line follows by construction that $Y^n_s = \mathbb{E}_{s}[\xi]$ for $s>n\wedge\tau^n$, the last line follows by the fact that $\xi_{\tau} \ge C_4$ $\mathbb{P}$-a.s. for some constant $C_4>0$, as well as the priory bound $Y^m$, $Y^n\le \overline{Y}$ given in Lemma~\ref{lem:overlineY}. 	
        Hence,
		\begin{align*}
			&|\Delta Y_{t\wedge{\tau}}|^2 +  \int_{t\wedge {\tau}}^{n\wedge {\tau}} |\Delta Z_s|^2 ds\\
            &\quad \le | \Delta Y_{n\wedge {\tau}}|^2 +2  \sup_{t\in[0,{\tau}]}|\overline{Y}_t| C_4^{\frac{-R}{1-R}} \int_{n\wedge \overline{\tau}^n}^{n\wedge \overline{\tau}^m} e^{-\rho s} dD_s - 2\int_{t\wedge {\tau}}^{n\wedge {\tau}} \Delta Y_s \Delta Z_s dW_s,
		\end{align*}
		where the above local martingale is a uniformly integrable martingale. 
        Hence,
		\begin{align}
            \mathbb{E}\bigg[\int_{0}^{n\wedge {\tau}} |\Delta Z_s|^2 ds \bigg] &\le \mathbb{E}[ | \Delta Y_{n\wedge  {\tau}}|^2] + 2 C_4^{\frac{-R}{1-R}}  \mathbb{E}\bigg[ \sup_{t\in[0,{\tau}]}|\overline{Y}_t| \int_{n\wedge \overline{\tau}^n}^{n\wedge \overline{\tau}^m} e^{-\rho s} dD_s\bigg] \nonumber \\
			&=: \mathbb{E}[ | \Delta Y_{n\wedge {\tau}}|^2] + 2C_4^{\frac{-R}{1-R}} \operatorname{I}^{n,m}. \label{eq:temp_est_YD}
            \end{align}
            Moreover, it holds that 
            \begin{align}
            \operatorname{I}^{n,m}&\leq \mathbb{E}\bigg[\sup_{t\in[0,{\tau}]}|\overline{Y}_t|^2\bigg]^{\frac{1}{2}}  \mathbb{E}\bigg[\bigg( \int_{n\wedge \overline{\tau}^n}^{n\wedge \overline{\tau}^m}e^{-\rho s}  dD_s \bigg)^2\bigg]^{1/2}\leq C_5(m-n).\label{eq:temp_est_YD_1}
            \end{align}
            where the constant $C_5>0$ (not depending on $m$,$n$,$\Delta Y$,$\Delta Z$) exists because $D\in{\cal A}$ has continuous path.
            In a similar manner, we have that 
            \begin{align}
            &\mathbb{E}\bigg[ \sup_{t\in[0,n]} |\Delta Y_{t\wedge {\tau}}|^2  \bigg]\nonumber\\
            &\quad \le \mathbb{E}[ | \Delta Y_{n\wedge {\tau} }|^2]+2C_4^{\frac{-R}{1-R}} \operatorname{I}^{n,m}+ 2\mathbb{E}\bigg[\sup_{t\in[0,n]}\int_{0}^{t\wedge  {\tau}} \Delta Y_s \Delta Z_s dW_s \bigg]\nonumber\\
            &\quad =: \mathbb{E}[ | \Delta Y_{n\wedge {\tau} }|^2]+2C_4^{\frac{-R}{1-R}} \operatorname{I}^{n,m}+2 \operatorname{II}^{n,m},\label{eq:temp_est_YD_2}
		\end{align}
        where $\operatorname{I}^{n,m}$ is given in \eqref{eq:temp_est_YD}, and that
        \begin{align}
            \operatorname{II}^{n,m}
            &\leq C_6 \mathbb{E}\bigg[\bigg(\int_{0}^{n\wedge {\tau}} |\Delta Y_s|^2 |\Delta Z_s|^2 ds \bigg)^{1/2} \bigg]\nonumber \\
            &\leq  \frac{1}{4}\mathbb{E}\Big[ \sup_{t\in[0,n]} |\Delta Y_{t\wedge {\tau}}|^2  \Big] + C_6^2 \mathbb{E}\bigg[\int_{0}^{n\wedge  {\tau}} |\Delta Z_s|^2 ds \bigg],\label{eq:temp_est_YD_3}
        \end{align}
        where the first inequality follows from the inequality of arithmetic and geometric means, and the constant $C_6>0$ (not depending on $m$,$n$,$\Delta Y$,$\Delta Z$) exists by the Burkholder-Davis-Gundy's inequality. 
        
        Combining \eqref{eq:temp_est_YD}, \eqref{eq:temp_est_YD_1}, \eqref{eq:temp_est_YD_2}, and \eqref{eq:temp_est_YD_3}, we have some constant $C_7>0$ (that does not depend on $m$,$n$) such that 
		\begin{align*}
			\mathbb{E}\bigg[ \sup_{0\le t\le n}| \Delta Y_{t\wedge {\tau}}|^2+\int_{0}^{n\wedge {\tau}} |\Delta Z_s|^2 ds \bigg]  \le C_7 \Big(\mathbb{E}\big[ | \Delta Y_{n\wedge {\tau}}|^2\big]+(m-n)\Big).
		\end{align*}
  Next, for $n<t \le m$, we have $\Delta Y_{t}
            =  \int_{t \wedge \overline{\tau}^n}^{m\wedge  \overline{\tau}^m} g_{\text{EZ}}(s,Y_s^m)dD_s - \int_{t\wedge {\tau}}^{m\wedge {\tau}} {\Delta}Z_s dW_s,$
		where the equality holds by the fact that  $\Delta Z_s = 0$ on $s > m$ since $Z^m_s = Z^n_s =\eta_s$ for $s>m\wedge{\tau} \ge m \wedge \overline{\tau}^m$.  Similar argument as for the case $0\le t \le n$ gives the existence of $C_8>0$ independent of $m$, $n$ such that
		\begin{align*}
			\mathbb{E}\bigg[ \sup_{n\le t\le m}| \Delta Y_{t\wedge {\tau}}|^2+\int_{n \wedge {\tau}}^{m \wedge {\tau}} |\Delta Z_s|^2 ds \bigg]  \le C_8 (m-n).
		\end{align*}
    
		
    \noindent Finally, $t>m\wedge{\tau}\ge m\wedge\overline{\tau}^m$, $Y_t^m = Y_t^n = \mathbb{E}_{t\wedge{\tau}}[\xi_{\tau}^{1-R}]$ and $Z_t^m = Z_t^n = \eta_t$, and
		\begin{align}\label{eq:mn_cauchy_1}
			\mathbb{E}\bigg[ \sup_{t\ge 0}| \Delta Y_{t\wedge {\tau}}|^2+\int_{0}^{{\tau}} |\Delta Z_s|^2 ds \bigg] \le C_7 \mathbb{E}\big[ | \Delta Y_{n\wedge {\tau}}|^2\big] + C_9(m-n),
		\end{align}
            with $C_9:=(C_7+C_8)$.
		By the DCT, we have
		\begin{align*}
			\lim\limits_{n,m\rightarrow\infty} \mathbb{E}\big[ | \Delta Y_{n\wedge {\tau}}|^2\big] &= \mathbb{E}\Big[ 	\lim\limits_{n,m\rightarrow\infty}|Y^m_{n\wedge {\tau}} - Y^n_{n\wedge {\tau}}|^2 \Big]  \\
            &=\mathbb{E}\Big[ 	\lim\limits_{n\rightarrow\infty}|Y_{n\wedge\tau}-Y^n_{n\wedge {\tau}}|^2\Big] = \mathbb{E}\big[ | \xi_{\tau} - \xi_{\tau}|^2\big] = 0. 
		\end{align*}
		Since the second term in \eqref{eq:mn_cauchy_1} vanishes as $m,n\rightarrow\infty$, we conclude that $\|(\Delta Y,\Delta Z)\|_{\mathcal{B}_2} \rightarrow 0$ as $m$, $n\rightarrow\infty$, i.e., $(Y^n,Z^n)_n$ is Cauchy in $\mathcal{B}_2$. Since $\mathcal{B}_2$ is complete, $\lim_{n\rightarrow\infty}(Y^n,Z^n) = (Y,Z)\in\mathcal{B}_2$ exists.
        
        Finally, we show the limit $(Y,Z)$ solves \eqref{dfn:BSDE1}. For each $n\in\mathbb{N}$, $(Y^n,Z^n) \in \mathcal{B}_2$ solves \eqref{eq:Y_n_construct}. 
    It is not difficult to pass the limit $n\rightarrow\infty$ in the above equation and show that each term of it converges to a corresponding term in \eqref{dfn:BSDE1} for almost all $\omega\in\Omega$ uniformly in $t\ge0$, thus we omit the details here.  
    \end{proof}

    \subsubsection{Proof of Proposition \ref{pro:recurs_BSDE}}\label{proof:final:pro:recurs_BSDE}
    \begin{proof}
    Fix $(x,D)\in \mathbb{R}_+\times {\cal A}^x$. Set $V:=V^{x,D}$ 
     for notational convenience. \color{black} 
   For each $n\in\mathbb{N}$, set $\xi_{\tau}^{(n)} := \frac{1}{n}\vee \xi_{\tau}$. By Lemma~\ref{lem:BSDE_exist_g_tau}, there exists a unique $L^2$-solution $(Y^n,Z^n)\in {\cal B}_2$ of \eqref{dfn:BSDE1} (with $\xi_\tau$ replaced by~$\xi_{\tau}^n$) such that
		\begin{align*}
			Y^n_t = \mathbb{E}_t \bigg[ \int_{t\wedge \tau}^{\tau} e^{-\rho s}(1-R) 	(Y^n_s)^{\frac{-R}{1-R}}dD_s + (\xi^{(n)}_{\tau})^{1-R}\bigg],\quad  t\ge0.
		\end{align*} 
        Note that $Y_t^n$ is non-increasing in $n \in \mathbb{N}$ for every $t\geq0$ (see Lemma~\ref{lem:Bcomparision}) and $Y^n_t > 0$ $\mathbb{P}$-a.s.~for every $t\geq0$ and $n\in \mathbb{N}$. Hence $V_t:= \lim_{n\rightarrow \infty} Y_t^n$ for $t\geq0$ is well-defined. Moreover, 
        the monotone convergence theorem (MCT) implies 
		\begin{align*}
			V_t =& \lim_{n\rightarrow\infty}\mathbb{E}_t \bigg[ \int_{t\wedge \tau}^{\tau} e^{-\rho s}(1-R) 	(Y^n_s)^{\frac{-R}{1-R}}dD_s + (\xi^{(n)}_{\tau})^{1-R}\bigg] \\
			=& \mathbb{E}_t \bigg[ \int_{t\wedge \tau}^{\tau} e^{-\rho s}(1-R) 	(V_s)^{\frac{-R}{1-R}}dD_s + \xi_{\tau}^{1-R}\bigg],\quad t\geq 0.
		\end{align*} 
        Since $\xi_{\tau}>0$ $\mathbb{P}$-a.s. (see \eqref{def:bankrupcypayment}), 
        $V_t \ge \mathbb{E}_t[\xi_{\tau}^{1-R}]>0$ $\mathbb{P}$-a.s. for all $t\geq0$. 
        
        We claim that $V\in\mathbb{UI}(g_{\text{EZ}},D,\tau,\xi
        _\tau^{1-R})$. Since $Y_0^n$ is non-increasing in $n \in \mathbb{N}$ and $V_0=\lim_{n\rightarrow \infty}Y^n_0$, it holds that
        \[
        \mathbb{E}\bigg[ \int_{0}^{\tau} e^{-\rho s}(1-R) 	(V_s)^{\frac{-R}{1-R}}dD_s\bigg] <  V_0 \le Y_0^n<\infty\quad \mbox{for every $n\in\mathbb{N}$.}
        \]
        
        Similarly, we have $\mathbb{E}[\sup_{t\ge 0} (V_t)^{\frac{2}{1-R}}] \le \mathbb{E}[\sup_{t\ge 0} (Y^n_t)^{\frac{2}{1-R}}]<\infty$. 
        This ensures that $\mathbb{E}[\sup_{t\ge 0} (V_t)^2]<\infty$ by Jensen's inequality with exponent $\frac{1}{1-R}>1$. Hence, $V$ is uniformly integrable and therefore is of class $\mathbb{UI}(g_{\text{EZ}},D,\tau,\xi
        _\tau^{1-R})$.

        It remains to show that $V$ is the unique utility process.
        By the martingale representation theorem, there exists $Z\in \mathcal{P}$ such that $\int_0^{\tau} Z_t^2 dt <\infty$ $\mathbb{P}$-a.s. and $(V,Z)$ solves 
        \eqref{dfn:BSDE1}. Let $V'$ be another utility process of class $\mathbb{UI}(g_{\text{EZ}},D)$ and 
        $Z'\in \mathcal{P}$ be such that $\int_0^{\tau} (Z'_t)^2 dt <\infty$ $\mathbb{P}$-a.s., and $(V',Z')$ solves \eqref{dfn:BSDE1}.
        Here we note that $V_t = V'_t = \xi_{\tau}$ on $\{t\ge \tau\}$. 
        
        Set $(\Delta V, \Delta Z):= (V-V', Z-Z')$. Then ($\Delta V, \Delta Z)$ solves for all $t\geq0$
        \begin{align*}
            \Delta V_t =  \int_{t\wedge\tau}^{\tau} \frac{g_{\text{EZ}}(s,V_s) - g_{\text{EZ}}(s,V'_s)}{\Delta V_s} \Delta V_s \mathds{1}_{\Delta V_s\neq0} dD_s -  \int_{t\wedge\tau}^{\tau} \Delta Z_s dW_s.
        \end{align*}
        Let $\alpha=(\alpha_s)_{s\ge0}$ be defined by $\alpha_s: =  \mathds{1}_{\Delta V_s\neq0}(g_{\text{EZ}}(s,V_s) - g_{\text{EZ}}(s,V'_s))/\Delta V_s$. 
        Since $y\mapsto g_{\text{EZ}}(\cdot,y)$ is decreasing, we have $\alpha\le0$. It then follows that 
        \begin{align*}
            e^{\int_{0}^t \alpha_u du}\Delta V_t =  -  \int_{t\wedge\tau}^{\tau} e^{\int_{0}^s \alpha_u du} \Delta Z_s dW_s\quad \mbox{for all $t\geq0$}.
        \end{align*}
        From the fact that $\alpha\le 0$ and $V,V'\in\mathbb{UI}(g_{\text{EZ}},D)$, the local martingale $\int_0^{t}e^{\int_{0}^s \alpha_u du} \Delta Z_s dW_s$ is a martingale. 
        Hence, $\Delta V_t= 0$ $\mathbb{P}$-a.s.~for all $t\geq0$. 
         
    \end{proof}

    \subsection{Proof of Proposition \ref{pro:Maenhoutdef}}\label{proof:pro:Maenhoutdef}
    \begin{proof}
    Fix $(x,D,\theta)\in \mathbb{R}_+\times {\cal A}^x\times \Theta^D$ (see Definitions \ref{dfn:admissible} and \ref{dfn:admissD}). Set $\tau:=\tau^{x,D}$, $\xi_{\tau}:=\xi_{\tau^{x,D}}$ (see Definition \ref{dfn:bankruptcy}), and ${\cal V}^{\theta}:={\cal V}^{x,D,\theta}$ for notational simplicity.
    
    By H\"older's inequality, 
    \(
    \mathbb{E}[\eta_{\tau}^{\theta}e^{\int_{0}^{\tau} \frac{\theta_u^2}{2\cR}du }\xi_{\tau}]\le  \mathbb{E}[(\eta_{\tau}^{\theta})^2e^{\int_{0}^{\tau} \frac{\theta^2_u}{\cR}du } ]^{\frac{1}{2}} \mathbb{E}[\xi_{\tau}^{{2}}]^{\frac{1}{2}}<\infty. 
    \)
    Furthermore, Jensen's inequality (with exponent 2) ensures that 
    $$\mathbb{E}\bigg[ \int_{0}^{ \tau} \eta_s^{\theta} e^{\int_{0}^{s} \frac{\theta^2_u}{2\cR}du -\rho s} dD_s \bigg]\leq \mathbb{E}\bigg[\bigg( \int_{0}^{ \tau} \eta_s^{\theta} e^{\int_{0}^{s} \frac{\theta^2_u}{2\cR}du -\rho s} dD_s\bigg)^2 \bigg] <\infty.$$
    Hence, the representation 
    \begin{align}\label{eq:thm:rob_rec:0}
			{\cal Y}_t^{\theta}:= \mathbb{E}^{\theta}_t\bigg[ \int_{t\wedge\tau}^{ \tau} e^{\int_{t\wedge \tau}^{s} \frac{\theta^2_u}{2\cR}du -\rho s} dD_s + e^{\int_{t\wedge\tau}^{\tau} \frac{\theta^2_u}{2\cR}du }\xi_{\tau}\bigg],
    \end{align}
    for $t\ge0$, is well-defined. 
    
    By Jensen's inequality, we have $\mathbb{E}[\sup_{t\ge 0} ({\cal Y}^{\theta}_{t\wedge\tau})^2]<\infty$.
    It implies that there exists $\iota_t^{\theta}\in\mathcal{P}$ such that $\int_0^{\infty} (\iota_t^{\theta})^2 dt<\infty$, $\mathbb{Q}^\theta$-a.s., and for $t\geq 0$ 
 		\begin{align}\label{eq:thm:rob_rec:1}
			{\cal Y}^{\theta}_t = \xi_{\tau} + \int_{t\wedge\tau}^{\tau}\bigg( e^{-\rho s} dD_s + \frac{1}{2\cR}{\cal Y}^{\theta}_s\theta^2_s  ds\bigg) - \int_{t\wedge\tau}^{\tau} {\iota}_s^{\theta} dW_s^{\theta}.
		\end{align}
    Since $(M_t)_{t\geq 0}:=(\int_{0}^{t}  {\iota}_s^{\theta}dW_s^{\theta})_{t\ge0}$ is a local martingale under $\mathbb{Q}^{\theta}$, we shall find an increasing sequence of stopping times $\{T_n\}_{n\in\mathbb{N}}$ such that $T_n\ge t$, $\mathbb{Q}^{\theta}(\lim_{n\rightarrow\infty} T_n = \infty) = 1$ and the stopped process $(M_{t\wedge T_n})_{t\ge 0}$ is a martingale under $\mathbb{Q}^{\theta}$. Applying $\mathbb{E}_t^{\theta}$ to both sides of \eqref{eq:thm:rob_rec:1} gives
    \begin{align*}
        {\cal Y}_{t}^{\theta} =  \mathbb{E}^{\theta}_t\bigg[ \int_{t\wedge\tau}^{T_n\wedge\tau} \bigg(e^{-\rho s} dD_s
        +	\frac{1}{2\cR} 
        {\cal Y}^{\theta}_s \theta_s^2 ds\bigg)\bigg]+ \mathbb{E}^{\theta}_t[{\cal Y}_{T_n\wedge\tau}^{\theta}]. 
    \end{align*}
    Then 
    the first two terms on the right-hand side converge as $n\to \infty$ by MCT, while the last term converges by the dominated convergence theorem (DCT). We conclude that ${\cal Y}^{\theta}$ defined via \eqref{eq:thm:rob_rec:0} satisfies \eqref{dfn:rbs_V_theta}. The uniqueness follows from standard localization arguments, and we omit it here.
     
    \end{proof}

    \subsection{Proof of Theorem \ref{thm:rob_rec}}\label{proof:thm:rob_rec}
    
    \begin{lem}\label{lem:rob_rec_strict} 
            Suppose that Assumption \ref{as:recurrent} holds. 
             For any $(x,D)\in \mathbb{R}_+\times {\cal A}^x$, we assume that $\xi_{\tau^{x,D}}>C$ $\mathbb{P}$-a.s. with $C>0$, denote by $(V^{x,D},Z^{x,D})$ the unique $L^2$-solution to \eqref{dfn:BSDE1} (see Lemma \ref{lem:BSDE_exist_g_tau} and Definition \ref{def:BSDEdef}), and set $\mathbb{Y}^{x,D} := (V^{x,D})^{\frac{1}{1-R}}$ and ${\mathbb{Z}}^{x,D} := (\mathbb{Y}^{x,D})^{R}{Z}^{x,D}/(1-R)$. Then for any $\theta\in \Theta^D$ 
            \begin{align}\label{eq:thm:rob_rec_link:2}
                {\cal V}^{x,D,\theta}_t = \mathbb{Y}_t^{x,D} + \mathbb{E}_t^{\theta}\bigg[\int_{t\wedge \tau^{x,D}}^{\tau^{x,D}} \frac{R}{2{\mathbb{Y}}^{x,D}_s} \Big({\mathbb{Z}}^{x,D}_s + \frac{{\mathbb{Y}}^{x,D}_s}{R}\theta_s\Big)^2 e^{\frac{1}{2R}\int_t^{s}\theta^2_u du } d s  \bigg].
            \end{align}
           Consequently, ${\cal V}^{x,D} \ge \mathbb{Y}^{x,D}$. In addition, $\theta^{*;x,D}:= (-R\mathbb{Z}^{x,D}_t/\mathbb{Y}^{x,D}_t)_{t\ge0} \in \Theta^D$, which implies ${\cal V}^{x,D} = {\cal V}^{x,D,\theta^{*;x,D}} = \mathbb{Y}^{x,D}$, 
           and 
           $({\cal V}^{x,D},{\mathbb Z}^{x,D})$ solves
\begin{align}\label{eq:Vrob_BSDE}
    {\cal V}^{x,D}_t = \xi_{\tau^{x,D}} + \int_{t\wedge\tau^{x,D}}^{\tau^{x,D}} e^{-\rho s} dD_s - \int_{t\wedge\tau^{x,D}}^{\tau^{x,D}} \frac{R}{2} \frac{({\mathbb Z}^{x,D}_s)^2}{{\cal V}^{x,D}
    _s} ds -  \int_{t\wedge\tau^{x,D}}^{\tau^{x,D}} \mathbb{Z}^{x,D}_sdW_s.
\end{align}
    \end{lem}
    \begin{proof} 
    Fix $(x,D)\in \mathbb{R}_+\times {\cal A}^x$. Set $(\tau,\xi_{\tau}):=(\tau^{x,D},\xi_{\tau^{x,D}})$,
    \[
    \mbox{$(V,Z):=(V^{x,D},Z^{x,D})$, $\quad (\mathbb Y,\mathbb Z):=(\mathbb{Y}^{x,D},\mathbb{Z}^{x,D}),\quad $ and $\quad {\cal V}:={\cal V}^{x,D}$,}
    \]
    for notational simplicity. For any $\theta\in \Theta^{D}$, we let ${\cal V}^{\theta}:={\cal V}^{x,D,\theta}$.
    
    From Lemma \ref{lem:BSDE_exist_g_tau} and It\^{o}'s lemma, $(\mathbb{Y},\mathbb{Z})(=(V^{\frac{1}{1-R}},\mathbb{Y}^RZ/(1-R)))$ solves
		\begin{align*}
			\mathbb{Y}_t 
			=& \xi_{\tau}+ \int_{t\wedge\tau}^{\tau} e^{-\rho s} dD_s- \frac{R}{2}  \int_{t\wedge\tau}^{\tau} \frac{(\mathbb{Z}_s)^2}{\mathbb{Y}_s}ds-  \int_{t\wedge\tau}^{\tau}\mathbb{Z}_s\theta_s ds  -  \int_{t\wedge\tau}^{\tau} \mathbb{Z}_s dW_s^{\theta},
		\end{align*}
        under $\mathbb{Q}^\theta$. Recall that $({\cal V}^{\theta},{\iota}^{\theta})$ is a solution of \eqref{eq:thm:rob_rec:1} (see Proposition \ref{pro:Maenhoutdef}).~Then
		\begin{align*}
			{\cal V}^{\theta}_t -\mathbb{Y}_t = \int_{t\wedge\tau}^{\tau}\bigg( \frac{R}{2\mathbb{Y}_s} \bigg(\mathbb{Z}_s + \frac{\mathbb{Y}_s}{R}\theta_s\bigg)^2 + \frac{\theta_s^2({\cal V}^{\theta}_s -\mathbb{Y}_s )}{2R}\bigg) ds -  \int_{t\wedge\tau}^{\tau}({\iota}^{\theta}_s - \mathbb{Z}_s) dW_s^{\theta}.
		\end{align*}
		Set $\beta_{t}: = \exp(\frac{1}{{2R}}\int_0^{t}{\theta^2_s} ds)$ for $t\geq 0$. Then using It\^{o}'s lemma, 
		\begin{align}\label{eq:pf:thm:rob_rec_1}
			\beta_{t}(	{\cal V}^{\theta}_t -\mathbb{Y}_t)  = \int_{t\wedge\tau}^{\tau}\beta_{s} \frac{R}{2\mathbb{Y}_s} \Big(\mathbb{Z}_s + \frac{\mathbb{Y}_s}{R}\theta_s\Big)^2ds - (M_{\tau} - M_{t\wedge \tau}), 
		\end{align}
		where $M_t := \int_0^t \beta_{s}({\iota}^{\theta}_s - \mathbb{Z}_s)dW^{\theta}_s$.
		Since $(M_s)_{s\geq0}$ is a $\mathbb{Q}^{\theta}$-local martingale, we shall find an increasing sequence of stopping times $\{T_n\}_{n\in\mathbb{N}}$ such that $T_n\ge t$, $\mathbb{Q}^{\theta}(\lim_{n\rightarrow\infty} T_n = \infty) = 1$ and the stopped process $(M_{s\wedge T_n})_{s\geq t}$ is a $\mathbb{Q}^{\theta}$-martingale. Applying $\mathbb{E}_t^{\theta}$ to both sides of \eqref{eq:pf:thm:rob_rec_1}, we have for $t\ge 0$
		\begin{align*}
			{\cal V}^{\theta}_t -\mathbb{Y}_t =& \mathbb{E}_t^{\theta}\bigg[ \int_{t\wedge \tau}^{T_n\wedge \tau}\frac{R}{2\mathbb{Y}_s} \Big(\mathbb{Z}_s + \frac{\mathbb{Y}_s}{R}\theta_s\Big)^2 e^{\int_{t\wedge \tau}^{s}\frac{\theta^2_u}{2R} du} d s\bigg]\nonumber\\ 
            &+ \mathbb{E}_t^{\theta} [e^{\int_{t\wedge \tau}^{T_n\wedge \tau}\frac{\theta^2_u}{2R} du} {\cal V}^{\theta}_{T_n\wedge \tau}] -\mathbb{E}_t^{\theta} [e^{\int_{t\wedge \tau}^{T_n\wedge \tau}\frac{\theta^2_u}{2R} du}\mathbb{Y}_{T_n\wedge \tau}]=: \operatorname{I}_t^n+\operatorname{II}_t^n - \operatorname{III}_t^n.
		\end{align*}
        {The equality \eqref{eq:thm:rob_rec_link:2} is obtained by taking the limit as $n \to \infty$. The convergence of the terms follows from standard arguments based on the MCT and DCT, for which we omit the technical verification.}
    Moreover, \eqref{eq:thm:rob_rec_link:2} implies that ${\cal V}^{\theta}_t \ge \mathbb{Y}_t = (V_t)^{\frac{1}{1-R}}$
    and equality holds for $\theta^* = -R\mathbb{Z}/\mathbb{Y}$ if $\theta^*\in \Theta^D$. 
    It remains to prove that $\theta^* = -R\mathbb{Z}/\mathbb{Y}\in \Theta^D$. We first show that $(\eta^{\theta^*}_t)_{t\geq 0}$  is a martingale for $t\ge0$.
    By It\^{o}'s lemma, 
    \begin{align}\label{eq:pf_thm3.2_logY}
         \ln\Big(\frac{\mathbb{Y}_{t\wedge\tau}}{\mathbb{Y}_0}\Big) =  \int_{0}^{t\wedge\tau} \frac{R-1}{2} \Big(\frac{\mathbb{Z}_s}{\mathbb{Y}_s}\Big)^2 ds +  \int_{0}^{t\wedge\tau} \frac{\mathbb{Z}_s}{\mathbb{Y}_s} dW_s  -\int_{0}^{t\wedge\tau} \frac{e^{-\rho s}}{\mathbb{Y}_s} dD_s.
    \end{align}
    Since $(\mathbb{Y}_t)_{t\ge0}$ is continuous and never $0$, $1/\mathbb{Y}$ is locally bounded.
    Moreover, by Definition~\ref{def:BSDEdef}~i., it holds that $\theta^* = 0$ on $\{t\ge\tau\}$, Hence we have for\;$t\ge0$,
    \begin{align*}
        &\int_{0}^{t\wedge \tau} \theta^*_s dW_s  -\int_0^{t\wedge \tau}\frac{(\theta^*_s)^2}{2}ds\\
        &\quad = -R\ln\Big(\frac{\mathbb{Y}_{t\wedge\tau}}{\mathbb{Y}_0}\Big) - R\int_{0}^{t\wedge\tau}\bigg( \frac{e^{-\rho s}}{\mathbb{Y}_s} d D_s + \frac{1}{2}\bigg( \frac{\mathbb{Z}_s}{\mathbb{Y}_s}\bigg)^2ds\bigg).
    \end{align*}
    By definition of $\eta^{\theta^*}$ in \eqref{eq:eta}, we have that for $t\ge0$, $\mathbb{P}\text{-a.s.}$,
    \begin{align}\label{eq:pf_thm3.2_eta}
        \eta^{\theta^*}_t =\Big(\frac{\mathbb{Y}_0}{\mathbb{Y}_{t\wedge\tau}}\Big)^R e^{ - \int_{0}^{t\wedge\tau} \frac{R}{\mathbb{Y}_s}e^{-\rho s} d D_s -\frac{R}{2} \int_{0}^{t\wedge\tau}( \frac{\mathbb{Z}_s}{\mathbb{Y}_s})^2ds}\le \mathbb{Y}_0^R C^{-{R}},
    \end{align}
    where we have used that $\xi_{\tau}> C$ $\mathbb{P}$-a.s.. Hence ${\cal V}_t\ge C$ $\mathbb{P}$-a.s., for $t\geq 0$ (see Lemma \ref{lem:BSDE_exist_g_tau}). We know that $(\eta^{\theta^*}_t)_{t\geq 0}$ is a $\mathbb{P}$-local martingale. The uniform bound \eqref{eq:pf_thm3.2_eta} further implies that it is a $\mathbb{P}$-martingale.
    
    Next we verify $\theta^*\in \Theta^D$ (see \eqref{eq:ThetaD}). For the first condition, by \eqref{eq:pf_thm3.2_logY}
    \begin{align*}
        \mathbb{E}[(\eta^{\theta^*}_{\tau})^2e^{\int_0^{\tau} \frac{(\theta^*_s)^2}{R} ds} \xi_{\tau}^{2}] = \mathbb{E}[e^{ - 2R\int_{0}^{\tau} \frac{1}{\mathbb{Y}_s}e^{-\rho s} dD_s} \mathbb{Y}_{\tau}^{2(1-R)}Y_0^{2R}]<\infty,
    \end{align*}
    where we have used the fact that ${\cal F}_0$ is trivial (see Section \ref{sec:2}) in the inequality. 

    For the last condition, it holds that
    \begin{align*}
        &\mathbb{E}\bigg[\bigg(\int_{0}^{\tau} e^{\int_{0}^{t} \frac{(\theta^*_s)^2}{2R}ds -\rho t}\eta_t^{\theta^*} dD_t\bigg)^2\bigg]
        = \mathbb{E}\bigg[\bigg(\int_{0}^{\tau} e^{ R(\ln(\frac{\mathbb{Y}_0}{\mathbb{Y}_{t}})  - \int_{0}^{t} \frac{1}{\mathbb{Y}_s}e^{-\rho s} dD_s) -\rho t} dD_t\bigg)^2\bigg]\\
        &\quad \le\mathbb{Y}_0^{2R} \mathbb{E}\bigg[\bigg( \sup_{t\ge 0} \mathbb{Y}_{t\wedge\tau}^{1-R} \int_{0}^{\tau} e^{ - R \int_{0}^{t} \frac{1}{\mathbb{Y}_s}e^{-\rho s} dD_s}\frac{1}{\mathbb{Y}_t} e^{-\rho t}  dD_t\bigg)^2\bigg]\\
        &\quad\le \frac{\mathbb{Y}_0^{2R}}{R} \mathbb{E}\bigg[\sup_{t\ge 0} V_{t\wedge\tau}^2\bigg] <\infty,
    \end{align*}
    as claimed. Hence, this completes the proof.  
    \end{proof}

    We now provide the proof of Theorem \ref{thm:rob_rec}.

    \begin{proof}
        Consider a non-increasing sequence $\xi_{\tau^{x,D}}^{(n)}:= \frac{1}{n}\vee\xi_{\tau^{x,D}}$, $n\in \mathbb{N}$. Then using the same arguments as in the proof of Proposition \ref{pro:recurs_BSDE}, we are able to obtain the corresponding limit from a sequence constructed via Lemma \ref{lem:rob_rec_strict} for each $\xi_{\tau^{x,D}}^{(n)}$ (which is greater than equal to $\frac{1}{n}$). This completes the proof.
     
    \end{proof}

    \subsection{Analysis of the free-boundary problem \eqref{eq:fbdy}: Proof of Proposition~\ref{pro:shooting} }\label{proof:pro:shooting}
    We start with some preliminary results on the free-boundary problem \eqref{eq:fbdy}.  Recalling $\psi^+$ defined in Assumption~\ref{as:bdy_cond}, we claim that for every $b>0$ and $\gamma\in \mathbb{R}$, there is $v_{b,\gamma}\in C^2((0,b)\cup(b,\infty))\cap C^1(\mathbb{R}_+)$ ($v_{b,0}\in C^2({\mathbb{R}_+})$ when $\gamma=0$) solving the following nonlinear ODE: for $x\in\mathbb{R}_+$,
\begin{align}
\left\{
\begin{aligned}
    &\dfrac{\sigma^2(x)}{2}v''_{b,\gamma}(x)+ \mu(x)v_{b,\gamma}'(x) - R\dfrac{\sigma^2(x)}{2}\dfrac{\big(v_{b,\gamma}'(x)\big)^2}{v_{b,\gamma}(x)} = (\rho+\gamma) v_{b,\gamma}(x) ,\\
    &v_{b,\gamma}'(b)=1, \quad v_{b,\gamma}(b)= \psi^+(b). \label{eq:HJB_ODE1}
\end{aligned}
\right.
\end{align}
Indeed, for any given $b>0$ and $\gamma\in \mathbb{R}$, one-to-one correspondence 
\begin{align}\label{eq:1to1_cor}
    h_{b,\gamma}:=(v_{b,\gamma})^{1-R}
\end{align}
enables to consider the following {linear} ODE 
given by
\begin{align}\label{eq:ODE_h}
\left\{
\begin{aligned}    
    &\dfrac{1}{2}\sigma^2(x)h''_{b,\gamma}(x)+ \mu(x)h_{b,\gamma}'(x) = (1-R)(\rho+\gamma) h_{b,\gamma}(x)\quad \mbox{on}\;\;[a_1,a_2],\\
    &h_{b,\gamma}'(b)=(1-R)\big(\psi^+(b)\big)^{-R}, \quad h_{b,\gamma}(b)= \big(\psi^+(b)\big)^{1-R},
\end{aligned}
\right.
\end{align}
with an interval $[a_1,a_2]\subset \mathbb{R}$ with $a_1<b<a_2$. 

As \eqref{eq:ODE_h} admits a unique solution $h_{b,\gamma}\in C^2([a_1,b)\cup(b,a_2])\cap C^1([a_1,a_2])$ ($h_{b,0}\in C^2([a_1,a_2])$ when $\gamma=0$) for any $[a_1,a_2]\subset \mathbb{R}$ with $a_1<b<a_2$ (The Picard-Lindel\"{o}f's existence/uniqueness theorem for linear ODEs; see e.g., Nagle et al.~\cite[Chapter 13, Section 2, Theorem 3]{nagle1996fundamentals}), our claim holds true. 


Define $g_{b,\gamma}:\mathbb{R}_+\ni x\rightarrow g_{b,\gamma}(x)\in\mathbb{R}$ for every $b>0$ and $\gamma\in \mathbb{R}$ by 
\begin{align*}
    g_{b,\gamma}(x) := {v'_{b,\gamma}(x)}/{v_{b,\gamma}(x)}.    
\end{align*}
Since $v_{b,\gamma}\in C^2((0,b)\cup(b,\infty))\cap C^1(\mathbb{R}_+)$ (when $\gamma=0$, $v_{b,0}\in C^2(\mathbb{R}_+)$) is the solution of \eqref{eq:HJB_ODE1} and satisfies 
\begin{align}\label{eq:equiv_deriv_2}
    v_{b,\gamma}(x) = \psi^+(b)\exp\left(\int_{b}^x g_{b,\gamma}(u) du \right),
\end{align}
it follows that $g_{b,\gamma}$ is of $C^1((0,b)\cup(b,\infty))\cap C(\mathbb{R}_+)$ (when $\gamma=0$, $g_{b,0}\in C^1({\mathbb{R}_+})$) solves the following ODE: for $x\in \mathbb{R}_+$
\begin{align}
\left\{
\begin{aligned}   
    &\dfrac{\sigma^2(x)}{2}g_{b,\gamma}'(x)+\mu(x)g_{b,\gamma}(x) + \dfrac{(1-R)}{2}\sigma^2(x)g_{b,\gamma}^2(x) = \rho+\gamma,\\
    &g_{b,\gamma}(b) = {1}/{\psi^+(b)}.\label{eq:ODE_g}
\end{aligned}
\right.
\end{align}
For notational simplicity, set 
$v_b:=v_{b,0}$, $h_{b}:=h_{b,0}$, and $ g_{b}:=g_{b,0}$ for all $b>0$. 

Our main task is to show the following proposition. The corresponding proof builds on a sequence of preliminary lemmas given in the next subsection.
\begin{pro}\label{pro:shooting_sub}
    Suppose that Assumption \ref{as:bdy_cond} holds. For every $b>0$, let $v_{b}\in C^2({\mathbb{R}_+})$ be the solution of \eqref{eq:HJB_ODE1} (when $\gamma=0$). Then there exists a unique threshold $b^{\ast}\in (\underline{b},\hat{b})$ such that
    \(
    v_{b^{*}}(0) =\xi_0\) and \(v'_{b^{*}}(x)\ge 1\)  {on} \((0,b^{*}],
    \)
    with $\underline{b},\hat{b}$ appearing in Assumption \ref{as:bdy_cond}. We adopt the convention that $ v_{b^{*}}$ is extended to 0 by continuity and the value of $ v_{b^{*}}$ at 0 is its limit from the right.
\end{pro}

\subsubsection{Preliminary lemmas for Proposition \ref{pro:shooting_sub}}
In what follows, we often make use of the following elementary properties that are based on Cohen et al.~\cite[Lemma 1]{cohen2022optimal}.
\begin{lem}\label{lem:ele} Let $(a_1,a_2)\subseteq \mathbb{R}$ and $a_3\in (a_1,a_2)$, and let $f:(a_1,a_2)\ni x \rightarrow f(x)\in \mathbb{R}$ be a function of class $C^1((a_1,a_2))$. Then the following hold:
\begin{enumerate}[label=\roman*.]
    \item If $f(a_3)>c$ (resp.~$<c$) with some $c\in\mathbb{R}$ and there exist $y_1: = \sup\{y\in(a_1,a_3):f(y)=c\}$ and $y_2 : =\inf\{y\in(a_3,a_2):f(y)=c \}$, then 
    \[
    f'(y_1)\ge0\;\;\mbox{(resp.~$\le 0$)},\quad f'(y_2)\le 0\;\;\mbox{(resp.~$\ge0$)}.
    \]
    \item If $f'(a_3)>0$ (resp., $<0$) and there exist $y_3:=\sup\{y\in(a_1,a_3):f(y)=f(a_3)\}$ and $y_4:=\inf\{y\in(a_3,a_2):f(y) = f(a_3)\}$, then 
    \[
    f'(y_3)\le 0\;\;\mbox{(resp.~$\ge0$)},\quad f'(y_4)\le 0\;\; \mbox{(resp.~$\ge0$)}.
    \]
\end{enumerate}
\end{lem}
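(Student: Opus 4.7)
My plan is to derive both statements by combining the Intermediate Value Theorem (to pin down the sign of $f-c$ or $f-f(a_3)$ on the appropriate subintervals cut out by the extremal points) with a one-sided derivative test at the $y_i$. Throughout I will rely on the fact that, because $f$ is continuous and $y_i$ is a supremum or infimum of points where $f$ takes a prescribed value, $f(y_i)$ is forced to equal that value.

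For part (i), I will focus on the case $f(a_3)>c$, the reverse being entirely symmetric. Continuity of $f$ together with $y_1=\sup\{y\in(a_1,a_3):f(y)=c\}$ immediately gives $f(y_1)=c$, and $f\neq c$ on $(y_1,a_3)$ by the very definition of the supremum. Since $f(a_3)>c$, the IVT then forces $f>c$ throughout $(y_1,a_3]$. For $\varepsilon>0$ sufficiently small this yields $f(y_1+\varepsilon)>f(y_1)$, and dividing by $\varepsilon$ and sending $\varepsilon\downarrow 0$ produces $f'(y_1)\ge 0$. The analogous argument applied at $y_2$, where $f>c$ persists on $[a_3,y_2)$ so that $f(y_2-\varepsilon)>f(y_2)$ for small $\varepsilon>0$, gives $f'(y_2)\le 0$.

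For part (ii), I will focus on $f'(a_3)>0$. Since $f\in C^1$, I obtain a neighborhood of $a_3$ on which $f'>0$, so $f<f(a_3)$ just to the left of $a_3$ and $f>f(a_3)$ just to the right. By definition of $y_3$, one has $f\neq f(a_3)$ on $(y_3,a_3)$; combining this with the previous observation and continuity forces $f<f(a_3)=f(y_3)$ on all of $(y_3,a_3)$. A one-sided difference quotient from the right at $y_3$ then yields $f'(y_3)\le 0$. Symmetrically, $f>f(y_4)=f(a_3)$ on $(a_3,y_4)$, and a one-sided difference quotient from the left gives $f'(y_4)\le 0$. The case $f'(a_3)<0$ is handled identically with signs reversed.

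I do not anticipate any substantive obstacle here: the only subtlety is the verification that the sup/inf defining $y_i$ actually map to the prescribed value under $f$, but this follows at once from continuity because $y_i$ is a limit of points in the defining set. Everything else reduces to standard one-variable calculus.
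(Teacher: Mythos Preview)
Your argument is correct and is exactly the natural elementary proof: pin down the sign of $f-c$ (respectively $f-f(a_3)$) on the one-sided interval bounded by the extremal point via the intermediate value theorem, then read off the sign of the one-sided difference quotient. The paper itself does not supply a proof of this lemma; it simply records the statement and attributes it to \cite[Lemma~1]{cohen2022optimal}, so there is nothing substantive to compare against.
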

\noindent We start with a perturbation result which helps to get estimations for $g_b$ via the estimates of $g_{b,\gamma}$.
\begin{lem}\label{lem:pertub_cts} Suppose that Assumption \ref{as:bdy_cond} holds. For each $b>0$, let $v_{b,\gamma}$ be the solution of \eqref{eq:HJB_ODE1} and $g_{b,\gamma}$ be the solution of \eqref{eq:ODE_g}. The following hold: 
\begin{enumerate}[label=\roman*.]
    \item Let $b\in (\underline{b},\overline{b})$ and $\gamma<0$ (with $\underline{b},\overline{b}$ appearing in Assumption \ref{as:bdy_cond}). Define $$y_{b,\gamma}:=\sup\{x\in(0,b):g_{b,\gamma}(x) \le 1/\psi^+(x)\},$$ with the usual convention $\sup\{\varnothing\} : =0$.
    Then $y_{b,\gamma}\in[0,\underline{b}]$, and for every $x\in(y_{b,\gamma},b)$, $g_{b,\gamma}(x)> 1/\psi^+(x)$.

    \item  Let $b_1,b_2\in (\underline{b},\overline{b})$ with $b_1\leq b_2$ and $\gamma_1,\gamma_2 \in (-\infty,0]$ with $\gamma_1>\gamma_2$. Then the following holds: for every $x\in(0,b_1)$, 
    \(
    g_{b_1,\gamma_1}(x)<g_{b_2,\gamma_2}(x).
    \)
    \item As $\gamma\rightarrow0$, we have 
    $
     \sup_{x\in[0,b]} |g_{b,\gamma}(x) - g_b(x)| = O(\gamma),
    $
    where $O(\cdot)$ denotes the Landau symbol. Moreover, for every $x\in[0,b]$, 
    \begin{align*}
        \lim_{\gamma\rightarrow0}|v_{b,\gamma}(x) - v_{b}(x)|=0,\qquad \lim_{\gamma\rightarrow0}|v'_{b,\gamma}(x) - v'_{b}(x)|=0.
    \end{align*} 
\end{enumerate}
\end{lem}
\begin{proof} {\it of i.} We first note that 
   by \eqref{eq:ODE_g}, the following inequality holds:
    \begin{align}\label{eq:est_g1}
    g_{b,\gamma}'(b)= \frac{2Q(\psi^+(b))}{(\sigma(b)\psi^+(b))^2}+\frac{2\gamma}{(\sigma(b))^2}-\frac{1}{(\psi^+(b))^2}<-\frac{1}{(\psi^+(b))^2},
    \end{align}
    where we use that $Q(\psi^+(b))=0$ (defined in Assumption\;\ref{as:bdy_cond}\;ii.) and $\gamma<0$.

    Define $\phi_{b,\gamma}:\mathcal{X}\rightarrow \mathbb{R}$ as $\phi_{b,\gamma}(x) := g_{b,\gamma}(x) - 1/\psi^+(x)$. Since $g_{b,\gamma}(b)={1}/{\psi^+(b)}$ and $(\psi^+)'(b)\le 1$ (because $\psi^+(x)-x$ is decreasing on $[\underline{b},\overline{b})$ and $b$ is in $(\underline{b},\overline{b})$; see Assumption\;\ref{as:bdy_cond}\;ii.), the inequality in \eqref{eq:est_g1} gives
    \begin{align}\label{eq:est_g2}
    \phi_{b,\gamma}(b)=0,\quad \phi_{b,\gamma}'(b)=g'_{b,\gamma}(b) + \frac{(\psi^+)'(b)}{(\psi^+(b))^2} < \frac{(\psi^+)'(b)-1}{(\psi^+(b))^2} \le0 .
    \end{align}

    We claim that $y_{b,\gamma}\leq \underline{b}$. Since the case with $y_{b,\gamma}=0$ is trivial, we can and do consider only the case with $y_{b,\gamma}=\sup\{x\in(0,b):\phi_{b,\gamma}(x)=0\}$. Arguing by contradiction, assume that $y_{b,\gamma}>\underline{b}$. Then by the continuity of $\phi_{b,\gamma}$ (due to that of $g_{b,\gamma}$ and $\psi^+$), we have that $\phi_{b,\gamma}(y_{b,\gamma})= g_{b,\gamma}(y_{b,\gamma}) - {1}/{\psi^+(y_{b,\gamma})}=0.$ Furthermore, using the same arguments devoted for the second property given in \eqref{eq:est_g2} (along with $y_{b,\gamma}>\underline{b}$), we have 
    $$(\phi_{b,\gamma})'(y_{b,\gamma})=g'_{b,\gamma}(y_{b,\gamma}) + \frac{(\psi^+)'(y_{b,\gamma})}{(\psi^+(y_{b,\gamma}))^2} <0,$$ which contradicts Lemma \ref{lem:ele}\;ii.. Hence, the claim holds true.
    
    Furthermore, from the definition of $y_{b,\gamma}$ and the fact that $(\phi_{b,\gamma})'(b)=(g_{b,\gamma}- 1/\psi^+)'(b)<0$ (see~\eqref{eq:est_g2}), the last statement of i. is true. 
    
    \noindent {\it Proof of ii.} To that end, let $y_{b_2,\gamma_2}$ be defined as Lemma~\ref{lem:pertub_cts}\;i. (by assigning $b=b_2$ and $\gamma=\gamma_2$). Define $\Delta g: [0,b]\ni x\rightarrow \Delta g(x)\in \mathbb{R} $ by $\Delta g(x):= g_{b_1,\gamma_1}(x)-g_{b_2,\gamma_2}(x)$ for $x\in [0,b_1]$. Since $g_{b_i,\gamma_i}$, $i=1,2$, is the solution of \eqref{eq:ODE_g} (when $b=b_i$, $\gamma=\gamma_i$), the following holds
        \begin{align}
                &\dfrac{1}{2}\sigma^2(x)(\Delta g)'(x) + \Delta g(x) \Big( \mu(x)+ \frac{(1-R)}{2}\sigma^2(x)\left({\Delta g(x)} + 2g_{b_2,\gamma_2}(x)\right)\Big) \nonumber\\
                &= \gamma_1-\gamma_2,\quad\mbox{for}\;\;x\in [0,b_1]\label{eq:est_g4_2}
        \end{align}
        and $\Delta g(b_1) = 1/{\psi^+(b_1)}-g_{b_2,\gamma_2}(b_1)<0$ which follows from Lemma \ref{lem:pertub_cts}\;i. with fact that $b_1\in (y_{b_2,\gamma_2},b_2)$ and $g_{b_1,\gamma_1}(b_1)=1/{\psi^+(b_1)}$. 

    Here, we claim that $\Delta g(x)<0$ on $(0,b_1)$. Arguing by contradiction, assume it does not hold; then there exists $y_{3}:=\sup\{x\in(0,b_1): \Delta g(x) = 0\}$. Since $\Delta g(y_3)=0$ (by its continuity) and $y_3\in[0,b_1]$, assigning $x=y_3$ into \eqref{eq:est_g4_2} ensures that $\Delta g'(y_3)=2(\gamma_1-\gamma_2)/\sigma^2(y_3)>0$. This together with the fact that $\Delta g(b_1)<0$ contradicts Lemma \ref{lem:ele}\;i.

    \noindent {\it Proof of iii.} This proof is similar to the one given in Cohen et al.~\cite[Lemma~2]{cohen2022optimal}, so we omit the details here.
 
\end{proof}

\begin{lem}\label{lem:lowerbound_g} Suppose that Assumption \ref{as:bdy_cond} holds. For every $b>0$, let $v_b\in C^2(\mathbb{R}_+)$ be the solution of \eqref{eq:HJB_ODE1} (when $\gamma=0$) and $g_{b}\in C^1(\mathbb{R}_+)$ be the solution of \eqref{eq:ODE_g} (when $\gamma=0$). Then the following hold: 
\begin{enumerate}[label=\roman*.]
    \item Let $b\in(\underline{b},\overline{b})$.
    Then there exists $y_b^*\in [0,\underline{b}]$ satisfying that for all $x\in [y_b^*,b]$, $g_{b}(x)\ge 1/\psi^+(x),$ with $\underline{b}$ appearing in Assumption \ref{as:bdy_cond}.
    \item Let $b_1,b_2\in (\underline{b},\overline{b})$ with $b_1<b_2$. Then the following holds: for all $x\in (0,b_1]$, $g_{b_1}(x)\le g_{b_2}(x).$
    \item Let $b\in (\underline{b},\overline{b})$ and $\gamma\le 0$. As $\varepsilon \rightarrow 0$,   $\sup_{x\in[0,b]}|g_{b+\varepsilon,\gamma}(x) - g_{b,\gamma}(x)| =O(\varepsilon)$. Furthermore,  for every $x\in[0,b]$,
    \begin{align}\label{eq:lem_cts_b_g_2}
        \lim_{\varepsilon \rightarrow 0} |v_{b+\varepsilon,\gamma}(x) - v_{b}(x)| =0,\qquad \lim_{\varepsilon \rightarrow 0} |v_{b+\varepsilon, \gamma}'(x) - v_{b,\gamma}'(x)| =0.
    \end{align}
\end{enumerate}
\end{lem}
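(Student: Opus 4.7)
The overall strategy is to obtain the $\gamma=0$ statements (i) and (ii) by perturbing to $\gamma<0$, applying the corresponding parts of Lemma~\ref{lem:pertub_cts}, and then passing to the limit via the continuous dependence result. It is therefore natural to establish (iii) first, since it provides the limiting tool needed in (i) and (ii).

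For (iii), I would pass to the linearized formulation \eqref{eq:ODE_h} for $h_{b,\gamma}=(v_{b,\gamma})^{1-{\cal R}}$. Since the boundary data $(h_{b,\gamma}(b),h'_{b,\gamma}(b))=((\psi^+(b))^{1-{\cal R}},(1-{\cal R})(\psi^+(b))^{-{\cal R}})$ depends smoothly on $b$, a direct Gronwall estimate applied to the difference $h_{b+\varepsilon,\gamma}-h_{b,\gamma}$ on a common compact interval (and starting from the shifted boundary point $b$, after the change of variable $y=x-\varepsilon$ that brings the boundary condition to the fixed point $b$) yields $\sup_{[0,b]}|h_{b+\varepsilon,\gamma}-h_{b,\gamma}|=O(\varepsilon)$. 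Since $v_{b,\gamma}$ is bounded below away from zero on $[0,b]$ (because $\psi^+(b)>0$ and $h_{b,\gamma}$ is a positive solution of a linear ODE on a compact interval), the chain of equivalences $v_{b,\gamma}=h_{b,\gamma}^{1/(1-{\cal R})}$ and $g_{b,\gamma}=v'_{b,\gamma}/v_{b,\gamma}$ transfer the $O(\varepsilon)$ estimate to $g_{b,\gamma}$ and the uniform convergence to $v_{b,\gamma}$ and $v'_{b,\gamma}$. This parallels the argument referenced from \cite[Lemma 2]{cohen2022optimal} for the $\gamma$-perturbation in Lemma~\ref{lem:pertub_cts}(iii).

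For (i), I would apply Lemma~\ref{lem:pertub_cts}(i) at each $\gamma<0$ to obtain $y_{b,\gamma}\in[0,\underline{b}]$ with $g_{b,\gamma}(x)>1/\psi^+(x)$ on $(y_{b,\gamma},b)$. Define $y_b^{\ast}:=\limsup_{\gamma\to 0^-}y_{b,\gamma}\in[0,\underline{b}]$; along a subsequence $\gamma_n\to 0^-$ with $y_{b,\gamma_n}\to y_b^{\ast}$, the uniform convergence $g_{b,\gamma_n}\to g_b$ from (iii) (with the same $\gamma=0$ substitution) combined with continuity of $\psi^+$ gives $g_b(x)\ge 1/\psi^+(x)$ for every $x\in(y_b^{\ast},b]$, with equality at $x=b$ coming directly from the boundary condition in \eqref{eq:ODE_g}. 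The only subtlety compared with Lemma~\ref{lem:pertub_cts}(i) is that at $\gamma=0$ one has $g'_b(b)=-1/(\psi^+(b))^2\le 0$ rather than strict inequality, so the pointwise strict separation is lost and only the weak inequality survives in the limit; this is precisely the reason a perturbation approach is needed.

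For (ii), I would invoke Lemma~\ref{lem:pertub_cts}(ii) with $\gamma_1=0$ and an auxiliary $\gamma_2<0$, which is legitimate since $\gamma_1>\gamma_2$: this yields $g_{b_1}(x)<g_{b_2,\gamma_2}(x)$ on $(0,b_1)$. Letting $\gamma_2\to 0^-$ and using the uniform convergence $g_{b_2,\gamma_2}\to g_{b_2}$ from (iii) concludes $g_{b_1}(x)\le g_{b_2}(x)$ on $(0,b_1]$. I expect the main obstacle to be the careful verification in (iii) that the Gronwall constants can be chosen uniformly on a neighborhood of $b$ independent of small $\varepsilon$, together with the uniform positive lower bound on $v_{b,\gamma}$ and $\psi^+(b)$ needed to translate estimates from $h_{b,\gamma}$ back to $g_{b,\gamma}$; everything else is a clean pass-to-limit from the already-proven perturbation lemma.
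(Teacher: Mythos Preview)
Your strategy for (i) and (ii) coincides with the paper's: perturb to $\gamma<0$, invoke the corresponding parts of Lemma~\ref{lem:pertub_cts}, and pass to the limit. However, you misidentify the limiting tool. The convergence $g_{b,\gamma_n}\to g_b$ as $\gamma_n\to 0^-$ that you need for (i) and (ii) is precisely Lemma~\ref{lem:pertub_cts}(iii) (continuity in $\gamma$), which is already proved; it is \emph{not} part (iii) of the present lemma, which concerns continuity in $b$. So your opening rationale---that (iii) must be established first because it ``provides the limiting tool needed in (i) and (ii)''---is mistaken, though harmless: once you replace your references to ``(iii)'' by Lemma~\ref{lem:pertub_cts}(iii), your arguments for (i) and (ii) go through essentially verbatim. (A minor variant: for (i) the paper shows $y_{b,\gamma}$ is monotone in $\gamma$ via Lemma~\ref{lem:pertub_cts}(ii) and takes the monotone limit; your $\limsup$ shortcut also works.)

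For (iii) you genuinely diverge from the paper. The paper works directly with the nonlinear Riccati ODE \eqref{eq:ODE_g} for $g$: it writes the difference $g_{b+\varepsilon}-g_b$ as an integral, bounds the boundary mismatch $|g_{b+\varepsilon}(b)-g_b(b)|=O(\varepsilon)$ by integrating the ODE for $g_{b+\varepsilon}$ over $[b,b+\varepsilon]$, and closes via Gr\"onwall. Crucially, to control the quadratic term the paper needs a uniform bound $g_b\le g_{b+\varepsilon}\le g_{b+1}\le C_0$ on $[0,b]$, which it obtains from part (ii)---so in the paper the logical order is (i),\,(ii) first, then (iii). Your route through the \emph{linear} equation \eqref{eq:ODE_h} for $h_{b,\gamma}$ avoids this dependency: since the ODE is linear and the boundary data at $b$ depend smoothly on $b$, standard linear Gr\"onwall estimates give $\|h_{b+\varepsilon,\gamma}-h_{b,\gamma}\|_{C^1[0,b]}=O(\varepsilon)$ without any appeal to (ii). This is a legitimate alternative and arguably cleaner, provided you verify the point you flag yourself: a uniform positive lower bound for $h_{b,\gamma}$ (equivalently $v_{b,\gamma}$) on $[0,b]$, needed to pass the estimate from $h$ to $g=v'/v=\tfrac{1}{1-\mathcal{R}}h'/h$. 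The paper implicitly relies on this positivity as well (since $g_{b,\gamma}$ is defined as $v'_{b,\gamma}/v_{b,\gamma}$), so this is not an extra burden peculiar to your approach.
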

\begin{proof}
{\it of i.} Let $y_{b,\gamma}\in[0,\underline{b}]$ for every $\gamma<0$ be defined as in Lemma \ref{lem:pertub_cts}\;i. Then $y_{b,\gamma}$ increases in $\gamma<0$. Indeed, Lemma \ref{lem:pertub_cts}\;ii. ensures that for any $\gamma_1,\gamma_2\in(-\infty,0)$ such that $\gamma_1>\gamma_2$, the following holds that $
    g_{b,\gamma_1}(x) <g_{b,\gamma_2}(x)$ for every $x\in(0,b)$.
    Combining this with the definition of $y_{b,\gamma_i}$, $i=1,2$, (see Lemma \ref{lem:pertub_cts}\;i.), we have $y_{b,\gamma_1}>y_{b,\gamma_2}$. This ensures that the (increasing) monotonicity of $y_{b,\gamma}$ with resepect to $\gamma<0$.

    Consider an increasing sequence $(\gamma_n)_{n\in \mathbb{N}}\subseteq (-\infty,0)$ such that $\gamma_n\uparrow 0$ as {$n\rightarrow\infty$}. Then the monotonicity of $(y_{b,\gamma_n})_{n\in\mathbb{N}}$ together with uniformly boundedness within $[0,\underline{b}]$ ensures that there exists
    $y_{b}^*:=\lim_{n\rightarrow\infty}y_{b,\gamma_n}$ satisfying $y_{b}^*\in[0,\underline{b}]$ and $y_{b}^*\geq y_{b,\gamma_n}$ for every $n\in \mathbb{N}$. 
    Combined with Lemma~\ref{lem:pertub_cts}\;i., this ensures that for every $n\in\mathbb{N}$ and every $x\in[y_{b}^*,b)$, 
    $
    g_{b,\gamma_n}(x)\geq  1/\psi^+(x).$ 
    Using Lemma \ref{lem:pertub_cts}\;iii. and letting $n\rightarrow \infty$, we complete the proof. 

    \noindent {\it Proof of ii.} Lemma \ref{lem:pertub_cts}\;ii. ensures that for every $\gamma<0$ and every $x\in(0,b_1]$, $g_{b_1,0}(x)=g_{b_1}(x)< g_{b_2,\gamma}(x).$ Using Lemma \ref{lem:pertub_cts}\;iii. and letting $\gamma\uparrow 0$, we complete the proof.

    \noindent {\it Proof of iii.} We establish the proof for the case where $\gamma = 0$, as the analysis can be readily extended to accommodate the case where $\gamma\neq 0$.
    
    We start with proving the first convergence therein. Let~$\varepsilon\in [0,1\wedge(\overline{b}-b))$. Since $g_{b+\varepsilon}\in C^2(\mathbb{R}_+)$ (resp. $g_{b}\in C^2(\mathbb{R}_+)$) is the solution of \eqref{eq:ODE_g} (when $\gamma=0$ with $b+\varepsilon$ (resp. $b$)), the following holds: for every $x\in\mathbb{R}_+$,
    \begin{align}
      &g_{b+\varepsilon}(x) - g_{b}(x) -\big(g_{b+\varepsilon}(b) - g_{b}(b)\big)= - \int_{x}^b \big(g'_{b+\varepsilon}(y) - g'_{b}(y) \big) dy \nonumber\\
     &=  \int_{x}^b \Big( (1-R)\sigma^2(x)\big(g_{b+\varepsilon}(y) +g_b(y)\big) + \frac{2\mu(x)}{\sigma^2(x)}  \Big)\big(g_{b+\varepsilon}(y) - g_b(y)\big) dy. \label{eq:est_g5}
    \end{align}

    Lemma~\ref{lem:lowerbound_g}\;ii. ensures that $g_b(x)\le g_{b+\varepsilon}(x)\le g_{b+1}(x)\le C_0$ for every $x\in[0,b]$, where $C_0>0$ is a constant (that depends on $1\wedge (\overline{b}-b)$ but not on $\varepsilon$). 
    Furthermore, $\sigma$ and $\mu$ are uniformly bounded on $[0,b]$ (see Assumption~\ref{as:bdy_cond}). Hence, combined with \eqref{eq:est_g5} these properties ensure that there is a constant $C_1>0$ (that depends on $C_0$ but not on $\varepsilon$) such that for every $x\in[0,b)$, $
    0\leq g_{b+\varepsilon}(x) - g_{b}(x)  \leq g_{b+\varepsilon}(b) - g_{b}(b) + C_1  \int_b^x \big(g_{b+\varepsilon}(y) - g_b(y)\big) dy.$ An application of Gr\"onwall's inequality gives the existence of a constant $C_2>0$ satisfying 
    \begin{align}\label{eq:est_g6}
    0\leq \sup_{x\in[0,b]}(g_{b+\varepsilon}(x) - g_{b}(x)) \leq C_2 (g_{b+\varepsilon}(b) - g_{b}(b)).
    \end{align}

    Furthermore, we claim that there is a constant $C_3$ (that depends on $1\wedge (\overline{b}-b)$ but not on~$\varepsilon$) such that $
     0\leq g_{b+\varepsilon}(b) - g_{b}(b) \leq C_3 \varepsilon.$ Indeed, 
       \begin{align*}
    \left|g_{b+\varepsilon}(b) - g_{b}(b) \right| \le |g_{b+\varepsilon}(b+\varepsilon) -g_{b}(b) | + |g_{b+\varepsilon}(b+\varepsilon) - g_{b+\varepsilon}(b)|: = \operatorname{I} + \operatorname{II}.
    \end{align*}
    $\operatorname{I}$ is bounded by  $\varepsilon  \sup_{x\in[b,b+\varepsilon]}|{(\psi^+)'(x)}/{(\psi^+)^2(x)}|$, which is of $O(\varepsilon)$ by the fact that $g_{b+\varepsilon}(b+\varepsilon)=1/\psi^+(b+\varepsilon)$, $g_{b}(b)=1/\psi^+(b)$, and $\psi^+\in C^1([0,\overline{b}])$.
    $\operatorname{II}$ is of $O(\varepsilon)$ with similar arguments devoted for \eqref{eq:est_g5}.
    
    Combined with \eqref{eq:est_g6}, 
    this ensures that
    \begin{align*}
    \lim_{\varepsilon\downarrow 0}\sup_{x\in[0,b]}\Big(g_{b+\varepsilon}(x) - g_{b}(x)\Big) =0.
    \end{align*}
\eqref{eq:lem_cts_b_g_2} holds because $v_{b}(x) =  \psi^+(b)\exp(\int_{b}^x g_{b}(u) du )$ and $v_b'(x) = v_b(x)g_b(x)$. 

The proof in the case $\varepsilon\uparrow 0$ is similar therefore we omit it.  \end{proof}

\begin{lem}\label{lem:b_les_x}
    Suppose that Assumption \ref{as:bdy_cond} holds. Let $b\in (0,\underline{b}]$ and $v_b\in C^2(\mathbb{R}_+)$ be the solution of \eqref{eq:HJB_ODE1} (when $\gamma=0$), with $\underline{b}$ appearing in Assumption \ref{as:bdy_cond}. Then for every $x\in(0,b]$, $v_{b}'(x) \le 1.$
\end{lem}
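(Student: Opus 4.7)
The plan is to show that $v_b''(x)\geq 0$ throughout $(0,b)$; combined with the boundary condition $v_b'(b)=1$, this immediately forces $v_b'(x)\leq v_b'(b)=1$ on $(0,b]$ as required. The sign of $v_b''$ can be read off by rewriting the nonlinear ODE \eqref{eq:HJB_ODE1} (with $\gamma=0$): by Assumption~\ref{as:bdy_cond}\,(ii), $\psi^\pm(x)$ are precisely the two roots in the variable $V$ of the quadratic $\rho V^2 - \mu(x)\,v_b'(x)\,V + (\mathcal{R}/2)\sigma^2(x)(v_b'(x))^2 = 0$, so one obtains the factorization
\begin{align*}
v_b''(x) \;=\; \frac{2\rho\,(v_b'(x))^2}{\sigma^2(x)\,v_b(x)}\Bigl(\tfrac{1}{g_b(x)} - \psi^+(x)\Bigr)\Bigl(\tfrac{1}{g_b(x)} - \psi^-(x)\Bigr),
\end{align*}
with $v_b(x)>0$ thanks to \eqref{eq:equiv_deriv_2}. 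Since $\psi^-(x)\leq \psi^+(x)$ and both are positive on $[0,\underline b]$ (using $\mu(0)>0$ and $\mu^2\geq 3\mathcal{R}\rho\sigma^2$), the product is nonnegative exactly when $g_b(x)\leq 1/\psi^+(x)$, and the task reduces to establishing this bound on $(0,b]$.

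To prove $g_b\leq 1/\psi^+$, I would compare $g_b$ with the candidate $\tilde g:=1/\psi^+$ directly. Substituting $\tilde g$ into the Riccati ODE \eqref{eq:ODE_g} (with $\gamma=0$) and simplifying via the identity $\mu/\psi^+ - \mathcal{R}\sigma^2/(2(\psi^+)^2) = \rho$, which follows from $Q(\psi^+;x)=0$, one obtains after a short calculation the explicit residual
\begin{align*}
E(x) \;:=\; \tilde g'(x) \;-\; \frac{2}{\sigma^2(x)}\Bigl(\rho - \mu(x)\tilde g(x) - \tfrac{1-\mathcal{R}}{2}\sigma^2(x)\tilde g^2(x)\Bigr) \;=\; \frac{1 - (\psi^+)'(x)}{(\psi^+(x))^2}.
\end{align*}
Under Assumption~\ref{as:bdy_cond}\,(ii), $\psi^+(x)-x$ is non-decreasing on $[0,\underline b]$, so $(\psi^+)'(x)\geq 1$ and $E(x)\leq 0$ on $(0,\underline b]\supseteq(0,b]$. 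Setting $H:=g_b-\tilde g$ and linearizing the (smooth) quadratic-in-$g$ right-hand side of \eqref{eq:ODE_g} via the mean value theorem yields a linear first-order ODE $H'(x)=\alpha(x)H(x)-E(x)$ with terminal condition $H(b)=0$ (since $g_b(b)=1/\psi^+(b)$). Integrating backwards against the integrating factor $\exp\bigl(-\int_b^x\alpha(s)\,ds\bigr)$ produces the explicit representation
\begin{align*}
H(x) \;=\; \int_x^{b} \exp\Bigl(-\int_x^y\alpha(s)\,ds\Bigr)\,E(y)\,dy \;\leq\; 0 \qquad \text{for every } x\in(0,b],
\end{align*}
which is the desired inequality $g_b(x)\leq 1/\psi^+(x)$.

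The one point requiring attention is the boundary case $b=\underline b$, where $(\psi^+)'(\underline b)=1$ makes $E$ vanish at the right endpoint $y=b$; this is harmless because $E$ remains $\leq 0$ throughout, and in fact strictly negative on $(0,\underline b)$, so the integral representation still gives $H\leq 0$ (in fact $<0$ for $x<b$). The remaining bookkeeping is to check that the linearization coefficient $\alpha$ is integrable on $(0,b]$, which follows from the local boundedness of $g_b$, $\tilde g$, $\mu$ and $\sigma$ on the compact interval $[0,b]$. I expect this comparison step to be the technical heart of the argument; once it is secured, the factorization of the first paragraph yields $v_b''\geq 0$ on $(0,b]$, and integrating $v_b'(b)-v_b'(x)=\int_x^b v_b''(t)\,dt\geq 0$ concludes the proof.
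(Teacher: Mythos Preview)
Your proof is correct and takes a genuinely different route from the paper's. The paper argues by perturbation: for $\gamma>0$ one has $v_{b,\gamma}''(b)=2\gamma\psi^+(b)/\sigma^2(b)>0$, and a hypothetical largest $x_1<b$ with $v_{b,\gamma}'(x_1)=1$ would force $v_{b,\gamma}(x_1)\ge \psi^+(b)-(b-x_1)\ge \psi^+(x_1)$ (using that $\psi^+(x)-x$ increases on $[0,\underline b]$), hence $v_{b,\gamma}''(x_1)>0$, contradicting Lemma~\ref{lem:ele}\,(ii) applied to $f=v_{b,\gamma}'$; one then lets $\gamma\downarrow 0$ via Lemma~\ref{lem:pertub_cts}\,(iii). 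Your approach works directly at $\gamma=0$: you establish the Riccati comparison $g_b\le 1/\psi^+$ on $(0,b]$ (the key input being again $(\psi^+)'\ge 1$ on $[0,\underline b]$) and then read off $v_b''\ge 0$ from the factorized ODE, avoiding both the perturbation and the limiting step. The paper's choice is consistent with its global architecture, since the $\gamma$-perturbation is the workhorse throughout Section~\ref{proof:pro:shooting}; your route is shorter and isolates the inequality $g_b\le 1/\psi^+$---the mirror image, for $b\le\underline b$, of Lemma~\ref{lem:lowerbound_g}\,(i)---as the transparent mechanism. Two cosmetic remarks: the product in your factorization is nonnegative \emph{whenever} $g_b\le 1/\psi^+$ (not ``exactly when'', since $g_b\ge 1/\psi^-$ would also suffice), and the roots of your displayed quadratic are $v_b'\psi^\pm$ rather than $\psi^\pm$, though your final factorization formula is correct.
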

\begin{proof}
    Let $\gamma>0$ and $v_{b,\gamma}$ be the solution of \eqref{eq:HJB_ODE1}. Since $v_{b,\gamma}'(b)=1$ and $v_{b,\gamma}(b)= \psi^+(b)$, \eqref{eq:HJB_ODE1} ensures that 
    $v''_{b,\gamma}(b) = 2\gamma\psi^+(b)/\sigma^2(b)>0$.
     
    We claim that for every $x\in(0,b)$, $v'_{b,\gamma}(x)<1$. Arguing by contradiction, we assume that it does not hold. Then together with $v_{b,\gamma}'(b)=1$, the following supremum is attained: $x_1:=\sup\{x\in(0,b): v'_{b,\gamma}(x) = 1 \}$. By definition of this, it is clear that for every $x\in[x_1,b]$, $v'_{b,\gamma}(x)\le1,$ which ensures that $v_{b,\gamma}(x) - v_{b,\gamma}(b)\geq x-b$ for every $x\in[x_1,b]$.

    Furthermore, from Assumption~\ref{as:bdy_cond}\;ii. (with $b<\underline{b}$) and $v_{b,\gamma}=\psi^+(b)$, it follows that for every $x\in[x_1,b]$, $
    v_{b,\gamma}(x_1) \ge \psi^+(b)-(b-x_1) \ge \psi^+(x_1)$. This implies that $Q(v_{b,\gamma}(x_1)) \ge Q(\psi^+(x_1))=0$ (see Assumption \ref{as:bdy_cond}\;ii.). From the nonlinear ODE \eqref{eq:HJB_ODE1}, it follows that
    \begin{align*}
        \frac{1}{2}\sigma^2(x_1) v''_{b,\gamma}(x_1)=Q\big(v_{b,\gamma}(x_1)\big)\frac{1}{v_{b,\gamma}(x_1)}+\gamma v_{b,\gamma}(x_1)
        \ge \gamma \psi^+(x_1)>0.
    \end{align*}
    Combining this with the fact that $v''_{b,\gamma}(b)>0$ contradicts Lemma~\ref{lem:ele}\;ii. Hence we have shown that the claim holds.
    
    Using Lemma~\ref{lem:pertub_cts}\;iii. and letting $\gamma\downarrow0$, we complete the proof.  %
\end{proof}

\begin{lem}\label{lem:comparision}
    Suppose that Assumption \ref{as:bdy_cond} holds.  For every $b>0$ and $\gamma\in\mathbb{R}$, let $v_{b,\gamma}\in C^2((0,b)\cup(b,\infty))\cap C^1(\mathbb{R}_+)$ ($v_{b}=v_{b,0}\in C^2(\mathbb{R}_+)$ when $\gamma=0$) be the solution of \eqref{eq:HJB_ODE1}. 
    Then the following hold: 
    \begin{enumerate}[label=\roman*.]
        \item  Let $b\le \hat{b}$ (with $\underline{b},\hat{b}$ appearing in Assumption \ref{as:bdy_cond}) and $\gamma\le 0$. If $v_{b,\gamma}(0) \ge \xi_0$, then for every $x\in[0,b]$, $v_{b,\gamma}(x)\ge x + \xi_0.$
        \item   Let $\gamma < 0$. Then there exists a unique threshold $b^*(\gamma)$ defined by
    \[
    b^*(\gamma) := \inf\{b \in (\underline{b}, \hat{b}) : v_{b,\gamma}(0) = \xi_0\}.
    \]
    Moreover, $b^*(\gamma)$ is monotonically increasing in $\gamma < 0$. Consequently, 
    \begin{align}\label{def:b*gamma}
        b^* := \lim_{\gamma \uparrow 0} b^*(\gamma)
    \end{align}
    is well-defined and unique, with $b^* \in (\underline{b}, \hat{b}]$ and $v_{b^*}(0) = \xi_0$. \color{black}
    \end{enumerate}

\end{lem}
\begin{proof}{\it of i.} Arguing by contradiction, we assume that it does not hold. Recalling the one-to-one correspondence $h_{b,\gamma}=(v_{b,\gamma})^{1-R}$ (see \eqref{eq:1to1_cor}), 
    we obtain~that 
    \[
    \max_{x\in[0,b]}\{ (x+\xi_0)^{1-R} - h_{b,\gamma}(x) \} > 0.
    \] 
    We note that from Assumption\;\ref{as:bdy_cond}\;iii. (with $b\leq \hat{b}$) and the condition $v_{b,\gamma}(0) \ge \xi_0$, it follows that $h_{b,\gamma}(x)\ge(x+\xi_0)^{1-R}$ when $x=0$ and $x=b$. Hence this implies that $(x+\xi_0)^{1-\epsilon} - h_{b,\gamma}(x)$ takes its maximum at $x_0\in(0,b)$. We have
    \begin{align}\label{pf:lem_comp_1}
    -\frac{R(1-R)}{(x_0+\xi_0)^{1+R}} \le h_{b,\gamma}''(x_0),\quad (1-R)(x_0+\xi_0)^{-R} =h_{b,\gamma}'(x_0). 
    \end{align}
    From the linear ODE given in \eqref{eq:ODE_h}, we get to a contradiction
    \begin{align*}
    0 &= \dfrac{1}{2}\sigma^2(x_0)h_{b,\gamma}''(x_0)+ \mu(x_0)h_{b,\gamma}'(x_0) - (1-R)(\rho+\gamma) h_{b,\gamma}(x_0)\\
    &> \frac{(1-R)}{(x_0+\xi_0)^{1+R}}\Big(-\frac{R}{2}\sigma^2(x_0) + \mu(x_0)(x_0+\xi_0) -(\rho+\gamma)(x_0+\xi_0)^2\Big)\ge0,
    \end{align*}
    where the first inequality follows from \eqref{pf:lem_comp_1} and that $h_{b,\gamma}(x_0) < (x_0+\xi_0)^{1-R}$, and the second inequality follows from Assumption\;\ref{as:bdy_cond}\;ii. and $\gamma\le 0$. This completes the proof of i.
    
    \noindent {\it Proof of ii.} This statement is proved in three steps.
    
    {\bf Step 1.}~We claim that $v_{\underline{b},\gamma}(0)> \xi_0$ and $v_{\hat{b},\gamma}(0)<\xi_0$.
    
    Lemma \ref{lem:b_les_x} (when $b=\underline{b}$) ensures that $v_{\underline{b},\gamma}(0)\geq v_{\underline{b},\gamma}(\underline{b})-(\underline{b}-x)$ for every $x\in[0,\underline{b}]$. In particular, $v_{\underline{b},\gamma}(0)>v_{\underline{b},\gamma}(\underline{b})-\underline{b}$ because it cannot be the case that $v_{\underline{b},\gamma}'(x)\equiv 0$ for $x\in[0,\underline{b}]$.   Furthermore, from the fact that $v_{\underline{b},\gamma}(\underline{b})=\psi^+(\underline{b})$ and Assumption \ref{as:bdy_cond}\;ii.\;iii. (with $\underline{b}<\hat{b}$), it follows that $v_{\underline{b},\gamma}(0) > \psi^+(\underline{b})-\underline{b}> \psi^+(\hat{b})-\hat{b} = \xi_0$. 
    
    It remains to show $v_{\hat{b},\gamma}(0)<\xi_0$. Arguing by contradiction, we assume that it does not hold. 
    From Lemma~\ref{lem:Bcomparision}\;i., it follows that $v_{\hat{b},\gamma}(x) \ge x + \xi_0$ for every $x\in[0,\hat{b}]$. Since $v_{\hat{b},\gamma}$ is the solution of \eqref{eq:HJB_ODE1}, assigning $b=\hat{b}$ into \eqref{eq:HJB_ODE1} ensures that $v''_{\hat{b},\gamma}(\hat{b}) = 2\gamma\sigma^2(\hat{b})\psi^+(\hat{b})<0$. 
    
    By using a proof with contradiction, we have that $v_{\hat{b},\gamma}'(x)>1$, for $x\in(\underline{b},\hat{b})$.
     From the claim, the fact that $v_{\hat{b},\gamma}(\hat{b})=\psi^+(\hat{b})$, and Assumption~\ref{as:bdy_cond}\;ii. (with $\hat{b}\in[\underline{b},\overline{b}]$), it follows that $v_{\hat{b},\gamma}(\underline{b}) < \psi^+(\hat{b}) - (\hat{b}-\underline{b}) =  \xi_0+\underline{b}$. This contradicts with Lemma~\ref{lem:Bcomparision}\;i. Therefore, we conclude that $v_{\hat{b},\gamma}(0)< \xi_0$. 
    
    {\bf Step 2.} We claim that
    $b^*(\gamma): = \inf\{b\in(\underline{b},\hat{b}): v_{b,\gamma}(0) = \xi_0\}$ is well-defined and that $v_{b,\gamma}(0) < \xi_0$ for every $b\in( b^*(\gamma), \hat{b})$.

 By Lemma~\ref{lem:lowerbound_g}\;iii., we have the continuity of $b\mapsto v_{b,\gamma}(0)$ with fixed $\gamma\le 0$. There $b^*(\gamma): = \inf\{b\in(\underline{b},\hat{b}): v_{b,\gamma}(0) = \xi_0\}$ is well-defined.

    Now we claim that $v_{b,\gamma}(0) < \xi_0$ for every $b\in( b^*(\gamma), \hat{b})$. 
    From Lemma~\ref{lem:lowerbound_g}\;i. together with $b^*(\gamma)>\underline{b}$, it follows that $g_{b,\gamma}(x) \ge 1/\psi^+(x)$ for every $x\in [b^*(\gamma),b]$. Combining this with the fact that $(\psi^+)'(x) < 1$ for every $x\in[b^*(\gamma),b]$ (see Assumption~\ref{as:bdy_cond}\;ii.) ensures that $g_{b,\gamma}(x) > (\psi^+)'(x)\psi^+(x)$ for every $x\in [b^*(\gamma),b]$. 
    
    Furthermore, as $b>b^*(\gamma)$, Lemma \ref{lem:lowerbound_g}\;ii. ensures that $g_{b,\gamma}(x) > g_{b^*(\gamma),\gamma}(x)$ for $x\in(0,b^*(\gamma)]$. Hence we conclude that
    \begin{align*}
        v_{b,\gamma}(0) &=\psi^+(b) \exp\bigg({\int_{b}^{b^*(\gamma)}g_{b,\gamma}(y)dy + \int_{b^*(\gamma)}^{0}g_{b,\gamma}(y)dy }\bigg)\\
        & <\psi^+(b^*(\gamma))  \exp \bigg(\int_{b^*(\gamma)}^{0}g_{b^*(\gamma),\gamma}(y)dy \bigg)  = v_{b^*(\gamma),\gamma}(0) = \xi_0.
    \end{align*}
    
    {\bf Step 3}. We claim that $b^*(\gamma)$ increases in $\gamma<0$, which ensures that 
    $b^*$ defined in \eqref{def:b*gamma} is well-defined by monotone convergence. Furthermore, we claim that $b^*\in (\underline{b},\hat{b})$ and $v_{b^*}(0) = \xi_0.$
    
    Let $b\in(\underline{b},\overline{b})$ and $\gamma_1,\gamma_2\in(-\infty,0]$ with $\gamma_1\ge\gamma_2$. Then Lemma \ref{lem:pertub_cts}\;ii. ensures that $g_{b,\gamma_1}(x) < g_{b,\gamma_2}(x)$ for every $[0,b]$. From this and the relationship given in \eqref{eq:equiv_deriv_2}, it follows that $v_{b,\gamma_1}(x) \ge v_{b,\gamma_2}(x)$ on $[0,b]$. In particular, since it cannot be the case that $g_{b,\gamma_1} \equiv v_{b,\gamma_2}$ for $x\in[0,b]$, hence $v_{b,\gamma_1}(0) > v_{b,\gamma_2}(0)$.   
   
    From this and the fact that $b^*(\gamma_1)=\inf\{b\in(\underline{b},\hat{b}): v_{b,\gamma_1}(0) = \xi_0\}$, it follows that $\xi_0 = v_{b^*(\gamma_1),\gamma_1}(0) >  v_{b^*(\gamma_1),\gamma_2}(0)$. Furthermore, Step~2 ensures that $v_{b,\gamma_2}(0) < \xi_0$ for every $b\in( b^*(\gamma_2), \hat{b})$. We hence have $b^*(\gamma_1) > b^*(\gamma_2)$. 

    By the monotonicity of $b^*(\gamma)$ and its uniform boundedness, we can apply the MCT to have the existence of $b^*$ in \eqref{def:b*gamma}. 
    
    It remains to show that $v_{b^*}(0) = \xi_0$ and $b^*\in (\underline{b},\hat{b})$. To that end, consider an increasing sequence $(\gamma_n)_{n\in \mathbb{N}}$ such that $\gamma_n\in (-\infty,0)$ for every $n\in\mathbb{N}$ and $\lim_{n\rightarrow \infty } \gamma_n = 0$.

    Since $\lim_{n\rightarrow \infty} b^*(\gamma_n) = b^*$, Lemma~\ref{lem:lowerbound_g}\;iii. ensures that for any $\varepsilon>0$, there exists $N_\varepsilon \in \mathbb{N}$ satisfying that for every $n\geq N_\varepsilon$, $
        |v_{b^*}(0) - v_{b^*(\gamma_n)}(0)|<\varepsilon.$
    Furthermore Lemma \ref{lem:pertub_cts}\;iii. ensures that there exists $\widetilde{N}_\varepsilon\in \mathbb{N}$ satisfying that  for every $n\geq \widetilde{N}_\varepsilon$, $|v_{b^*(\gamma_n),\gamma_n}(0) - v_{b^*(\gamma_n)}(0)|<\varepsilon.$
    Since $v_{b^*(\gamma_n),\gamma_n}(0) = \xi_0$ for every $n\in\mathbb{N}$ (see Step 2.), we hence obtain that for every $n\geq \widetilde{N}_\varepsilon \vee N_\varepsilon$,
    \begin{align*}
        |v_{b^*}(0) - \xi_0 |\leq |v_{b^*}(0) - v_{b^*(\gamma_n)}(0)|+|v_{b^*(\gamma_n),\gamma_n}(0) - v_{b^*(\gamma_n)}(0)|  < 2\varepsilon.
    \end{align*}
    By letting $\varepsilon\downarrow 0$, we show $v_{b^*}(0) = \xi_0$. Finally, we know that $\underline{b}<b^*(\gamma)< \hat{b}$ for every $\gamma<0$ by Step 1, therefore $\underline{b}\le \lim_{\gamma\uparrow 0}b^*(\gamma) = b^*\le \hat{b}$. The strict inequality $\underline{b}< b^*$ follows by the increasing monotonicity of $b^*(\gamma)$.  
\end{proof}
\subsubsection{Proof of Proposition \ref{pro:shooting_sub}.}
\begin{proof}
Given the existence of $b^*\in(\underline{b},\hat{b}]$ satisfying $v_{b^*}(0)=\xi_0$ by Lemma \ref{lem:comparision}\;ii., we first prove that $v_{b^*}'(x)\geq 1$ for every $x\in[0,b^*]$.

    From Lemma \ref{lem:comparision}\;ii., define $b^{\gamma}$ for every $\gamma<0$ by
    \begin{align}\label{eq:quad_perturb_0}
    b^{\gamma}:=b^*(\gamma)= \inf\{b\in(\underline{b},\hat{b}): v_{b,\gamma}(0) = \xi_0\}.
    \end{align}
    Furthermore, denote by $
    v_{b^{\gamma},\gamma}\in C^2((0,b^{\gamma})\cup(b^{\gamma},\infty))\cap C^1(\mathbb{R}_+)$ the solution of \eqref{eq:HJB_ODE1} (when $b=b^{\gamma}$). 
    
    The proof is achieved in two steps.

    {\bf Step 1.} 
    Fix $\gamma\in(-\rho,0)$. We  claim that
    \begin{align}\label{eq:claim_1_shooting}
        v_{b^\gamma,\gamma}'(b^\gamma)(x)\ge 1,\quad \mbox{for all}\quad  x\in(0,b^{\gamma}].
    \end{align}
    From the boundary conditions $v_{b^{\gamma},\gamma}'(b^{\gamma})=1$ and $v_{b^{\gamma},\gamma}(b^{\gamma})= \psi^+(b^{\gamma})$, the $C^2$ property of $v_{b^{\gamma},\gamma}$ and the ODE \eqref{eq:HJB_ODE1} gives  $v''_{b^{\gamma},\gamma}(b^{\gamma}) = 2\gamma\sigma^2(b^{\gamma})\psi^+(b^{\gamma})<0.$ Argue by contradiction, assume \eqref{eq:claim_1_shooting} doesn't hold. Then combined with the fact that $v_{b^\gamma,\gamma}'(b^\gamma)=1$, we have the existence of
    \[
    x_1:=\sup\{ 
    x\in(0,b^\gamma): v'_{b^\gamma,\gamma}(x) = 1 \}.
    \]
    In the following, we prove that the existence of $x_1$ leads to a contradiction. 
    
    i. {\it  Case $x_1 \in (\underline{b},{b}^\gamma)$}. Since $v_{b^\gamma,\gamma}'(x_1)= 1$ (noting that $v_{b^\gamma,\gamma}\in C^1(\mathbb{R}_+)$), it follows that $v_{b^\gamma,\gamma}'(x) > 1$ for every $x\in(x_1,b^\gamma)$. Furthermore, since $\psi^+(x)-x$ decreases on $[x_1,b^\gamma]\subset [\underline{b},\hat{b}]$ (see Assumption \ref{as:bdy_cond}\;ii. and $x_1 \in (\underline{b},{b}^\gamma)$) and $v_{b^\gamma,\gamma}(b^\gamma)=\psi^+(b^\gamma)$, hence
    \begin{align*}
        v_{b^\gamma,\gamma}(x_1) \le \psi^+(b^\gamma)-(b^\gamma-x_1)\le \psi^+(x_1).
    \end{align*}
    Moreover, since $x_1 \in (\underline{b},\hat{b})$ and $v_{b^\gamma,\gamma}(0)= \xi_0$ (by definition of $b^\gamma$), Lemma\;\ref{lem:comparision}\;i. ensures that $v_{b^\gamma,\gamma}(x_1) \ge x_1 +\xi_0 \ge \psi^-(x_1)$, so that $Q(v_{b^\gamma,\gamma}(x_1);x_1)\le 0$ by recalling that $Q(\cdot;x_1)$ defined in Assumption \ref{as:bdy_cond}\;ii. This implies that
    \begin{align*}
        v''_{b^\gamma,\gamma}(x_1) = \frac{2}{\sigma^2(x_1)}\Big(\gamma v_{b^\gamma,\gamma}(x_1) +\frac{Q(v_{b^\gamma,\gamma}(x_1);x_1)}{v_{b^\gamma,\gamma}(x_1)}\Big)\le
        \frac{2}{\sigma^2(x_1)}\gamma (x_1+\xi_0)<0.
    \end{align*}
    This together with the fact that $v''_{b^{\gamma},\gamma}(b^{\gamma})<0$ contradicts Lemma~\ref{lem:ele}\;ii. 

    ii. {\it Case $x_1\in(0,\underline{b}]$}. 
    Recalling $x_1=\sup\{ 
    x\in(0,b^\gamma): v'_{b^\gamma,\gamma}(x) = 1 \}$ and the fact that $v_{b^\gamma,\gamma}'(b^\gamma)=1$ and $v_{b^\gamma,\gamma}''(b^\gamma)<0$, we employ Lemma~\ref{lem:ele}\;ii. to obtain that $v''_{b^\gamma,\gamma}(x_1)\geq 0$, which can be rewritten~by
    \begin{align}\label{eq:quad_perturb}
        v''_{b^\gamma,\gamma}(x_1)= \frac{2}{\sigma^2(x_1)}\frac{1}{v_{b^\gamma,\gamma}(x_1)}\widetilde{Q}_{\gamma}(v_{b^\gamma,\gamma}(x_1);x_1)\geq 0,
    \end{align}
    where $\widetilde{Q}_\gamma(\psi;x_1):=(\rho+\gamma) \psi^2-\mu(x_1)\psi + \frac{R}{2}\sigma^2(x_1)$ (which is well-defined by Assumption \ref{as:bdy_cond}\;ii. with the fact that $\rho>\rho+\gamma>0$) and we have used the nonlinear ODE of $v_{b^\gamma,\gamma}$ given in \eqref{eq:HJB_ODE1}.

    Lemma~\ref{lem:comparision}\;i. together with Assumption \ref{as:bdy_cond}\;iii. (and $x_1\in(0,\underline{b}]\subset [0,\hat{b}]$) ensures that $v_{b^\gamma,\gamma}(x_1) \geq x_1+\xi_0\geq  \psi^-(x_1)$. Furthermore, recalling that $\gamma$ satisfies $\rho>\rho+\gamma>0$, we have
    \begin{align}
    v_{b^\gamma,\gamma}(x_1) \geq& \frac{\mu(x_1) - \sqrt{\mu^2(x_1)-2R\rho \sigma^2(x_1)}}{2\rho}\label{eq:quad_perturb2.1} \\
    >&\frac{\mu(x_1) - \sqrt{\mu^2(x_1)-2R(\rho+\gamma) \sigma^2(x_1)}}{2(\rho+\gamma)} =:\widetilde{\psi^-}(x_1),\label{eq:quad_perturb2}
    \end{align}
    where we note that RHS of \eqref{eq:quad_perturb2.1} equals $\psi^-(x_1)$ and that the last term $\widetilde{\psi^-}(x_1)$ is the smaller one among two roots of the quadratic function $\widetilde{Q}_\gamma(\cdot;x_1) = 0$. 

    Since $\widetilde{Q}_{\gamma}(v_{b^\gamma,\gamma}(x_1);x_1)\geq0$ (see \eqref{eq:quad_perturb}), from \eqref{eq:quad_perturb2} it follows that 
    \begin{align}\label{eq:quad_perturb3}
    v_{b^\gamma,\gamma}(x_1)\geq \frac{\mu(x_1) + \sqrt{\mu^2(x_1)-2R(\rho+\gamma) \sigma^2(x_1)}}{2(\rho+\gamma)}=:\widetilde{\psi^+}(x_1),
    \end{align}
    where the right-hand side $\widetilde{\psi^+}(x_1)$ is the other (larger) root of $\widetilde{Q}_\gamma(\cdot;x_1)$.

    Now we claim that for $x$ fixed, if $(\psi^+)'(x) \ge 1$ then $(\widetilde{\psi^+})'(x) \ge 1$.  By taking derivative of $\psi^+(x)$ w.r.t. $x$, we have 
   \begin{align*}
       (\psi^{+})'(x) = \frac{1}{2\rho}\mu'(x) + \frac{\mu(x)\mu'(x)}{2\rho\sqrt{\mu(x)^2-2R\rho\sigma^2(x)}} - \frac{R\sigma(x)\sigma'(x)}{\sqrt{\mu(x)^2-2R\rho\sigma^2(x)}}.
   \end{align*}
When $(\psi^+)'(x) \ge 1$, it is clear that $\mu'(x)>0$ since $\sigma'(x)\ge 0$ by Assumption~\ref{as:bdy_cond}.  
Therefore the first term and the last term in above equation are decreasing w.r.t $\rho$.
It remains to show the same monotonicity for the second term in $(\psi^{+})'(x)$. 

By differentiating $h(\rho;x): = \rho\sqrt{\mu(x)^2-2R\rho\sigma^2(x)}$ w.r.t $\rho$, we have
\begin{align*}
    \partial_{\rho} h(\rho;x) = \sqrt{\mu(x)^2-2R\rho\sigma^2(x)} - \frac{R\rho \sigma^2(x)}{\sqrt{\mu(x)^2-2R\rho\sigma^2(x)}} \ge 0, 
\end{align*}
where the inequality is followed by Assumption~\ref{as:bdy_cond}\;i. with $(x_2,x_1)\subset [0,\underline{b}]$. As a result, we show that $1 \le (\psi^{+})'(x)\le (\widetilde{\psi^+})'(x)$ since $\rho+\gamma < \rho$ for $\gamma<0$. Combined with Assumption ~\ref{as:bdy_cond}\;ii., which says that $\psi^+(x)-x$ increases on $[0,\underline{b}]$, we have $\widetilde{\psi^+}(x)-x$ increases on $[0,\underline{b}]$.

     To proceed, recalling that $v'_{b^\gamma,\gamma}(x_1) = 1$ and $v''_{b^\gamma,\gamma}(x_1)\geq 0$ (see \eqref{eq:quad_perturb}), 
      standard arguments by using a proof by contradiction shows that there exists $x_2$ defined as $x_2:=\sup\{ x\in(0,x_1): v'_{b^\gamma,\gamma}(x) = 1 \}$.
    Since $v'_{b^\gamma,\gamma}(x)<1$ for every $x\in(x_2,x_1)$ (from the definition of $x_2$ and the fact that $v'_{b^\gamma,\gamma}(x_1) = 1$ and $v''_{b^\gamma,\gamma}(x_1)\geq 0$), the following hold: $ v'_{b^\gamma,\gamma}(x_2)=1$ and 
    \begin{align}
    v_{b^\gamma,\gamma}(x_2)&>v_{b^\gamma,\gamma}(x_1)-(x_1-x_2)\nonumber\\
    &\geq \frac{\mu(x_1) + \sqrt{\mu^2(x_1)-2R(\rho+\gamma) \sigma^2(x_1)}}{2(\rho+\gamma)}-(x_1-x_2)\nonumber\\
    &\geq  \frac{\mu(x_2) + \sqrt{\mu^2(x_2)-2R(\rho+\gamma) \sigma^2(x_2)}}{2(\rho+\gamma)}, \label{eq:quad_perturb4}
    \end{align}
    where the second inequality follows from \eqref{eq:quad_perturb3} and  the last inequality follows from
    the fact that $\widetilde{\psi^+}(x)-x$ increases on $(x_2,x_1)$ since $(x_2,x_1)\subset [0,\underline{b}]$.

    \noindent Recalling the nonlinear ODE \eqref{eq:HJB_ODE1}, the inequality given in \eqref{eq:quad_perturb4} ensures that
    \[
        v''_{b^\gamma,\gamma}(x_2)=\frac{2}{\sigma^2(x_2)}\frac{1}{v_{b^\gamma,\gamma}(x_2)}\widetilde{Q}_{\gamma}(v_{b^\gamma,\gamma}(x_2);x_2)>0.
    \]
    This together with \eqref{eq:quad_perturb} contradicts Lemma~\ref{lem:ele}. (Lemma 2.2\;ii. requires $v''_{b^\gamma,\gamma}(x_1)>0$, but we can see it cannot be the case that $v''_{b^\gamma,\gamma}(x_2)>0$, $v''_{b^\gamma,\gamma}(x_1)=0$ and $v''_{b^\gamma,\gamma}(b^{\gamma})<0$).
    
    By deriving contradictions for the above two cases, we have \eqref{eq:claim_1_shooting}. 

    
    {\bf Step 2.}  Recall $b^\gamma$ for $\gamma<0$ given in \eqref{eq:quad_perturb_0} and set $b^*:=\lim_{\gamma\uparrow 0} b^\gamma$ (see Lemma~\ref{lem:comparision}\;ii.). We claim that $v'_{b^*}(x)\ge 1$ for every $x\in (0,b^*]$. 

    To that end, let us consider an increasing sequence $(\gamma_n)_{n\in\mathbb{N}}$ such that $-\rho <\gamma_n<0$ for every $n\in\mathbb{N}$ and $\lim_{n\rightarrow \infty } \gamma_n = 0$. Since $\lim_{j\rightarrow \infty} b^{\gamma_n} = b^*$ and $(b^{\gamma_n})_{n\in\mathbb{N}}$ is a increasing sequence (see Lemma~\ref{lem:comparision}\;ii.), the following hold: $
    b^*\geq  b^{\gamma_n} \geq b^{\gamma_0}$ for every $n\in \mathbb{N}$. 

    Combining this with Step 1.~(along with $v_{b^{\gamma_n},\gamma_n}\in C^1(\mathbb{R}_+)$ for every $n\in\mathbb{N}$) ensures that for every $n\in\mathbb{N}$ and every $x\in (0,b^*]$, 
    \begin{align}\label{eq:quad_perturb5}
    v_{b^{\gamma_n},\gamma_n}'(x) \ge 1.
    \end{align}
    
    Fix $y\in(0,b^{\gamma_0}]\subset (0,b^*]$. For arbitrary $\varepsilon>0$, then Lemma~\ref{lem:pertub_cts}\;iii. ensures that there exists ${N}_\varepsilon\in \mathbb{N}$ such that $|v'_{b^{\gamma_n},\gamma_n}(x) - v'_{b^{\gamma_n}}(x)|<\varepsilon$ for every $n\geq {N}_\varepsilon$. Furthermore, Lemma \ref{lem:lowerbound_g}\;iii. ensures that there exists $\widetilde{N}_\varepsilon\in \mathbb{N}$ such that $
    |v'_{b^*}(x) - v'_{b^{\gamma_n}}(x)|<\varepsilon$ for every $n\geq \widetilde{N}_\varepsilon$. Therefore, from \eqref{eq:quad_perturb5}, it follows that for every $n\geq \widetilde{N}_\varepsilon\vee N_\varepsilon$, $v'_{b^*}(x) >v'_{b^{\gamma_n}}(x)  -\varepsilon> v'_{b^{\gamma_n},\gamma_n}(x)- 2\varepsilon\geq 1- 2\varepsilon.$ 
    By letting $\varepsilon\rightarrow0$, we conclude that $v_{b^*}'(x)\ge1$ for $y\in (0, b^{\gamma_0}]$.

    It remains to show $v_{b^*}'(y)\ge1$ on $(b^{\gamma_0},b^*]$. \\
    Assume the contrary: $\min_{y\in[b^{\gamma_0},b^*]} v_{b^*}'(y) <1$. Then $v_{b^*}'(y) -1$ takes its minimum at $x_0\in(b^{\gamma_0},b^*)$ because $v'_{b^*}(b^*) = 1$ and $v'_{b^*}(b^{\gamma_0})\ge1$ by above analysis. 
Then we have that $v_{b^*}''(x_0) = 0 $ and $v'_{b^*}(x_0)<1$. 
By the continuity of $v'_{b^*}$, let $\epsilon=(1-v'_{b^*}(x_0))/2$ there exists $x_0'$ in a local neighborhood of $x_0$ such that $v_{b^*}''(x_0') > 0$ and  $v_{b^*}'(x_0')< v'_{b^*}(x_0) + \epsilon<1$.
Hence $$g_{b^*}(x_0') = \dfrac{v'_{b^*}(x_0')}{v_{b^*}(x_0')} < \frac{1}{x_0'+\xi_0} < \frac{1}{\psi^{-}(x_0')},$$
where the first inequality follows Lemma~\ref{lem:comparision}\;i. that $v_{b^*}(x_0')\ge x_0'+\xi$ since $v_{b^*}(0)=\xi$, while the last inequality follows Assumption \ref{as:bdy_cond}\;iii. 

Recalling the ODE \eqref{eq:HJB_ODE1} of $v_{b^*}$, we have
\begin{align*}
    \frac{1}{2}\sigma^2(x_0) v''_{b^*}(x_0') = \frac{1}{v_{b^*}(x_0')}Q(v_{b^*}(x_0');x_0') > 0.
\end{align*}
Combined with the above arguments, we conclude that $g_{b^*}(x_0') < 1/\psi^+(x_0')$. This contradicts with Lemma~\ref{lem:lowerbound_g}\;i. which gives $g_{b^*}(x_0')\ge{1}/{\psi^+(x_0')}$ as $x_0>b^{\gamma_0}>\underline{b}$.
As a result, we see $v_{b^*}'(y)\ge1$ on $(b^{\gamma_0},b^*]$.

    Finally, we show that $b^*<\hat{b}$, where $\hat{b}$ defined in Assumption \ref{as:bdy_cond}\;iii. It suffices to show $b^* \neq \hat{b}$ since $b^*\le \hat{b}$ by Lemma \ref{lem:comparision}\;ii. When $b^* = \hat{b}$, by definition of $\hat{b}$, we have $v_{b^*}(0)=\xi_0$ while $v_{b^*}(b^*) = \psi^+(b^*) =  b^* + \xi_0$. Together with our previous result that $v_{b^*}'(x)\ge 1$ on $(0,b^*]$, we conclude $v_{b^*}'(x) \equiv 1$ on $(0,b^*]$, which cannot be the case; see \eqref{eq:HJB_ODE1}. Therefore, we have $b^*<\hat{b}$.  
    \end{proof}

    \subsubsection{Proof of Proposition~\ref{pro:shooting}}
    \begin{proof}
    From Proposition \ref{pro:shooting_sub}, define $v^*:{\mathbb{R}_+}\rightarrow \mathbb{R}$ by
    \begin{align}\label{eq:const_v}
    v^*(x):=\left\{
    \begin{aligned}
        &v_{b^*}(x)\quad &&\mbox{if}\;\;x\in [0,b^*];\\
        &x-b^*+\psi^+(b^*)\quad && \mbox{if}\;\;x\in (b^*,\infty).
    \end{aligned}
    \right.
    \end{align}
    We note that $v^*(= v_{b^{*}})$ on $[0,b^*]$ satisfies the first and third lines of \eqref{eq:fbdy}, and that $(v^*)^{\prime}=1$ on $(b^*,\infty)$. Thus, it suffices to show that ${\cal L}v^*\leq 0$ on $(b^*,\infty)$.

    The function $ v_{b^*} $ is in $\mathcal{C}^2(\mathbb{R}_+)$ and satisfies \eqref{eq:HJB_ODE1} with $\gamma = 0$. 
    In particular, $v_{b^*}''(b^*) = 0$ by construction. Indeed, substituting the boundary conditions $v_{b^*}'(b^*) = 1$ and $v_{b^*}(b^*) = \psi^+(b^*)$ into the ODE at $x = b^*$ gives
    \[
    \frac{\sigma^2(b^*)}{2} v_{b^*}''(b^*) + \mu(b^*) - R \frac{\sigma^2(b^*)}{2} \frac{1}{\psi^+(b^*)} - \rho \psi^+(b^*) = 0.
    \]
    Since $\psi^+(b^*)$ solves the quadratic equation $Q(\psi; b^*) = 0$, it follows that $v_{b^*}''(b^*) = 0$. 
        By the definition given in \eqref{eq:const_v}, the function $ v^* $ is in $\mathcal{C}^2({\mathbb{R}_+})$ and satisfies the free boundary problem \eqref{eq:fbdy} to the left of $ b^* $. It remains to show that for $x\in(b^*,\infty)$, ${\cal L} v^*\le 0$ (as $({v^*})^{\prime}(x)=1$ is obvious). 
    
    For $ x \in (b^*, \overline{b}) $, note that $\psi^+(x) - x$ is decreasing, as stated in Assumption~\ref{as:bdy_cond}~ii.. Consequently, we have $ v^*(x) = x - b^* + \psi^+(b^*) \ge \psi^+(x) $. Hence $\mathcal{L}v^*(x)$ is given by
    $$
    \mathcal{L}v^*(x) = -\frac{Q(v^*(x); x)}{v^*(x)} \le 0,
    $$
    where $ Q(\psi; x) $ is defined in Assumption \ref{as:bdy_cond}~ii..
    
    If $\overline{b} = \infty$, the proof concludes here. Otherwise, for $ x \in [\overline{b}, \infty) $, we have $ Q(v^*(x); x) < 0 $, which implies $\mathcal{L}v^*(x) < 0$. Thus, we conclude that $ v^*(x) $ satisfies $\mathcal{L}v^* \le 0$ on $(b^*, \infty)$. This means it solves \eqref{eq:fbdy} with the free boundary $ b^* \in (\underline{b}, \hat{b})$. This completes the proof. 
     
    \end{proof}
    
    \subsection{Verification theorem: Proof of Theorem~\ref{thm:verification}
    }\label{sec:thm:verification}
    To establish the boundary condition  ${\cal V}(0+)=\xi_0$  for value function ${\cal V}$ (defined in \eqref{eq:obj_maxmin}), we  first present a technical result for the process $X^x$.
    To that end, we define the following stopping time for each $x\in \mathcal{X}$
        \begin{align}\label{dfn:zeta_y|x}
          \zeta_{y|x} = \inf\{t\ge 0, X_t^x = y \}, \quad y\in\mathcal{X} \cup \{\ell,\infty\}.
        \end{align}
        Assumption \ref{as:recurrent} ensures that for all $x\in\mathcal{X}/\{0\}$,
        $ \mathbb{P}(\zeta_{0|0} = 0) =\mathbb{P}(\zeta_{0|x}<\infty) = 1$ and $\mathbb{P}(\zeta_{\infty|x} <\infty) =0$.
    \begin{lem}\label{lem:X_bound} 
    Under Assumption \ref{as:recurrent}, for any $\delta>0$, we have 
    \begin{align} \label{eq:lem_bound_0}
    \lim_{x\rightarrow h} \mathbb{P}(\zeta_{h|x}<\delta) = 1, 
    \end{align}
    where stopping time $\zeta_{y|x}$ is defined in \eqref{dfn:zeta_y|x}. In particular, for fixed $y>0$,
     \begin{align}
         \lim_{x \downarrow 0} \mathbb{P}(\zeta_{0|x} < \delta)  = 1,
         \quad \lim_{x\downarrow 0}  \mathbb{E}\Big[\max_{0\le s\le \delta\wedge{\zeta}_{0|x}\wedge{\zeta}_{y|x}} X^x_s\Big] = 0.\label{eq:lem_bound_2}
     \end{align}
     \end{lem}
    \begin{proof}
    The proof involves a transformed process $U_t^x$ of $X_t^x$, which is a time-changed Brownian motion, and we prove that auxiliary versions of \eqref{eq:lem_bound_0}, \eqref{eq:lem_bound_2} hold with the process $U_t$. 
    We define the {\it scale function} of $X^x$ in \eqref{eq:uncontrolX} as 
    \begin{align*}
    \mathfrak{s}(x): = \int_{0}^{x}\exp\Big(-2\int_0^z\frac{\mu(y)}{\sigma^2(y)}dy\Big)dz,
    \end{align*}
    which is strictly increasing, continuously differentiable bijection of the interval $\mathcal{X}$ onto the interval $I=(\iota_{l},\iota_{\infty})$ with endpoints $\iota_l:= \mathfrak{s}(\ell+)$ and $\iota_{\infty}:= \mathfrak{s}(\infty-) $; see, e.g., Borodin and Salminen~\cite[II.1.4]{borodin2015handbook}.
    Denote by $\mathfrak{s}^{-1}$ the inverse mapping of $\mathfrak{s}$. We define $U_t^x : = \mathfrak{s}(X^x_t)$, with state space $I$, and its dynamic is given by
    \begin{align}\label{pf:lem:X_bound_0}
    U^x_t = \mathfrak{s}(X^x_0) + \int_0^t \varsigma(U^x_s) dW_s, 
    \end{align}
     where the dispersion function $\varsigma(u): = (\mathfrak{s}'\cdot\sigma)(\mathfrak{s}^{-1}(u))$; see e.g., Karatzas and Shreve~\cite[Section~5.5B]{KS1991}.
     To further remove the dependency on the function $\varsigma$ in \eqref{pf:lem:X_bound_0}, we consider a time change for transformation. 
     According to the Dambis–Dubins–Schwarz theorem (see Revuz and Yor~\cite[Theorem V.1.6]{revuz2013continuous}), one can rewrite the stochastic integral $\int_{0}^{\cdot} \varsigma(U^x_s) dW_s$ as a time-changed Brownian motion on (an extension of) the probability space, i.e., $\hat{W}_{\Lambda^{U^x}(\cdot)}$ that we define as follows. 
     Define a time scale $\Gamma^{U^x}(r)$ for $r \in{\mathbb{R}_+}$ as
    \begin{align*}
      \Gamma^{U^x}(r):=\int_0^{r} \frac{dt}{\varsigma^2\big(\mathfrak{s}(X_0^x)+\hat{W}_t\big)}
    \end{align*}
    in terms of a standard Brownian motion $\hat{W}$, and denote by $\Lambda^{U^x}$ the inverse mapping of $ \Gamma^{U^x}$, i.e. $\Lambda^{U^x}(t): = \inf\{r\ge 0: \Gamma^{U^x}(r)>t\}$. The representation 
    \begin{align}\label{pf:lem:X_bound_1}
    U_t^{x} =  \mathfrak{s}(X_0^x) + \hat{W}_{\Lambda^{U^x}(t)}
    \end{align}
    holds. Due to the strict monotonicity of the scale function $\mathfrak{s}$, it is sufficient to prove auxiliary properties of \eqref{eq:lem_bound_0}-\eqref{eq:lem_bound_2} for the process $U^x$ given by \eqref{pf:lem:X_bound_1}.
    For convenience of presentation, we adopt the notation 
    \begin{align*}
        &\zeta^U_{y|x} :=\inf\{t\ge 0: U_t^x = \mathfrak{s}(y)\}. 
    \end{align*}
    From the continuous path properties of standard Brownian motion together with the fact that $\Lambda^{U}(t)>0$ holds for all $t>0$, it is proved (see e.g.\cite[II.1.3]{borodin2015handbook})
    \begin{align*}
    &\mathbb{P}\Big(\big \{\omega\in\Omega: \exists \epsilon(\omega)>0 \text{ s.t. } U_t^x \le \mathfrak{s}(x),\forall t\in[0,\epsilon(\omega)] \big \}\Big)=0,\\
    & \mathbb{P}\Big(\big \{\omega\in\Omega: \exists \epsilon(\omega)>0 \text{ s.t. } U_t^x \ge \mathfrak{s}(x),\forall t\in[0,\epsilon(\omega)] \big \} \Big)=0.
    \end{align*}
    Given the above statements, we have for any fixed $\delta>0$,
    \begin{align*}
       &\mathbb{P}(  \zeta^U_{a|x} \ge \delta) = \mathbb{P}\Big(\inf_{t\in[0,\delta]} U_t^x \ge \mathfrak{s}(a) \Big) \rightarrow \mathbb{P}\Big(\inf_{t\in[0,\delta]} U_t^x > \mathfrak{s}(x) \Big) = 0,\\
       &   \mathbb{P}(  \zeta^U_{b|x} \ge \delta) = \mathbb{P}\Big(\sup_{t\in[0,\delta]} U_t^x \ge \mathfrak{s}(b)\Big) \rightarrow \mathbb{P}\Big(\sup_{t\in[0,\delta]} U_t^x < \mathfrak{s}(x) \Big) = 0, 
    \end{align*}
    as $a\uparrow x$ and $b\downarrow x$ respectively.
    Hence, we conclude \eqref{eq:lem_bound_0}.
    
    To show \eqref{eq:lem_bound_2}, we recognize from the time change property that 
    \begin{align*}
        \max_{0\le s\le \delta\wedge{\zeta}^U_{0|x}\wedge{\zeta}^U_{y|x}} U_s^x &= \max_{0\le r \le \Lambda^{U^x}(\delta)\wedge\Lambda^{U^x}(\zeta^U_{0|x})\wedge\Lambda^{U^x}(\zeta^U_{y|x})} \mathfrak{s}(x)+ \hat{W}_r\\
        &= \max_{0\le r \le \Lambda^{U^x}(\delta)\wedge\hat{\zeta}_{0|x}\wedge\hat{\zeta}_{y|x}} \mathfrak{s}(x)+ \hat{W}_r,
    \end{align*}
    where 
    $
      \hat{\zeta}_{y|x} = \inf\{ r\ge 0: B_x(r)= \mathfrak{s}(y) \}$, with $B_x(r):= \mathfrak{s}(x)+ \hat{W}_r$. 
    
    By the monotonicity of $\mathfrak{s}$, $B_x(t,\omega)$ decreases as $x\downarrow 0$.
    By the law of iterated logarithm (see Revuz and Yor~\cite[Chapt.~II, Theorem~19]{revuz2013continuous}), we have $\hat{\zeta}_{0|x}\rightarrow0$ as $x\downarrow 0$. By definition, $\Lambda^{U^x}(\delta)$ is finite a.s. for $\delta\in[0,\infty)$. Since $B_x(r)$ has continuous path, considering a decreasing sequence $(x_n)_{n\in\mathbb{N}}$ with  $x_n<y$ and $\lim_{n\rightarrow\infty} x_n=0 $, we have 
    $
    \lim_{n\rightarrow\infty} \max_{0\le r \le \Lambda^{U^x}(\delta)\wedge\hat{\zeta}_{0|x_n}\wedge\hat{\zeta}_{y|x_n}} B_{x_n}(r) =  \mathfrak{s}(0),$
    and 
    $\max_{0\le r \le \Lambda^{U^x}(\delta)\wedge\hat{\zeta}_{y|x}}B_x(r)\le \mathfrak{s}(y)$; hence \eqref{eq:lem_bound_2} holds.  
    \end{proof}
    
    \begin{lem}\label{lem:v_bound_con} Suppose that Assumptions \ref{as:recurrent} and \ref{as:default} and the condition \eqref{as:bound} in Assumption \ref{as:bdy_cond} hold. Then it holds that for $x>0$, 
    \begin{align}\label{eq:est_V}\xi_0 \le {\cal V}(x) \le x+ \xi_0 + 
     {\overline{\mu}}/{\rho},
     \end{align} 
     where $\overline{\mu}$ is the constant that dominates function $\mu(x)-\rho x$ in \eqref{as:bound}. In addition, ${\cal V}$ is continuous on $(0,\infty)$ and non-decreasing subject to 
    		\begin{align}
    			{\cal V}(0+) = \xi_0. \label{eq:boundary_con_B} 
    		\end{align}
    \end{lem}

    \begin{proof}

{We prove \eqref{eq:boundary_con_B} in detail. The proofs of the remaining claims are standard and thus omitted. }
Fix $\delta>0$ and $y>0$, in view of Lemma~\ref{lem:X_bound}, for any $\epsilon\in(0,1)$, choose $x>0$ such that 
    		\begin{align*}
    			\mathbb{P}(\zeta_{0|x} < \delta)  \ge  1-\epsilon,\qquad \mathbb{E} \bigg[\max_{0\le s\le \delta\wedge{\zeta}_{{0|x}}\wedge\zeta_{y|x}} X_s^x\bigg] \le \epsilon.
    		\end{align*}
    	Set $\hat{\tau} = \tau^{x,D} \wedge \zeta_{y|x} \wedge \delta$. Since $0\le X^{x,D}\le X^x$, we have $\tau^{x,D}\le \zeta_0$ and 
        \begin{align}\label{eq:stopp_est}
        \mathbb{P}(\tau^{x,D} <\zeta_{y|x} \wedge \delta) \ge \mathbb{P}(\zeta_{0|x} <\zeta_{y|x} \wedge \delta) \ge 1-\epsilon.
        \end{align}
     
     Since $X^{x,D}_t \ge 0$ for all $t\ge 0$, we have 
     $D_{\hat{\tau}}\le X_{\hat{\tau}}^x \le \max_{0\le s\le \hat{\tau}}X_s^x\le \max_{0\le s\le \delta\wedge{\zeta}_{{0|x}}\wedge\zeta_{y|x}} X_s^x $.
     Therefore $\mathbb{E}_x[D_{\hat{\tau}}]\le \epsilon$. 

    Combining this with \eqref{eq:stopp_est} ensures that ${\cal V}_0^{x,D,\theta^0}$ in \eqref{dfn:rbs_V_theta} with $\theta^0= (\theta^0(t)\equiv 0 )_{t\ge0}\in \Theta^D$ satisfies
    		\begin{align*}
    			&{\cal V}_0^{x,D,\theta^0}
    			= \mathbb{E}\bigg[\int_0^{\hat{\tau}} e^{-\rho t} dD_t + \mathds{1}_{\hat{\tau}<\tau^{x,D}}\int_{\hat{\tau}}^{\tau^{x,D}} e^{-\rho t} dD_t + e^{-\rho \tau^{x,D}}\xi_0  \bigg]\\
    			&\le \mathbb{E}[D_{\hat{\tau}}] +  \mathbb{E}\bigg[  \mathds{1}_{\hat{\tau}<\tau^{x,D}}  \mathbb{E}_{\hat{\tau}}\bigg[\int_{\hat{\tau}}^{\tau^{x,D}} e^{-\rho t} dD_t \bigg]\bigg] +  \xi_0 \\
    			&\le \epsilon + \mathbb{E}[  \mathds{1}_{\hat{\tau}<\tau^{x,D}} {\cal V}(y)] + \xi_0 = \epsilon + {\cal V}(y)\mathbb{P}(\hat{\tau}<\tau^{x,D})+\xi_0 \le \xi_0 + \big(1+{\cal V}(y)\big)\epsilon.
    		\end{align*}
    		Due to the arbitrariness of $\epsilon$, $\inf_{\theta\in \Theta^D}{\cal V}_0^{0+,D,\theta}\le {\cal V}_0^{0+,D,\theta^0} \le  \xi_0$, thus by arbitrariness of $D$, ${\cal V}(0+)=\sup_{D\in{\cal A}^{0+}}\inf_{\theta\in \Theta^D}{\cal V}_0^{0+,D,\theta} \le  \xi_0$; together with  ${\cal V}(0+) \ge \xi_0$ by \eqref{eq:est_V}, we conclude the validity of \eqref{eq:boundary_con_B}.

            Lastly, we prove that ${\cal V}$ is continuous at any $x>0$.
      For arbitrary $\theta\in\Theta$, let the measure $\mathbb{Q}^{\theta}$ be defined in \eqref{eq:measureQ} and Brownian motion $W^{\theta}$ under $\mathbb{Q}^{\theta}$. Then the uncontrolled surplus $X^x$ in \eqref{eq:uncontrolX} satisfies
      \begin{align*}
          dX_t^x = \big(\mu(X_t^x) + \sigma(X_t^x)\theta(X_t^x) \big)dt + \sigma(X_t^x) dW^{\theta}_t,\quad X_0 = x.
      \end{align*}
      For arbitrary $x>0$, define a stopping time $\zeta^{\mathbb{Q}}_{0,y|x}:=\inf\{t\ge0: X_t^x \notin (0,y)\}$ under $\mathbb{Q}$. It is well-known from the classical theory of diffusion that
    		\begin{align}\label{pf:prop_v_cts_1}
    			\mathbb{Q}(X^x_{\zeta^{\mathbb{Q}}_{0,y|x}} = y) = \frac{\mathfrak{s}^{\mathbb{Q}}(x) -  \mathfrak{s}^{\mathbb{Q}}(0)}{ \mathfrak{s}^{\mathbb{Q}}(y) -  \mathfrak{s}^{\mathbb{Q}}(0) },
    		\end{align}
    		where $ \mathfrak{s}^{\mathbb{Q}}$ is the {\it scale function} of $X_t^x$ under $\mathbb{Q}$, that is, 
      $$\mathfrak{s}^{\mathbb{Q}}(x) : = \int_{0}^{x}\exp\bigg(-2\int_0^z\frac{\mu(y)+\sigma(y)\theta(y)}{\sigma^2(y)}dy\bigg)dz.$$
     \eqref{pf:prop_v_cts_1} implies that for arbitrary $\epsilon>0$, there exists $y({\epsilon}) > x$ such that 
     $$\mathbb{Q}(X^x_{\zeta^{\mathbb{Q}}_{0,y|x}} = y) \ge 1-\epsilon\quad \mbox{for}\quad x< y<y(\epsilon).$$ Then, following similar arguments for proving \eqref{eq:lem_bound_0} in Lemma \ref{lem:X_bound}, we have for any $\epsilon>0$, there exists $y'(\epsilon)>x$ such that $\mathbb{Q}(\zeta^{\mathbb{Q}}_{y|x}<\zeta^{\mathbb{Q}}_{0|x}\wedge \epsilon) \ge 1-\epsilon$ for $x<y<y'(\epsilon)$. 
     For arbitrary $D\in \mathcal{A}^y$ with $x<y<y'(\epsilon)$, we construct $D\in {\cal A}^x$ as follows: $D'_t = 0$ when $t< \zeta^{\mathbb{Q}}_{0,y|x}\wedge\epsilon$ and for $t\ge \zeta^{\mathbb{Q}}_{0,y|x}\wedge\epsilon$, $D'_t = X_{\epsilon}^x$  on $\{\epsilon< \zeta^{\mathbb{Q}}_{0,y|x}\}$,  $D'_t = D_{t-\zeta^{\mathbb{Q}}_{y|x}}$ on $\{\zeta^{\mathbb{Q}}_{y|x}<\epsilon\wedge\zeta^{\mathbb{Q}}_{0|x}\}$, $D_t \equiv 0$ otherwise. 
     
     It follows from \eqref{eq:thm:rob_rec:0} in Section \ref{proof:pro:Maenhoutdef} (noting that ${\cal Y}^\theta={\cal V}^{x,\hat D,\theta}$ under the setting $(x, D',\theta)$ therein) that for $x<y<y'(\epsilon)$,
    		\begin{align*}
    			{\cal V}_0^{x,D',\theta} &= \mathbb{E}^{\theta} \bigg[ \int_{0}^{\tau^{x,D'}} e^{\int_{0}^{s} \frac{\theta^2_u}{2R}du -\rho s} dD'_s + e^{\int_{0}^{\tau^{x,D'}} \frac{\theta^2_u}{2R}du}\xi_{\tau^{x,D'}}\bigg] \\
        &\ge \mathbb{E}^{\theta} \bigg[ \mathds{1}_{\zeta^{\mathbb{Q}}_{y|x}<\zeta^{\mathbb{Q}}_{0|x}}\bigg(\int_{0}^{\zeta^{\mathbb{Q}}_{y|x}} e^{\int_{0}^{s} \frac{\theta^2_u}{2R}du -\rho s} dD_s \\  &\qquad\qquad\qquad\qquad\quad 
        +\int_{\zeta^{\mathbb{Q}}_{y|x}}^{\zeta^{\mathbb{Q}}_{0|y}}   e^{\int_{0}^{s} \frac{\theta^2_u}{2R}du -\rho s} dD_s+ e^{\int_{0}^{\zeta^{\mathbb{Q}}_{y|x}} \frac{\theta^2_u}{2R}du}\xi_{\tau^{x,D'}}\bigg)\bigg]\\
        &\ge \mathbb{E}^{\theta} \bigg[ \mathds{1}_{\zeta^{\mathbb{Q}}_{y|x}<\zeta^{\mathbb{Q}}_{0|x}}\mathbb{E}^{\theta}_{\zeta^{\mathbb{Q}}_{y|x}}\bigg[ \int_{\zeta^{\mathbb{Q}}_{y|x}}^{\zeta^{\mathbb{Q}}_{0|y}}   e^{\int_{0}^{s} \frac{\theta^2_u}{2R}du -\rho s} dD_s + e^{\int_{0}^{\zeta^{\mathbb{Q}}_{y|x}} \frac{\theta^2_u}{2R}du}\xi_{\tau^{x,D'}}\bigg] \bigg]\\
       & \ge \mathbb{Q}(\zeta^{\mathbb{Q}}_{y|x}<\zeta^{\mathbb{Q}}_{0|x}\wedge \epsilon)e^{-\rho\epsilon}{\cal V}^{y,D,\theta}_0 \ge (1-\epsilon)e^{-\rho\epsilon}{\cal V}^{y,D,\theta}_0.
    		\end{align*}
      Hence,
      \begin{align*}
          &\inf_{\theta\in \Theta^D} {\cal V}^{y,D,\theta}_0- {\cal V}(x) \le  \inf_{\theta\in \Theta^D} {\cal V}^{y,D,\theta}_0 - \inf_{\theta\in \Theta^{D'}} {\cal V}^{x,D',\theta}_0 \\
         &\quad \le \inf_{\theta\in \Theta^D} {\cal V}^{y,D,\theta}_0  -(1-\epsilon)e^{-\rho\epsilon}{\cal V}^{y,D,\theta}_0 \le \big(1-(1-\epsilon)e^{-\rho \epsilon}\big) {\cal V}(h),
      \end{align*}
      for some $h>y$ sufficiently large. Therefore ${\cal V}(\cdot)$ is continuous on $(0,+\infty)$. 
    \end{proof}

    \begin{rem}\label{rem:as:bound}
        In dividend optimization problems under model ambiguity, the regularity of the corresponding value function w.r.t. the state process is typically established only under specific models, such as constant-coefficient Brownian motion (Chakraborty et al. \cite{chakraborty2023optimal}).
    Related works without ambiguity, such as Choulli et al.~\cite[Lemma 1]{choulli2003diffusion} and Alvarez and Virtanen~\cite[Lemma 3.1]{alvarez2006class}, assume boundedness of the coefficients of the It\^o process to guarantee such regularity. The condition \ref{as:bound} aligns with such conditions, and it enables to extend the regularity results to general processes even under model ambiguity.
    \end{rem}
Finally, we provide the proof of Theorem~\ref{thm:verification}.
\begin{proof} Theorem~\ref{thm:verification} is proved in two steps. As the proof for the case $x=0$ is obvious, we assume $x\in{\mathbb{R}_+}$. Recall that  ${\cal V}$ is defined in \eqref{eq:obj_maxmin}.  

    \noindent{\bf Step 1.} We claim that for every $x\in{\mathbb{R}_+}$,
    \begin{align}\label{eq:claim1}
    v^*(x)\geq& \sup_{D\in {\cal A}^x}\inf_{\theta \in \Theta^D}\mathbb{E}^{\theta}\bigg[\int_{0}^{\tau^{x,D}}e^{-\rho t} \Big( dD_t +  \frac{\theta_t^2 }{2\cR} v^*(X_t^{x,D})  dt\Big) + \xi_{\tau^{x,D}} \bigg],\\
        v^*(x) \ge& {\cal V}(x).\label{eq:claim1-2}
    \end{align}
    Set an arbitrary $D\in {\cal A}^x$ and temporally define the stopping times $(\hat{\tau}_n)_{n\in \mathbb{N}}$ by 
    \begin{align}\label{eq:stop_bdd_X}
        \hat{\tau}_n:= \inf\{t\ge 0: X^{x,D}_t\notin [1/n,n] \}\wedge n.
    \end{align}
    Recall $\mathcal{L}$ given in \eqref{eq:nonlinear_hamilton}, and define the measure $\mathbb{Q}^*$ via \eqref{eq:measureQ} with the Girsanov kernel $\theta^*:=(\theta^*(D)_s)_{s\geq 0}$ given by 
    \begin{align}
        \theta^*_s:=
        -R \frac{\sigma(X_s^{x,D})(v^*)'(X_s^{x,D})}{v^*(X_s^{x,D})}. \label{eq:worst_kernel}
    \end{align}
    Note that by \eqref{eq:const_v} and Proposition \ref{pro:shooting_sub}, $(v^*)'$ is bounded on $[0,b^*]$ and is identical to $1$ on $(b^*,\infty)$ while $v^*$ is bounded below by $\xi_0$ on ${\mathbb{R}_+}$. Following similar arguments of in Theorem~\ref{thm:rob_rec}, it follows that $\theta^*\in \Theta^D$.
    From this, we can denote by $\mathbb{Q}^*$ the corresponding probability measure obtained from $\theta^*$.
    

    An application of Dynkin formula into the process $e^{-\rho t} v^*(X_t^{x,D})$ on the random time interval $[0,\tau^{x,D}\wedge\hat{\tau}_n]$ under $\mathbb{Q}^*$ ensures that 
    \begin{align}
    &e^{-\rho (\tau^{x,D}\wedge \hat{\tau}_n )}  v^*(X_{\tau^{x,D}\wedge \hat{\tau}_n }^{x,D})=\int_0^{\tau^{x,D}\wedge \hat{\tau}_n } e^{-\rho s}\big(\hat{\cal L}^{\theta^*}v^*(X_s^{x,D})ds-(v^*)'(X_{s}^{x,D})  dD_s\big)\nonumber\\
    &\quad +\int_0^{\tau^{x,D}\wedge \hat{\tau}_n } e^{-\rho s} \sigma(X_{s}^{x,D})(v^*)'(X_{s}^{x,D}) d W^{\mathbb{Q}^*}_s+ v^*(x),\label{eq:thm_veri_pf_1}
    \end{align}
    where $\hat{\cal L}^{\theta^*}v^*(X_s^{x,D}):=({\sigma^2}(v^*)''/2+ (\mu+\sigma\theta_s^*)(v^*)' -\rho v^*)(X_{s}^{x,D}).$
    From the definitions of \eqref{eq:nonlinear_hamilton}, $\theta^*$ in \eqref{eq:worst_kernel}, and given that $v^*(X_s^{x,D})>0$ for $s\in[0,\tau^{x,D})$ and $v^*$ solves \eqref{eq:fbdy}, it follows that for every $s\in[0,\tau^{x,D})$:
    \begin{align*}
        \hat{\cal L}^{\theta^*}v^*(X_s^{x,D})=  {\cal L} v^*(X_s^{x,D})-\frac{(\theta_s^*)^2}{2R}v^*(X_s^{x,D})\leq -\frac{(\theta_s^*)^2}{2R}v^*(X_s^{x,D}).
    \end{align*}
    Furthermore, Since $v^*\in C^2({\mathbb{R}_+})$ (see Proposition \ref{pro:shooting}), $\sigma\in C^1(\mathbb{R}_+)$ (see Assumption \ref{as:bdy_cond}\;i.) and $X^{x,D}$ is bounded in $[0,\tau^{x,D}\wedge \hat{\tau}_n ]$ (see \eqref{eq:stop_bdd_X}), the Brownian local martingale term given in~\eqref{eq:thm_veri_pf_1} is martingales.
    Lastly, since $(v^*)'(x)\geq 1$ for $x\in {\mathbb{R}_+}$ and $v^*(0)=\xi_0$, it follows that $v^*(X_{\tau^{x,D}\wedge \hat{\tau}_n }^{x,D})\geq \xi_0$. 
    Taking the expectation over \eqref{eq:thm_veri_pf_1} under $\mathbb{Q}^*$ gives
    \begin{align*}
    v^*(x) \geq & \mathbb{E}^{\theta^*}\bigg[\int_0^{\tau^{x,D}\wedge \hat{\tau}_n } e^{-\rho s}\bigg(d D_s +\frac{(\theta^*_s)^2}{2R}v^*(X^{x,D}_s) ds\bigg) +e^{-\rho (\tau^{x,D}\wedge \hat{\tau}_n )}  \xi_0 \bigg],
    \end{align*}
    where we have used Assumption \ref{as:default} (i.e., $e^{-\rho (\tau^{x,D}\wedge \hat{\tau}_n )}  \xi_0=\xi_{\tau^{x,D}\wedge \hat \tau_n}$). 
    
    Since $\tau^{x,D}\wedge \hat{\tau}_n  \rightarrow \tau^{x,D}$ as $n\rightarrow \infty$ (see \eqref{eq:stop_bdd_X}), an application of Fatou's lemma 
    together with $\theta^* \in \Theta^D$ (see~\eqref{eq:worst_kernel}) ensures that 
    \begin{align*}
    v^*(x) \geq & \mathbb{E}^{\theta^*}\bigg[\int_0^{\tau^{x,D}} e^{-\rho s}\bigg(d D_s+\frac{(\theta_s^*)^2}{2R}v^*(X^{x,D}_s) ds\bigg)+e^{-\rho \tau^{x,D}}  \xi_0  \bigg] \\
    \geq & \inf_{\theta \in \Theta^D} \mathbb{E}^{\theta}\bigg[\int_0^{\tau^{x,D}} e^{-\rho s}\bigg(d D_s+\frac{(\theta_s^*)^2}{2R}v^*(X^{x,D}_s) ds\bigg)+e^{-\rho \tau^{x,D}}  \xi_0  \bigg]. \nonumber
    \end{align*}
    As the inequality holds for every $D\in {\cal A}^x$, the claim \eqref{eq:claim1} holds.
    
    To show \eqref{eq:claim1-2}, the proof proceeds similarly by applying Dynkin formula to the process $e^{-\rho t + \int_0^{t}\frac{({\theta}^*_s)^2}{2R} ds} v^*(X_t^{x,D})$ on the random time interval $[0,\tau^{x,D}\wedge\hat{\tau}_n]$ and take expectation. 
    Then we obtain that 
    \begin{align*}
    v^*(x) \geq \mathbb{E}^{\theta^*}\bigg[\int_0^{\tau^{x,D}} e^{-\rho s+\int_0^{s} \frac{(\theta_u^*)^2}{2R} du}d D_s+e^{-\rho \tau^{x,D}+\int_0^{\tau^{x,D}} \frac{(\theta_u^*)^2 }{2R}du}  \xi_0  \bigg]= {\cal V}^{x,D,\theta^*}_0,
    \end{align*}
    where the equality holds by \eqref{eq:thm:rob_rec:0} and Assumption \ref{as:default}. 
    The last term is greater than $\inf_{\theta\in \Theta^D} {\cal V}^{x,D,\theta}$. Taking supreme over ${D\in\mathcal{A}^x}$ on both sides of the above equation gives \eqref{eq:claim1-2}.
    
    \noindent {\bf Step 2.} Let $D^*=D(b^*)$ be the $b^*$-threshold strategy. We show that 
    \begin{align}\label{eq:claim2}
    v^*(x)\leq& \inf_{\theta \in \Theta^{D^*}}\mathbb{E}^{\theta}\bigg[\int_{0}^{\tau^{x,D^*}}e^{-\rho t} \Big(dD^*_t +  \frac{\theta_t^2}{2R}  v^*(X_t^{x,D^*}) dt \Big) + \xi_{\tau^{x,D^*}} \bigg],\\
    \label{eq:claim2-2}
        v^*(x) \le& {\cal V}(x).
    \end{align}
   We split the proof into two cases: $x\in(0,b^*]$ and $x\in(b^*,\infty)$. By the definition of $b^*$-threshold strategy (see Definition \ref{def:thres_divid}), for every $s\in(0,\tau^{x,D^*}]$, we have 
    \begin{align}\label{eq:skorokhod_reflect}
        X^{x,D^{*}}_s \in [0, b^*],\quad dD^*_s>0\quad \mbox{only when $X_s^{x,D^*}=b^*$}.
    \end{align}
    Furthermore, if $x\in(0,b^*]$, the above properties hold for every $s\in[0,\tau^{x,D^*}]$. 
    {\it Case 1.} $x\in(0,b^*]$. Let $\theta \in \Theta^{D^*}$, and $\mathbb{Q}^{\theta}$ be the corresponding probability measure induced by the kernel $\theta$. Furthermore, as in \eqref{eq:stop_bdd_X}, 
    we define for every $n\in \mathbb{N}$ by $\hat{\tau}_n:= \inf\{t\ge 0: X^{x,D^*}_t\notin [1/n,n] \}\wedge\tau^{x,D^*}.$
    An application of Dynkin formula to the process $e^{-\rho t} v^*(X_t^{x,D^*})$ on the random time interval $[0,\hat{\tau}_n]$ under $\mathbb{Q}^{\theta}$ gives that \begin{align}\label{eq:thm_veri_pf_10}
    \begin{aligned}
    v^*(x)
    &=\mathbb{E}^{\theta}\bigg[e^{-\rho \hat{\tau}_n}  v^*(X_{\hat{\tau}_n}^{x,D^*})-\int_0^{\hat{\tau}_n} e^{-\rho s} (\hat{\cal L}^{\theta}v^*(X_s^{x,D^*})ds-dD^*_s ) \bigg],
    \end{aligned}
    \end{align}
    where $\hat{\cal L}^{\theta}v^*(X_s^{x,D^*}):=(\sigma^2(v^*)''/2 + (\mu+\sigma\theta_s)(v^*)' -\rho v^*)(X_{s}^{x,D^*})$
    and we use the fact that $\mathbb{E}^{\theta}[\int_0^{\tau^{x,D^*}\wedge \hat{\tau}_n } e^{-\rho s} \sigma(X_{s}^{x,D^*})(v^*)'(X_{s}^{x,D^*}) d W^{\mathbb{Q}}_s]=0$ and that 
    $dD^*_s>0$ only when $(v^*)'(X_{s}^{x,D^*})=1$ for $s\in[0,\tau^{x,D^*}]$.
    
    From \eqref{eq:skorokhod_reflect} and \eqref{eq:fbdy}, it follows that ${\cal L}v^*(X_s^{x,D^*})=0$ for $s\in [0,\tau^{x,D^*}]$, where  ${\cal L}$  is defined in \eqref{eq:nonlinear_hamilton}. Thus, for $s\in [0,\tau^{x,D^*}]$,
    \begin{align}\label{eq:claim3}
        \hat{\cal L}^{\theta}v^*(X_s^{x,D^*})\geq {\cal L}v^*(X_s^{x,D^*})- \frac{(\theta_s)^2}{2R} v^*(X_s^{x,D^*}) = - \frac{(\theta_s)^2}{2R} v^*(X_s^{x,D^*}).
    \end{align}
    By \eqref{eq:claim3} and \eqref{eq:thm_veri_pf_10}, we hence obtain that  \begin{align*}
    v^*(x)\leq \mathbb{E}^\theta \bigg[\int_0^{\hat{\tau}_n} e^{-\rho s} \bigg(dD^*_s +\frac{(\theta_s)^2}{2R} v^*(X_s^{x,D^*})ds\bigg)+e^{-\rho \hat{\tau}_n}  v^*(X_{\hat{\tau}_n}^{x,D^*}) \bigg].
    \end{align*}
    The claim \eqref{eq:claim2} holds for all $x\in(0,b^*]$ by passing the limit of $n\rightarrow \infty$ in the above equation and using the fact that $v^*(X_{\tau^{x,D^*}}^{x,D^*})=v^*(0)=\xi_0$ (see \eqref{eq:fbdy}).

     \noindent{\it Case 2.} $x\in (b^*,\infty)$. We have $v^*(x)=x-b^*+\psi^+(b^*)$ by \eqref{eq:const_v}. 
     
     As $D^*(=D(b^*))$ starts with an instantaneous increase with $x-b^*$, we have
    \begin{align*}
    &\mathbb{E}^\theta \bigg[\int_0^{\tau^{x,D^*}} e^{-\rho s} \bigg(dD^*_s+\frac{(\theta_s)^2}{2R} v^*(X_s^{x,D^{*}})ds \bigg)+e^{-\rho \tau^{x,D^*}}  \xi_0 \bigg]\geq v^*(x)
    \end{align*}
    holds for every $\theta \in \Theta^{D^*}$, where we have used the inequality \eqref{eq:claim2} with $x=b^*$ (proved in Case 1) and $v^*(b^*)=\psi^+(b^*)$.

    To show \eqref{eq:claim2-2}, similar to the analysis in {Step 1}, an application of Dynkin formula to the process $e^{-\rho t \int_0^{t}({\theta}_s)^2/(2 R) ds} v^*(X_t^{x,D^*})$ gives
    \begin{align*}
        v^*(x) \le \inf_{\theta\in{\Theta^{{D}^*}}} {\cal V}_0^{x,D^*,\theta} \le \sup_{D\in\mathcal{A}^x} \inf_{\theta\in{\Theta^{D}}} {\cal V}_0^{x,D,\theta} = {\cal V}(x),\quad x\in{\mathbb{R}_+}.
    \end{align*}
    This completes the proof.  
\end{proof}

    \subsection{Proof of results in Section \ref{sec:5}}\label{sec:10}
    We provide some observations used for the proof of Theorem \ref{thm:cts_VEZ}. 
    \begin{lem}\label{lem:10.1} 
    For any $(x,D)\in \mathbb{R}_+\times {\cal A}^x$, we assume that $\xi_{\tau^{x,D}}>C$ $\mathbb{P}$-a.s. with some constant $C>0$. Then the following hold: for every~${\cR}\in(0,1)$, 
\begin{enumerate}[label=\roman*.]
\item ${\cal V}^{x,D}(\cR)$ admits the following representation: for $t\in[0,T]$
\begin{align*}
    {\cal V}^{x,D}_t(\cR) = \essinf_{\theta\in\Theta}  \mathbb{E}^{\theta}_t\bigg[ \int_{t\wedge\tau^{x,D}}^{\tau^{x,D}}\bigg( e^{-\rho s} dD_s+ \frac{{\cal V}^{x,D}_s(\cR) \theta_s^2}{2\cR} ds \bigg) + \xi_{\tau^{x,D}}\bigg],
\end{align*}
where $\Theta$ is defined in Section \ref{sec:notation} (which does not depend on $\cR$).
\item There exists some $\mathbb{F}$-adapted, positive process $(\overline{c}'_t)_{t\geq 0}$ (which does not depend on $\cR$) such that $\mathbb{E}[\sup_{t\ge0} \overline{c}'_t]<\infty$ and
\begin{align*}
    \log\big({\cal V}_t^{x,D}(\cR)\big) \ge \frac{\log(K_t^{x,D}) - \cR \overline{c}'_t}{1-\cR},\quad  \text{ $\mathbb{P}$-a.s.}\quad \text{for }t \ge 0.
\end{align*}
\end{enumerate}
\end{lem}
\begin{proof} 
Fix $(x,D)\in \mathbb{R}_+\times {\cal A}^x$ and $\cR\in(0,1)$. It follows from \eqref{dfn:rbs_V} that for all $\theta\in\Theta^D(\cR)$, ${\cal V}^{x,D,\theta}(\cR)\ge {\cal V}^{x,D}(\cR)$, and that for all $t\geq 0$ 
\begin{align*}
    {\cal V}^{x,D}_t(\cR) \geq \essinf_{\theta\in\Theta^D(\cR)}  \mathbb{E}^{\theta}_t\bigg[ \int_{t\wedge\tau^{x,D}}^{\tau^{x,D}}\bigg( e^{-\rho s} dD_s+ \frac{{\cal V}^{x,D}_s(\cR) \theta_s^2}{2\cR} ds \bigg) + \xi_{\tau^{x,D}}\bigg].
\end{align*}
We get the inequality in the $\ge$ direction of part i as $\Theta^D(\cR)\subset\Theta$. 

For the reverse inequality, we 
note that for every $\theta \in \Theta$
\begin{align*}
    -\frac{\cR}{2} \frac{(\mathbb{Z}^{x,D}(\cR))^2}{{\cal V}^{x,D}(\cR)} = \inf_{\theta } \Big\{ \frac{{\cal V}^{x,D}(\cR)\theta^2}{2\cR} + \mathbb{Z}^{x,D}(\cR)\theta \Big\}\leq \frac{{\cal V}^{x,D}(\cR)\theta^2}{2\cR} + \mathbb{Z}^{x,D}(\cR)\theta.
\end{align*}

Substituting this into \eqref{eq:Vrob_BSDE}, taking conditional expectation under $\mathbb{Q}^{\theta}$ 
for each $\theta \in \Theta$, and then taking the essential infimum over $\theta\in\Theta$ yield the inequality in the $\le$ direction of part i.
This completes the proof of part i.

We now prove part ii. Set $R\equiv {\cal R}$. By \eqref{eq:EZ_V} and \eqref{dfn:rbs_V_theta}, we have for $t\geq 0$,
\begin{align*}
    &K_t^{x,D} - V^{x,D}_t(\cR)\nonumber\\
    &\quad \le \mathbb{E}_t\bigg[\xi_{\tau} - \xi_{\tau}^{1-\cR} + \int_{t\wedge\tau^{x,D}}^{\tau^{x,D}} e^{-\rho s} \cR V^{x,D}_s(\cR) \, dD_s \bigg] \nonumber\\
    &\quad \le \cR \bigg( \mathbb{E}_t[\xi_{\tau}^2] + \mathbb{E}_t\bigg[\sup_{s \ge t} (V_s^{x,D}(\cR))^2\bigg] \mathbb{E}_t\bigg[ \bigg(\int_0^{\tau^{x,D}} e^{-\rho s} dD_s \bigg)^2 \bigg] \bigg).
\end{align*}
The first inequality uses the concavity of \(f(x) = 1 - (1-\cR)x^{\frac{-\cR}{1-\cR}}\), which implies \(f(x) \le f(1) + f'(1)(x-1) = \cR + \cR(x-1)\). The second inequality follows from \(x - x^{1-\cR} \le \cR x^2\) for \(x > 0\) and \(\cR \in (0,1)\). By assumption, the RHS is finite $\mathbb{P}$-a.s., so there is some ${\cal F}_t$-measurable \(\overline{c}_t\) (not depending on \(\cR\)) such that $K_t^{x,D} - V_t^{x,D}(\cR) \le \cR \overline{c}_t,$ $\mathbb{P}$-a.s.. 

Furthermore, \(\overline{c}_t\) can be chosen so that \(\mathbb{E}[\sup_{t\ge0} \overline{c}_t] < \infty\). Indeed, it holds that \(\mathbb{E}[\sup_{t\ge0} |K^{x,D}_t - V^{x,D}_t(\cR)|] < \infty\) (by \(\mathbb{E}[\sup_{t\ge0} |K^{x,D}_{t \wedge \tau^{x,D}}|^2] < \infty\) and \(\mathbb{E}[\sup_{t\ge0} |V^{x,D}_{t \wedge \tau^{x,D}}(\cR)|^2] < \infty\), with $K^{x,D}_t=V^{x,D}_t(\cR)=\xi_{\tau^{x,D}}$ on $t\ge\tau^{x,D}$.)

Using \(V^{x,D}({\cal R}) = ({\cal V}^{x,D}(\cR))^{1-\cR}\) (by Theorem \ref{thm:rob_rec} and $R\equiv {\cal R}$), we have 
\begin{align*}
    &\lambda \log(K_t^{x,D}) + (1-\lambda)(1-\cR) \log\big({\cal V}^{x,D}_t(\cR)\big)\\
    &\quad\le \log\Big( \lambda K^{x,D}_t + (1-\lambda) \big({\cal V}^{x,D}_t(\cR)\big)^{1-\cR} \Big)\quad \mbox{for any \(\lambda \in (0,1)\)}.
\end{align*}
Applying Taylor's expansion and taking the limit as \(\lambda \downarrow 0\), we deduce
\[
\log(K_t^{x,D}) - \log({\cal V}^{x,D}_t(\cR)) \le \frac{\cR \overline{c}_t}{V_t^{x,D}(\cR)} - \cR \log\big({\cal V}^{x,D}_t(\cR)\big),\quad \text{$\mathbb{P}$-a.s..}
\]
Setting \(\overline{c}'_t = \overline{c}_t/C\) (with the lower bound \(C\) of \(V_t^{x,D}(\cR)\)), we have part ii.
 
\end{proof}

We now provide the proof of Theorem~\ref{thm:cts_VEZ}.
\begin{proof}
Fix $(x,D)\in \mathbb{R}_+\times {\cal A}^x$. We first prove the theorem under the assumption that $\xi_{\tau^{x,D}} > C$ $\mathbb{P}$-a.s. with some constant $C > 0$ (as in Lemma~\ref{lem:10.1}). The general case follows by considering the truncated sequence ${\xi}_{\tau^{x,D}}^{(n)} = \frac{1}{n} \vee \xi_{\tau^{x,D}}$ for each $n \in \mathbb{N}$, and then letting $n \to \infty$.
We present the proof in two steps.

\noindent {\it Step 1. Monotonicity.} Let $0 < \cR_1 < \cR_2 < 1$. For any $\theta \in \Theta$, 
$$
\frac{1}{2\cR_1} \int_0^t \theta_s^2 \, ds \ge \frac{1}{2\cR_2} \int_0^t \theta_s^2 \, ds, \quad \forall t \ge 0.
$$
By the representation of ${\cal V}^{x,D,\theta}$ in~\eqref{eq:thm:rob_rec:0} (see Proposition~\ref{pro:Maenhoutdef}), the inequality ${\cal V}^{x,D,\theta}_t(\cR_1)  \ge {\cal V}^{x,D,\theta}_t(\cR_2),$ $t\ge0,$ holds for all $\theta\in\Theta$ (even if not in the admissible set $\Theta^D(\cR_1)$ of $\cR_1$). 
From this, we have for $t\geq 0$ 
\begin{align*}
    {\cal V}^{x,D}_t(\cR_1) =& \essinf_{\theta\in\Theta^D(\cR_1)}  {\cal V}_t^{x,D,\theta}(\cR_1) \\
\ge& \essinf_{\theta\in\Theta^D(\cR_2)}  {\cal V}_t^{x,D,\theta}(\cR_1) \ge \essinf_{\theta\in\Theta^D(\cR_2)}  {\cal V}_t^{x,D,\theta}(\cR_2) =  {\cal V}_t^{x,D}(\cR_2).
\end{align*}
The first inequality holds because \(\Theta^D(\cR_1) \subseteq \Theta^D(\cR_2)\); as \(\cR\) increases, the integrability conditions in \eqref{eq:ThetaD} become less restrictive.
On the other hand, since $\theta^0 := (\theta^0_t)_{t \ge 0} \equiv 0 \in \Theta^D(\cR)$ for all $\cR\in(0,1)$, it holds that
$$
{\cal V}^{x,D}_t(\cR) = \essinf_{\theta\in \Theta^D(\cR)} {\cal V}_t^{x,D,\theta}(\cR) \le {\cal V}_t^{x,D,\theta^0}(\cR) \equiv K_t^{x,D}, \quad t\ge0,\quad \cR\in(0,1).
$$

\noindent{\it Step 2. Continuity.}
Let \(0 < \cR_1 < \cR_2 < 1\) and let \(\theta^{*,2} \in \Theta^D(\cR_2) \subset \Theta\) denote the worst-case kernel for \(\cR_2\) so that \({\cal V}^{x,D}(\cR_2) = {\cal V}^{x,D,\theta^{*,2}}(\cR_2)\) (see Lemma~\ref{lem:rob_rec_strict}). Specifically,
\begin{align}\label{eq:theta*2}
    \theta^{*,2}_t := - \cR_2 \frac{{\mathbb Z}^{x,D}_t(\cR_2)}{{\cal V}^{x,D}_t(\cR_2)}  = -\frac{\cR_2}{1-\cR_2} \frac{Z^{x,D}_t(\cR_2)}{V^{x,D}_t(\cR_2)}, \quad t\ge0.
\end{align}
Let $\eta^{\theta^{*,2}}$ be defined as in \eqref{eq:eta} with $\theta=\theta^{*,2}$. By Lemma \ref{lem:10.1} i., 
\begin{align*}
    {\cal V}^{x,D}_t(\cR_1) \le & \mathbb{E}_t^{\theta^{*,2}}\bigg[ \int_{t\wedge\tau^{x,D}}^{ \tau^{x,D}} e^{-\rho s} dD_s + \xi_{\tau^{x,D}}+ \frac{1}{2\cR_1} \int_{t\wedge\tau^{x,D}}^{\tau^{x,D}} {\cal V}^{x,D}_s(\cR_1) (\theta^{*,2}_s)^2 ds\bigg]\\
     = &{\cal V}^{x,D}_t(\cR_2) +\frac{1}{2}\Big(\frac{1}{\cR_1}-\frac{1}{\cR_2}\Big) \mathbb{E}_t^{\theta^{*,2}}\bigg[ \int_{t\wedge\tau^{x,D}}^{\tau^{x,D}} (\theta^{*,2}_s)^2 {\cal V}^{x,D}_s(\cR_1) ds \bigg]\\
    &+ \mathbb{E}_t^{\theta^{*,2}}\bigg[ \int_{t\wedge\tau^{x,D}}^{\tau^{x,D}} \frac{(\theta^{*,2}_s)^2}{2\cR_2} \big( {\cal V}^{x,D}_s(\cR_1)-{\cal V}^{x,D}_s(\cR_2)\big)ds \bigg].
\end{align*}
Since ${\cal V}_t^{x,D}(\cR_1)\ge {\cal V}_t^{x,D}(\cR_2)$ (by Step 1), Gr\"onwall's inequality gives
\begin{align*}
    &{\cal V}^{x,D}_t(\cR_1)-{\cal V}^{x,D}_t(\cR_2) \\
    &\quad\le \frac{\cR_2-\cR_1}{2\cR_1\cR_2}\mathbb{E}_t^{\theta^{*,2}}\bigg[\int_{t\wedge\tau^{x,D}}^{\tau^{x,D}} e^{\int_{t\wedge\tau^{x,D}}^{s}\frac{(\theta^{*,2}_u)^2}{2\cR_2}du} (\theta^{*,2}_s)^2 {\cal V}^{x,D}_s(\cR_1) ds \bigg]\\
    &\quad = \frac{\cR_2-\cR_1}{2\cR_1\cR_2}\mathbb{E}_t\bigg[\int_{t\wedge\tau^{x,D}}^{\tau^{x,D}} \eta_s^{\theta^{*,2}} e^{\int_{t\wedge\tau^{x,D}}^s\frac{(\theta^{*,2}_u)^2}{2\cR_2}du} (\theta^{*,2}_s)^2 {\cal V}^{x,D}_s(\cR_1) ds \bigg]:=\mathrm{I}_t.
\end{align*}
Using It\^o's formula for \(\log( {\cal V}^{x,D}_s(\cR_2))\), a direct calculation gives 
\begin{align*}
    \eta_s^{\theta^{*,2}} e^{\int_{t}^s\frac{(\theta^{*,2}_u)^2}{2\cR_2}du} &= \exp\bigg\{\cR_2\bigg(\log\Big(\frac{{\cal V}^{x,D}_t(\cR_2))}{{\cal V}^{x,D}_s(\cR_2))}\Big)-\int_{t}^s\frac{e^{-\rho u}}{{\cal V}^{x,D}_u(\cR_2))}dD_u \bigg)\bigg\}\\
    &\le \bigg(\frac{ {\cal V}^{x,D}_t(\cR_2))}{{\cal V}^{x,D}_s(\cR_2))}\bigg)^{\cR_2}.
\end{align*}
Hence, using the explicit form of \(\theta^{*,2}\) in \eqref{eq:theta*2}, we have
\begin{align*}
    \mathrm{I}_t\le&  \frac{\cR_2-\cR_1}{2\cR_1\cR_2} \mathbb{E}_t\bigg[\int_{t\wedge\tau^{x,D}}^{\tau^{x,D}}\bigg( \frac{\cR_2^2}{(1-\cR_2)^2}\frac{|Z_s^{x,D}(\cR_2)|^2}{V_s^{x,D}(\cR_2) \big({\cal V}^{x,D}_{s}(\cR_2)\big)^{1-\cR_2}} \\
    &\quad\quad\quad\quad\quad\quad\quad\quad\quad \times  \bigg(\frac{ {\cal V}^{x,D}_t(\cR_2))}{{\cal V}^{x,D}_s(\cR_2))}\bigg)^{\cR_2} {\cal V}^{x,D}_s(\cR_1)\bigg) ds \bigg]\\
    \le&\frac{(\cR_2-\cR_1)\cR_2}{2\cR_1(1-\cR_2)^2}  \frac{\big({\cal V}_t^{x,D}(\cR_2)\big)^{\cR_2}}{C^{\frac{1}{1-\cR_2}}} \mathbb{E}_t\bigg[ \sup_{t\ge 0}  \frac{ K^{x,D}_t}{ V^{x,D}_t(\cR_2)} \int_{t\wedge\tau^{x,D}}^{\tau^{x,D}} |Z^{x,D}_s(\cR_2)|^2 ds \bigg],
\end{align*}
where \(C^{\frac{1}{1-\cR_2}}\) is a lower bound for \({\cal V}_s^{x,D}(\cR_2)\), and we have used the fact that \({\cal V}^{x,D}_t(\cR_1) \le K^{x,D}_t\). To bound the expectation, we use Lemma \ref{lem:10.1} ii to have
\begin{align*}
    \sup_{t\ge0} \frac{ K_t^{x,D}}{ V^{x,D}_t(\cR_2)} \le \sup_{t\ge0} \Big(1+\cR_2 \frac{\overline{c}_t}{C}\Big) =\sup_{t\ge0} (1+\cR_2 \overline{c}_t')<\infty, \text{ ${\mathbb P}$-a.s.,}
\end{align*}
where \(\overline{c}_t'\) is the constant from Lemma \ref{lem:10.1} ii, and we use \(\mathbb{E}[\sup_{t\ge0} \overline{c}_t'] < \infty\).
Together with $\int_0^{\tau^{x,D}} |Z_t^{x,D}(\cR_2)|^2 dt \in L^1({\cal F}_{\tau})$, there exists \(C'_t>0\) such that
\begin{align*}
    0 \le {\cal V}^{x,D}_t(\cR_1) -  {\cal V}^{x,D}_t(\cR_2)
    \le C'_t (\cR_2 - \cR_1), \text{ ${\mathbb P}$-a.s..}
\end{align*}
This establishes the continuity of $(0,1) \ni \cR \mapsto {\cal V}_t^{x,D}(\cR)$. In addition, the limit $\lim_{\cR \downarrow 0} {\cal V}_t^{x,D}(\cR) = K^{x,D}_t$ follows from~Lemma \ref{lem:10.1} ii, so \eqref{eq:Vrob_lim} holds.
 
\end{proof}

We next provide the proof of Corollary~\ref{coro:sec5}.
\begin{proof} Since ${\cal V}_t^{x}({\cal R}) = \esssup_{D\in {\cal A}^x} {\cal V}^{x,D}_t(\cR)$, the monotonicity and lower semi-continuity from left follows directly from Theorem~\ref{thm:cts_VEZ}: 
$$\liminf_{\cR \uparrow \tilde{\cR}} {\cal V}_t^{x}({\cal R}) =  \liminf_{\cR \uparrow \tilde{\cR}} \esssup_{D\in {\cal A}^x}{\cal V}_t^{x,D}({\cal R}) \ge \esssup_{D\in {\cal A}^x}{\cal V}_t^{x,D}(\tilde {\cal R}) = {\cal V}_t^{x}(\tilde{\cR}).$$
For the right continuity, we observe that $\lim_{\cR \downarrow \tilde{\cR}} {\cal V}_t^{x}({\cal R}) =\sup_{\cR \le \tilde{\cR} }{\cal V}_t^{x}({\cal R}) $, and the interchange of supremum arguments yields the desired result. 
     
\end{proof}

Lastly, we provide the proof of Theorem~\ref{thm:cts_b*}.
\begin{proof}
We denote by \(\underline{b}_{\cR}\), \(\overline{b}_{\cR}\), and \(\hat{b}_{\cR}\) the constants from Assumption~\ref{as:bdy_cond}, emphasizing their dependence on \(\cR\). We set \(\hat{\boldsymbol{b}} := \max_{\cR \in {\cal I}} \hat{b}_{\cR}\). By Theorem~\ref{thm:verification}, \({\cal V}(x;\cR)=v^*(x;\cR) \) for all \(x \in \mathbb{R}_+\), where \(v^*(x;\cR)\) is constructed in \eqref{eq:const_v} (by rewriting $\psi^+(b^*)$ as $\psi^+(b^*_{{\cal R}};{\cal R})$ for each ${\cal R}$ therein; see also Assumption~\ref{as:bdy_cond}\;ii.). 
Throughout, let \((v^*)'(\cdot;\cR)\) denote the derivative of \(v^*(\cdot;\cR)\) w.r.t. \(x\), and without loss of generality, we fix $\cR_1<\cR_2\in{\cal I}$.

For clarity, we organize the proof into the following steps.

\noindent{\it Step 1. Continuity of ${\cal I} \ni \cR \mapsto v^*(x;\cR)$.} 
Let $D_1^*$ and $\theta^{*,2}$ denote the optimal dividend for $\cR_1$ and the worst-case kernel for $\cR_2$ respectively.
From the proof of Theorem~\ref{thm:verification} in Section~\ref{sec:thm:verification}, we~have
\begin{align*}
& v^*(x ; \cR_1) \leq J^{\theta^{*, 2}}(x; D_1^*, v^*(\cdot; \cR_1), \cR_1),\\
&v^*(x,\cR_2) \geqslant J^{\theta^{*, 2}}(x; D_1^*,v^*(\cdot; \cR_2),\cR_2),
\end{align*}
where for any Borel measurable $f:\mathbb{R}_+\to\mathbb{R}$,
\[
J^{\theta^{*, 2}}(x; D, f, \cR):=\mathbb{E}^{\theta^{*, 2}}\bigg[\int_0^{\tau^{x,D}} e^{-\rho s}\Big(d D_s+\frac{({\theta_s^{*, 2}})^2}{2 \cR}f(X_s^{D}) d s\Big)+e^{-\rho\tau^{x,D}} \xi_0\bigg].
\]
Following similar arguments as in the proof of Theorem~\ref{thm:cts_VEZ} (Step 2), we obtain
\begin{align*}
&0\le    v^*(x;\cR_1) -  v^*(x;\cR_2)\\ &\le\frac{(\cR_2-\cR_1)\cR_2}{2\cR_1}v^*(x;\cR_2)^{\cR_2}\mathbb{E}\bigg[\int_0^{\tau^{x,D_1^*}} e^{-\rho s}\tilde v_{{\cal R}_1,{\cal R}_2}(X_s^{D_1^*}) d s\bigg]\le C_{\cal I}(\cR_2-\cR_1),
\end{align*}
with $\tilde v_{{\cal R}_1,{\cal R}_2}(\cdot):=((v^*)'(\cdot;{\cal R}_1)\sigma(\cdot))^2\frac{v^*(\cdot;{\cal R}_1)}{v^*(\cdot;{\cal R}_2)}(v^*(\cdot;{\cal R}_2))^{-(1-{\cal R}_2)}$
and some constant \(C_{\cal I}>0\) (not depending on \(\cR_1\) and \(\cR_2\)). This uniform bound holds since \(X^{D_{\cR}^*}\) is valued in \([0, \hat{\boldsymbol{b}}]\) and \(\xi_0 + x \leq v^*(x; \cR) \leq \xi_0 + x + \overline{\mu}/\rho\) for all \(\cR \in {\cal I}\).
Therefore, \(v^*(x; \cR)\) is continuous in \(\cR\).

\noindent{\it Step 2. Continuity of ${\cal I} \ni \cR \mapsto (v^*)^{\prime}(0;\cR)$.} 
Note that \(v^*(0;\cR) = \xi_0\), and by Theorem~\ref{thm:cts_VEZ}, the mapping \(\cR \mapsto v^*(x;\cR)\) is decreasing. 
Thus, $(v^*)'(0;\cR_1) \geq (v^*)'(0;\cR_2)$.
Suppose, to the contrary, that $\cR \mapsto (v^*)^{\prime}(0;\cR)$ is not continuous. Then there exists $\tilde{\cR} \in \operatorname{int}{\cal I}$ such that
\begin{align*}
    \alpha_1 := \liminf_{\epsilon \rightarrow 0+} (v^*)^{\prime}(0;\tilde{\cR} - \epsilon) > \limsup_{\epsilon \rightarrow 0+} (v^*)^{\prime}(0;\tilde{\cR} + \epsilon) := \alpha_2 \ge 1,
\end{align*}
where the inequality $\alpha_2 \ge 1$ follows from the fact that $(v^*)^{\prime}(0;\cR) \ge 1$, as it satisfies the HJB-VI \eqref{eq:HJB_VI}. For a sufficiently small $\delta_e>0$,
we have
\begin{align*}
    (v^*)^{\prime}(0;\tilde{\cR} - \epsilon) > (v^*)^{\prime}(0;\tilde{\cR} + \epsilon) + \frac{\alpha_1 - \alpha_2}{2},\quad \mbox{for all }\epsilon\in(0,\delta_e).
\end{align*}
Since $v^*(x;\cR)$ is twice continuously differentiable w.r.t. $x$ on $\mathbb{R}_+$, the derivative \((v^*)'(x;\cR)\) is continuous in \(x\). Thus, for a sufficiently small \(\delta_x > 0\), 
\begin{align*}
    (v^*)'(x;\tilde{\cR} - \epsilon) > (v^*)'(x;\tilde{\cR} + \epsilon) + \frac{\alpha_1 - \alpha_2}{2},\quad \mbox{for all }x \in [0, \delta_x], \epsilon \in (0, \delta_e).
\end{align*}
Integrating both sides from $x = 0$ to $x = \delta_x$ yields
\begin{align*}
    v^*(\delta_x;\tilde{\cR} - \epsilon) > v^*(\delta_x;\tilde{\cR} + \epsilon) + \frac{\alpha_1 - \alpha_2}{2} \delta_x, \quad \mbox{for all } \epsilon \in (0, \delta_{e}).
\end{align*}
Taking the limit as $\epsilon \rightarrow 0^+$, we obtain $v^*(\delta_x;\tilde{\cR}^{-}) > v^*(\delta_x;\tilde{\cR}^{+})$, which contradicts to the continuity of $\cR \mapsto v^*(x;\cR)$ in Corollary~\ref{coro:sec5}. We conclude that ${\cal I} \ni \cR \mapsto (v^*)^{\prime}(0;\cR)$ must be continuous.

\noindent{\it Step 3. Continuity of ${\cal I}\ni \cR \mapsto g_{b^*_{\cR}}(x)$.}
Recall $g_{b^*_{\cR}}(x) \equiv g_{b^*_{\cR},0}(x)$ satisfying
\begin{align*}
\left\{
\begin{aligned}
    &\frac{\sigma^2(x)}{2} (g_{b^*_{\cR}})^{\prime}(x) + \mu(x) g_{b^*_{\cR}}(x) + \frac{(1-\cR)}{2} \sigma^2(x) (g_{b^*_{\cR}})^2(x) = \rho, \\
    &g_{b^*_{\cR}}(b^*_{\cR}) = 1 / \psi^+(b^*_{\cR}),
\end{aligned}
\right.
\end{align*}
as in \eqref{eq:ODE_g}.
For notational simplicity,
let $g_i$ denote $g_{b^*_{\cR_i}}$ for $i=1,2$. Then, 
\begin{align*}
    g_i(x) &= g_i(0) - \int_0^{x} g'_i(y) \, dy = \frac{(v^*)^{\prime}(0;\cR_i)}{v^*(0;\cR_i)} - \int_0^{x} g'_i(y) \, dy \\
    &= \frac{(v^*)^{\prime}(0;\cR_i)}{\xi_0} + \int_0^{x} \frac{2\mu(y)}{\sigma^2(y)} g_i(y) + (1-\cR_1) g_i^2(y) - \rho dy.
\end{align*}
Therefore, for $x \in [0, \hat{\boldsymbol{b}}]$,
\begin{align*}
    |g_1(x) - g_2(x)| \le & \frac{1}{\xi_0} \left| (v^*)^{\prime}(0;\cR_1) - (v^*)^{\prime}(0;\cR_2) \right| + C_{\hat{\boldsymbol{b}}} \int_{0}^{x} |g_1(y) - g_2(y)| dy \\
    &+ \int_{0}^{x} (\cR_2 - \cR_1) g_2^2(y) dy,
\end{align*}
where $C_{\hat{\boldsymbol{b}}}:= \sup_{0 \le y \le \hat{\boldsymbol{b}}} \left\{ \frac{2\mu(y)}{\sigma^2(y)} + (1-\cR_1) |g_1(y) + g_2(y)| \right\}$, for which we recall that $\hat{\boldsymbol{b}}$ is a uniform upper bound for $b^*_{\cR}$ by Proposition~\ref{pro:shooting}.
By Gr\"onwall's inequality, there is some constant $c>0$ such that
\begin{align*}
    \sup_{0 \le x \le \hat{\boldsymbol{b}}} |g_1(x) - g_2(x)| \le c e^{C_{\hat{\boldsymbol{b}}}} \big( |(v^*)^{\prime}(0;\cR_1) - (v^*)^{\prime}(0;\cR_2)| + |\cR_2 - \cR_1| \big).
\end{align*}
Since ${\cal I}\ni\cR \mapsto (v^*)^{\prime}(0;\cR)$ is continuous (by {Step 1}), we have
\begin{align}\label{eq:g_cts}
   \lim_{\substack{|\cR_2 - \cR_1| \to 0^+ \\ \cR_1,\,\cR_2 \in {\cal I}}} \sup_{0 \le x \le \hat{\boldsymbol{b}}} |g_1(x) - g_2(x)| = 0.
\end{align}

\noindent{\it Step 4. Continuity of ${\cal I}\ni \cR \mapsto b^*_{\cR}$.}
By definition of $b^*_{\cR}$ in Proposition~\ref{pro:shooting_sub}, 
\eqref{eq:equiv_deriv_2} and \eqref{eq:const_v}, we have
\begin{align}\label{eq:gR_relation}
   \frac{ \psi^+\big(b^*_{\cR_1};\cR_1\big) e^{-\int_0^{b^*_{\cR_1}} g_1(y)dy} }{\psi^+\big(b^*_{\cR_2};\cR_2\big) e^{-\int_0^{b^*_{\cR_2}} g_2(y)dy}} = \frac{v^*(0;\cR_1)}{v^*(0;\cR_2)} = 1.
\end{align}
To proceed, we recall several properties established: 
the mapping \(\cR \mapsto \psi^+(x;\cR)\) is continuous and uniformly decreasing for \(x \geq 0\); since \(\psi^+(x;\cR) - x\) is decreasing for \(x > \underline{b}\), it follows that \((\psi^+)^{\prime}(x;\cR) \leq 1\) for \(x > \underline{b}\); Lemma~\ref{lem:lowerbound_g}~i. asserts that \(g_i(x) \geq 1 / \psi^+(x;\cR_i)\) for \(x > \underline{b}_{\cR_i}\).

Taking logarithms on both sides of \eqref{eq:gR_relation}, we consider the cases separately.

\textit{Case 1:} \(b^*_{\cR_1} < b^*_{\cR_2}\).  
We have
\begin{align*}
   \ln \psi^+\big(b^*_{\cR_1};\cR_1\big) - \ln \psi^+\big(b^*_{\cR_2};\cR_1\big) 
     + \int_{b^*_{\cR_1}}^{b^*_{\cR_2}} g_1 (y) \, dy &\le \int_0^{b^*_{\cR_2}} \big(g_1(y)-g_2(y)\big) \, dy, \\
   \int_{b^*_{\cR_1}}^{b^*_{\cR_2}} \left( g_1 (y) -  \frac{(\psi^+)^{\prime}(y;\cR_1)}{\psi^+(y;\cR_1)} \right) \, dy &\le \int_0^{b^*_{\cR_2}} |g_1(y)-g_2(y)| \, dy, \\
   0 \le \int_{b^*_{\cR_1}}^{b^*_{\cR_2}}\left( g_1 (y) -\frac{1}{\psi^+(y;\cR_1)} \right) \, dy &\le \int_0^{b^*_{\cR_2}} |g_1(y)-g_2(y)| \, dy,
\end{align*}
where the first inequality follows from the monotonicity of \(\cR \mapsto \psi^+(x;\cR)\), and in the last line, we use \((\psi^+)^{\prime}(x;\cR_1) \le 1\) and \(g_1(x) \ge 1/ \psi^+(x;\cR_1)\) for \( x > b^*_{\cR_1}>\underline{b}_{\cR_1}\).

\textit{Case 2:} \(b^*_{\cR_1} > b^*_{\cR_2}\).  
By analogous reasoning,
\begin{align*}
    0 &< \int_{b^*_{\cR_2}}^{b^*_{\cR_1}}  g_2 (y) -  \frac{1}{\psi^+(y;\cR_1)} \, dy \\
    &\le \left| \ln \psi^+\big(b^*_{\cR_2};\cR_1\big) - \ln \psi^+\big(b^*_{\cR_2};\cR_2\big) \right|  + \int_0^{b^*_{\cR_1}} |g_2(y)-g_1(y)| \, dy,
\end{align*}
where the strict inequality holds because \(\psi^+(x;\cR_2) < \psi^+(x;\cR_1)\), which implies \(g_2(x) \ge 1/\psi^+(x;\cR_2) > 1/\psi^+(x;\cR_1)\) for \(\underline{b}_{\cR_2} < b^*_{\cR_2} < x\).

For both cases, by \eqref{eq:g_cts} and the continuity of $\cR \mapsto \psi^+(x;\cR)$, we can have the desired continuity of $b^*_{\cal R}$.
 
\end{proof}

\bibliographystyle{abbrv}
\bibliography{library}

\end{document}